\newcommand{\widgraph}[2]{\includegraphics[keepaspectratio,width=#1]{#2}}
\newcommand{\eps}{\epsilon}
\newcommand{\PE}{\text{PE}}
\newcommand{\RPE}{\text{RPE}}
\newcommand{\pPE}{\overline{\PE}}
\newcommand{\pRPE}{\overline{\RPE}}
\newcommand{\pFDR}{\overline{\text{FDR}}}
\newcommand{\pFNR}{\overline{\text{FNR}}}
\newcommand{\xbar}{\bar{x}}
\newcommand{\Xbar}{\bar{X}}
\newcommand{\ybar}{\bar{y}}
\newcommand{\Ybar}{\bar{Y}}
\newcommand{\Gh}{{G}}  
\newcommand{\Ghat}{\widehat{G}}
\newcommand{\Gbar}{\bar{G}}
\newcommand{\Fh}{{F}} 
\newcommand{\Fhat}{\widehat{F}}
\newcommand{\NN}{\mathcal{N}}
\newcommand{\GG}{\mathcal{G}}
\newcommand{\VV}{\mathcal{V}}
\newcommand{\EE}{\mathcal{E}}
\newcommand{\SSS}{\mathcal{S}}
\newcommand{\SSb}{\bar{\SSS}}
\renewcommand{\Re}{\mathbb{R}}
\newcommand{\PP}{\mathcal{P}}
\newcommand{\eIP}{p_e} 
\newcommand{\dt}{\widetilde{d}}
\newcommand{\blind}{1}
\newtheorem{theorem}{Theorem}
\newtheorem{lemma}{Lemma}
\newtheorem{proposition}{Proposition}
\newtheorem{assumption}{Assumption}
\begin{document}

\def\spacingset#1{\renewcommand{\baselinestretch}%
{#1}\small\normalsize} \spacingset{1}


\if1\blind
{
  \title{\bf Bayesian Multivariate Density-Density Regression}
  \author{Khai Nguyen$^1$, Yang Ni$^1$, and Peter Mueller$^{1,2}$ \\
    $^1$Department of Statistics and Data Sciences, University of Texas at Austin \\
    $^2$Department of Mathematics, University of Texas at Austin
}
  \maketitle
  \thispagestyle{empty}
} \fi

\if0\blind
{
  \thispagestyle{empty}
  \bigskip
  \bigskip
  \bigskip
  \begin{center}
    {\LARGE\bf Bayesian Density-Density Regression with Application to Cell-Cell Communications}
\end{center}
  \medskip
} \fi

\bigskip
\begin{abstract}
We introduce a novel and scalable Bayesian framework for multivariate-density-density regression (DDR), designed to model relationships between multivariate distributions. Our approach addresses the critical issue of distributions residing in spaces of differing dimensions. We utilize a generalized Bayes framework, circumventing the need for a fully specified likelihood by employing the sliced Wasserstein distance to measure the discrepancy between fitted and observed distributions. This choice not only handles high-dimensional data and varying sample sizes efficiently but also facilitates a Metropolis-adjusted Langevin algorithm (MALA) for posterior inference. Furthermore, we establish the posterior consistency of our generalized Bayesian approach, ensuring that the posterior distribution concentrates around the true parameters as the sample size increases. Through simulations and application to a population-scale single-cell dataset, we show that Bayesian DDR provides robust fits, superior predictive performance compared to traditional methods, and valuable insights into complex biological interactions.
\end{abstract}

\noindent%
{\it Keywords:} Multivariate Density-Density Regression, Generalized Bayes, Optimal Transport, Sliced Wasserstein.
\vfill

\newpage

\spacingset{1.9} 

\section{Introduction}
\label{sec:introduction}

We propose the first approach for regressing random distributions on random distributions in \textit{multivariate settings}. 
A key feature of our approach, which is Bayesian, is its ability to
provide a full probabilistic characterization of all relevant
unknowns, including data and model parameters. However, this feature
can also become a limitation, particularly when dealing with complex
relationships in which specifying a complete probabilistic model is
challenging. To address this, we adopt a generalized Bayes framework
that circumvents the need to
 start with a 
fully specify likelihood. Our approach
leverages the sliced Wasserstein distance to measure the discrepancy
between the observed response distribution and a predicted
distribution, allowing for a principled update of beliefs while
maintaining flexibility.

In the usual regression setup, both the predictors and responses are typically scalars or
vectors.
When distributional variables serve as the predictors and real-valued variables as the response, the density
regression problem 
arises~\citep{poczos2013distribution}, e.g., predicting health
indicators from a distribution of clinical
outcomes~\citep{szabo2016learning} or estimating the effects
of physical activity on
biomarkers~\citep{matabuena2023distributional}. 
Another type of density regression arises with real-valued
predictors and density-valued responses. This is studied, for example, 
in \cite{tokdar2004bayesian,dunson2007bayesian} or
\cite{shen2016adaptive}. Recently, the density-density regression
(DDR) problem, which deals with distributional predictor and
distributional response 
variables, has been studied. For example, regression with
one-dimensional predictor and one-dimensional response is considered
in~\cite{zhao2023density} using a Riemannian representation of density
functions. The method of~\cite{zhao2023density} is hard to extend to
\textit{multivariate} density functions due to the usage of a wrapping
function, which is only well-defined in one dimension. 
To the best of our knowledge, there are currently no methods for DDR with \textit{multivariate} distributions. Extending existing univariate methods to this multivariate setting is non-trivial, as most of them depend on the notion of the quantile function, which lacks a universally accepted definition for multivariate distributions.

We propose a DDR model, where both
predictors and responses are random 
distributions of arbitrary dimensions. The motivation comes from the
challenge of modeling molecular dependence relationships among
different cell types in population-scale single-cell data. In this
type of data, gene expressions are measured for a large number of
cells donated by a large number of subjects. 
Exploiting that cell types can be
identified based on known cellular markers, this provides an
opportunity to decipher the communication patterns between different
cell types by way of expressions of ligand-receptor pairs. Because of 
the availability of a large number of cells from each donor, we, in
essence, observe for each subject the empirical
distributions of the gene expressions for each cell type. Therefore,
we can cast the inference problem of cell-cell communications as a DDR
problem,
where the predictor for each subject
is the distribution of gene expression for the
ligands from one cell type (sender cells produce ligands as signaling molecules), and the response is the distribution of
receptors from another cell type (receiver cells express receptors and respond to the signals).
Conventional bulk gene expression methods collapse the
distributional representation by taking the mean as the representative
statistic for downstream regression. In contrast, the proposed
Bayesian DDR approach allows us to model complex regression dependencies
between distributions while preserving the full variability and
uncertainty of the data.

Our inference procedure is based on
the posterior distribution for a 
regression function that maps from the space of 
the predictor distribution to the space of 
the response distribution. Avoiding  to state  an explicit sampling model, 
we rely instead on a \textit{generalized
  likelihood}, which is constructed based on a discrepancy between the
pushforward (predicted) response distribution and the observed
response distribution. Hence, the proposed Bayesian DDR 
approach falls within the generalized Bayesian inference
framework~\citep{bissiri2016general}.

One challenge in implementing DDR is that we do not directly observe
distributions through their density functions. Instead, we
observe \textit{finite} samples from distributions. Moreover, the
number of observed samples can differ between distributions, and
distributions may lie in high-dimensional spaces. Therefore,
 we base  the construction of the generalized likelihood 
 on a 
discrepancy between distributions that can be reliably
estimated with \textit{finite} samples in high dimensions.
We choose the sliced Wasserstein
distance~\citep{rabin2012wasserstein,bonneel2015sliced,peyre2020computational},
which is a sliced variant of the
Wasserstein distance through random projection ~\citep{villani2009optimal}. Sliced Wasserstein
does not suffer from the curse of dimensionality and is
computationally scalable with respect to the number of support points
in the distributions. Moreover, the use of sliced Wasserstein allows
us to derive a Metropolis-adjusted Langevin algorithm
(MALA,~\citealt{girolami2011riemann}) for efficient posterior inference and to guarantee the consistency of the posterior.

Using the proposed Bayesian DDR, we
construct a directed cell-cell communication network of different cell types
based on ligand–receptor interactions for each ordered pair of cell types. 
We  carry out 
Bayesian DDR for each pair of cell types, i.e., regressing
distributions over receptors onto distributions over ligands,
 implemented as a separate instance of DDR for each pair of cell
types. 
After fitting the Bayesian DDR models, we propose
two ways to infer the cell-cell communication network. The first option is
a weighted 
fully-connected graph with edge weights derived from the expected
relative errors of the regressions. The second approach
casts the graph selection as a decision problem, which we
implement by way of a simple decision boundary. 
For simplicity, we parameterize the regression as a
linear function and put a shrinkage prior on the linear 
coefficient, which can be easily extended to nonlinear functions.
We report edges by thresholding the posterior probabilities in
these linear functions, with the threshold determined by minimizing
posterior expected false negative rate (FNR) under a bound on
posterior expected false discovery rate (FDR). 

The remainder of the article is organized as follows. In
Section~\ref{sec:preliminaries}, we review the definitions of
Wasserstein and sliced Wasserstein distance, and their
statistical and computational properties. In
Section~\ref{sec:Bayesian_DDR}, we propose a generalized Bayesian DDR
framework using sliced Wasserstein distance to define a 
generalized likelihood, and we derive posterior simulation using
a MALA sampler.
In Section~\ref{sec:DDR_B_to_T}, we apply the proposed
Bayesian DDR approach to infer cell-cell communication between
two cell types. 
In Section~\ref{sec:graph_discovery},
we build on the regression setup and propose discovery of a 
directed graph by way of forming Bayesian DDRs for ordered pairs of cell
types.
In Section~\ref{sec:cell_graph_exp}, we
apply the proposed graph discovery for 
graphs among multiple cell types in
a population-scale single-cell dataset, 
and we validate inference in a semi-simulation 
with real predictors and a simulated true regression
function.

For notations, we use $\delta_x$ to denote the Dirac delta function, which takes the value 1 at $x$ and 0 otherwise. For any $d \geq 2$, we denote $\mathbb{S}^{d-1} = \{\theta \in \mathbb{R}^{d} \mid ||\theta||_2^2 = 1\}$ as the unit hypersphere. We use $I$ for the identity matrix and $\mathbf{1}$ for the vector of all 1's, with the size depending on the context. For any two sequences $a_{n}$ and $b_{n}$, the notation $a_{n} = \mathcal{O}(b_{n})$ means that $a_{n} \leq C b_{n}$ for all $n \geq 1$, where $C$ is some universal constant. Other notations will be introduced later when they are used.

\section{Background on Sliced Wasserstein Distance}
\label{sec:preliminaries}
By way of a brief review of sliced Wasserstein distance, we introduce
some notation and definitions. 
Given $p\geq 1$, let $G_1, G_2 \in \PP_p(\Re^d)$ where
$\PP_p(\Re^d)$ be the set of all distributions supported on
$\Re^d$ 
with finite $p$-th moment.
Wasserstein-$p$ ($p \geq 1$)
distance~\citep{villani2009optimal} between  $G_1$ and $G_2$ is
defined as:
\begin{align}
\label{eq:Wasserstein}
    W_p^p(G_1, G_2) = \inf_{\pi \in \Pi(G_1, G_2)} \int_{\Re^d \times \Re^d} \|x- y\|_p^p \, \mathrm{d}\pi(x, y),
\end{align}
where $\Pi(G_1, G_2) = \left\{\pi \in \PP(\Re^d \times \Re^d) \mid \pi(A, \Re^d) = G_1(A), \ \pi(\Re^d, B) = G_2(B) \ \forall A, B \subset \Re^d \right\}$ is the set of all transportation plans/couplings. When observing $x_1,\ldots,x_{m_1} \overset{i.i.d.}{\sim} G_1$ and $y_1,\ldots,y_{m_2} \overset{i.i.d.}{\sim} G_2$, a plug-in estimation of the Wasserstein-$p$ distance is given by:
\begin{align}
\label{eq:empirical_Wasserstein}
    W_p^p(\Ghat_{1}, \Ghat_{2}) = \min_{\gamma \in \Gamma(\mathbf{1}/m_1, \mathbf{1}/m_2)} \sum_{i=1}^{m_1} \sum_{j=1}^{m_2} \|x_i- y_j\|_p^p \gamma_{ij},
\end{align}
where  $\Ghat_{1}=\frac{1}{m_1}\sum_{i=1}^{m_1} \delta_{x_i}$ and
$\Ghat_{2}=\frac{1}{m_2}\sum_{j=1}^{m_2} \delta_{y_j}$ are
corresponding empirical distributions over i.i.d. samples from $G_1$
and $G_2$, and the set of discrete transportation plans is
$\Gamma(\mathbf{1}/m_1, \mathbf{1}/m_2) = 
\left\{\gamma \in \Re_+^{m_1 \times m_2} \mid \gamma \mathbf{1}
=\mathbf{1}/m_1 \text{ and }
  \gamma^\top \mathbf{1} =  \mathbf{1}/m_2\right\}$. Computing $ W_p^p(\Ghat_{1},\Ghat_{2})$ is a linear programming problem, which costs $\mathcal{O}((m_1+m_2)^3\log(m_1+m_2))$ in time complexity~\citep{peyre2020computational}. From~\cite{fournier2015rate}, we have the following sample complexity: $\mathbb{E}\left[\left|W_p(\Ghat_{1}, \Ghat_{2})-W_p(G_1, G_2)\right|\right] = \mathcal{O}(m_1^{-1/d}+m_2^{-1/d})$, which implies that we need $m_1$ and $m_2$ to be large to achieve small approximation errors. However, the time complexity to compute the Wasserstein distance scales poorly with $m_1$ and $m_2$. 

 This conundrum is addressed by introducing 
sliced Wasserstein (SW)
distance~\citep{rabin2014adaptive,bonneel2015sliced}. SW relies on the
fact that one-dimensional Wasserstein distance can be computed
efficiently. In particular, let $G_1,G_2 \in
\mathcal{P}_p(\mathbb{R})$, computing $W_p^p(\Ghat_{1},
\Ghat_{2})$ in Equation~\eqref{eq:empirical_Wasserstein} costs only
$\mathcal{O}((m_1+m_2)\log (m_1+m_2))$ in time complexity  by using
the north-west corner solution~\citep{peyre2020computational}.

SW distance between two distributions $G_1,G_2 \in
\mathcal{P}_p(\mathbb{R}^d)$ is defined as:
\begin{align}
        \label{eq:SW}
SW_p^p(G_1, G_2) = \mathbb{E}_{\theta \sim
  \mathcal{U}(\mathbb{S}^{d-1})}[W_p^p(\theta \sharp G_1, \theta
\sharp G_2)],
\end{align}
where  $\mathcal{U}(\mathbb{S}^{d-1})$ is the uniform distribution
over the unit hypersphere in $d$ dimension, and $\theta \sharp G_1$
and $\theta \sharp G_2$ denote the pushforward distribution of $G_1$
and $G_2$ through the function $f_\theta(x) =  \theta^\top x$. In
particular, given two measurable spaces $(\mathcal{X}_1,\Sigma_1)$ and
$(\mathcal{X}_2,\Sigma_2)$, a measurable function $f:\mathcal{X}_1 \to
\mathcal{X}_2$, and a measure $\mu:\Sigma_1 \to [0,\infty)$, the
push-forward of $\mu$ through $f$ is $f\sharp \mu(B) = \mu(f^{-1}(B))$
for any $B\in \Sigma_2$  -- we will use the definition with
probability measures only. The push-forward measure is a general measure-theoretic concept. When the mapping is bijective, typically between spaces of the same dimension, the change of variables formula can be used to explicitly compute the density of the push-forward measure, provided the measures involved are absolutely continuous.
Similar to  Wasserstein, we have the following plug-in estimation of SW: 
\begin{align}
    \label{eq:empirical_SW}
    SW_p^p(\Ghat_{1}, \Ghat_{2}) = \mathbb{E}_{\theta \sim \mathcal{U}(\mathbb{S}^{d-1})}[W_p^p(\theta \sharp \Ghat_{1}, \theta \sharp \Ghat_{2})].
\end{align}
From~\cite{nadjahi2020statistical,nguyen2021distributional,nietert2022statistical,boedihardjo2025sharp},
we have that $\mathbb{E}\left[\left|SW_p(\Ghat_{1},
    \Ghat_{2})-SW_p(G_1, G_2)\right|\right] =
\mathcal{O}(m_1^{-1/2}+m_2^{-1/2})$.
 In particular, there is no dimension dependence. 
Combining with the linear-time complexity of empirical SW, we have
a both, computationally and statistically, scalable solution for
comparing distributions. The last step to compute SW is approximating
the expectation with respect to
$\mathcal{U}(\mathbb{S}^{d-1})$~\citep{nguyen2024quasimonte,leluc2024slicedwasserstein,nguyen2024control,sisouk2025user}. For
example, we can use simple Monte Carlo estimation: 
\begin{align}
    \label{eq:MC_empirical_SW}
    \widehat{SW}_p^p(\Ghat_{1}, \Ghat_{2};L) = \frac{1}{L}\sum_{l=1}^LW_p^p(\theta_l \sharp \Ghat_{1}, \theta_l \sharp \Ghat_{2}),
\end{align}
where $\theta_1,\ldots,\theta_L \overset{i.i.d.}{\sim} \mathcal{U}(\mathbb{S}^{d-1})$ with $L$ being the number of Monte Carlo samples or the number of projections. The overall time complexity for computing this approximation is $\mathcal{O}(L (m_1+m_2)\log(m_1+m_2)+Ld(m_1+m_2))$ where $\mathcal{O}(Ld(m_1+m_2))$ is for projecting supports of $\Ghat_{1}$ and $\Ghat_{2}$, and $\mathcal{O}(L (m_1+m_2)\log(m_1+m_2))$ is for computing  $W_p^p(\theta_l \sharp \Ghat_{1}, \theta_l \sharp \Ghat_{2})$ with $\theta_l \sharp \Ghat_{1}=\frac{1}{m_1}\sum_{i=1}^{m_1} \delta_{\theta_l^\top x_i}$ and $\theta_l \sharp \Ghat_{2}=\frac{1}{m_2}\sum_{j=1}^{m_2} \delta_{\theta_l^\top y_j}$ for $l=1,\ldots,L$ .

\section{Generalized Multivariate Bayesian Density-Density Regression}
\label{sec:Bayesian_DDR}

\subsection{Generalized Multivariate Bayesian Density-Density Regression}
\label{subsec:Bayesian_DDR}


We consider the inference problem of relating a distribution-valued
 response $G_i \in \PP_2(\Re^{d_2})$ ($d_2\geq 2$) to a
 distribution-valued predictor $F_i \in \PP_2(\Re^{d_1})$ ($d_1\geq
 2$) for $i=1,\ldots,N$ ($N>0$). We represent the data as
 $\SSS=\{(F_{i},G_{i})\}_{i=1}^N$ and define the following DDR model.
The model is specified as a {\em generalized likelihood}
\citep{bissiri2016general}   $\ell(f;\Gh_i,\Fh_i)$ based on a loss function for a fitted
approximation of $\Gh_i$: 
\begin{align}
    \label{eq:DDRmodel}
    \ell(f;\Gh_i,\Fh_i) = \exp\left(-w SW_2^2(f \sharp \Fh_i, \Gh_i)\right),
\end{align}
where $f$ is a measurable function that maps from $\mathbb{R}^{d_1}$ to
$\mathbb{R}^{d_2}$, $w > 0$, and $f\sharp
  F_i$ denotes the push-forward measure of $F_i$ through $f$.

 In \eqref{eq:DDRmodel} 
we are using SW distance to define a loss function that serves as
generalized (negative log) likelihood.
SW is an attractive choice for the loss function as it naturally
defines a distance between random distributions $f \sharp
\Fh_i$ and $\Gh_i$. 
One of the
advantages of using the generalized likelihood is that it
 sidesteps the need  to start with 
a full probabilistic description of all relevant
unknowns.
 Let $Y_i = (y_{i1},\ldots,y_{im})$,
when $G_i = \frac{1}{m}\sum_{j=1}^m \delta_{y_{ij}}$ for
$i=1,\ldots,N$ is an empirical distribution with $m$ atoms,
\eqref{eq:DDRmodel} implies 
a hypothetical sampling model 
\begin{align}
  Y_i \mid \Fh_i,f \sim p(Y_i \mid \Fh_i, f)  \propto
  \exp\left(-w SW_2^2 \left(f \sharp \Fh_i, \frac{1}{m}\sum_{j=1}^m
  \delta_{y_{ij}}\right)\right).
  \label{eq:splg}
\end{align}
Since $SW_2^2\left(f_\sharp F_i,\frac{1}{m}\sum_{j=1}^m
\delta_{y_{ij}}\right)$ is bounded due to the assumption of finite
second moments of distributions, the normalizing constant in  \eqref{eq:splg}
is bounded. 
 However, as a likelihood function of $f$ such normalization constant cannot be ignored, making \eqref{eq:splg} distinct from the generalized likelihood $\ell(f;G_i,F_i)$ in \eqref{eq:DDRmodel}.



To assess goodness of fit, we use the following  residual  error
between the  fitted distribution and the recorded  response
distributions: 
$PE(\SSS)=  \frac{1}{N} \sum_{i=1}^{N}SW_2^2(f \sharp \Fh_{i}, \Gh_{i})$.
 Averaging with respect to  a 
posterior sample $f_{1},\ldots,f_{T} \sim p(f \mid \mathcal{S})$, we
have the following approximated mean predictive error: 
$ 
    \pPE(\mathcal{S}) = \frac{1}{T}\frac{1}{N} \sum_{t=1}^T  \sum_{i=1}^{N}SW_2^2(f_{t} \sharp \Fh_{i}, \Gh_{i}).
    $ 
 To calibrate $\pPE$ we introduce      
a reference model $f_{0}$ that serves as a worst-case reference (see
later for examples),
 in a similar spirit as using an  intercept-only model in a
linear regression.
We normalize the relative error to obtain the following
relative  residual  error (RPE):
\begin{align}
\label{eq:RPE2}
    \RPE(\mathcal{S}) = \frac{1}{N} \sum_{i=1}^{N}\frac{SW_2^2(f \sharp \Fh_{i}, \Gh_{i})}{SW_2^2(f_{0} \sharp \Fh_{i}, \Gh_{i})}.
\end{align}
When $\RPE(\mathcal{S})=0$, it means a perfect fit as SW is a distance between distributions. When $\RPE(\mathcal{S})=1$, it means a bad fit. Similarly, the mean RPE can be estimated using posterior samples, i.e., 
\begin{align}
\label{eq:RPE}
    \pRPE(\mathcal{S}) = \frac{1}{T} \frac{1}{N}\sum_{t=1}^T  \sum_{i=1}^{N}\frac{SW_2^2(f_{t} \sharp \Fh_{i}, \Gh_{i})}{SW_2^2(f_{0t} \sharp \Fh_{i}, \Gh_{i})},
\end{align}
where posterior samples $f_{1}, \ldots, f_T \sim p(f \mid \mathcal{S})$ and posterior samples $f_{01}, \ldots, f_{0T} \sim p(f_{0} \mid \mathcal{S})$. Later, in the case where $f$ takes a linear form, we will define $f_{0}$ as an intercept model. 

\subsection{Posterior Consistency}
\label{subsec:posterior_inference}
 The generalized likelihood \eqref{eq:DDRmodel} implies a
generalized posterior 
\begin{align}
  \label{eq:pseudo_posterior} p(f\mid \mathcal{S}) \propto p(f)
  \exp\left(-w \sum_{i=1}^N SW_2^2(f \sharp \Fh_i, \Gh_i)\right),
\end{align}
where $p(f)$ is a prior distribution of $f$. 
We now discuss the posterior consistency when $f$ takes a parametric
form, i.e., $f_\phi$ with parameters $\phi \in \Phi$ and
$(F_1,G_1),\ldots,(F_N,G_N) \overset{i.i.d}{\sim} P$ for an unknown
 truth  $P$.  Under  a prior $p(\phi)$, we have the
generalized posterior:
\begin{equation}
  \label{eq:posterior_parameter}
  p(\phi \mid \mathcal{S}) \propto
  p(\phi) \exp\left(-w \sum_{i=1}^N SW_2^2(f_\phi \sharp
  \Fh_i, \Gh_i)\right). 
\end{equation}
 Defining empirical risk and population risk 
\begin{align}
    &R_N (\phi) = \frac{1}{N}\sum_{i=1}^n SW_2^2(f_\phi \sharp F_i,G_i), \quad 
    R(\phi) = \mathbb{E}_{(F,G)\sim P}[SW_2^2(f_\phi \sharp F,G)],
\end{align}
we make the following assumptions for posterior consistency.

\begin{assumption}[Identifiability]\label{assumption:identifiability}
There exists $\phi_0 \in \Phi$ such that $R(\phi)$ attains its unique minimum at $\phi_0$. Moreover, for every $\epsilon > 0$,  
$
    \Delta(\epsilon) \;=\; \inf_{\{\phi \in \Phi : \|\phi - \phi_0\|_2 \geq \epsilon\}} \big( R(\phi) - R(\phi_0) \big) \;>\; 0.
$
\end{assumption}

\begin{assumption}[Compactness]\label{assumption:compact}
The parameter space $\Phi$ is compact.
\end{assumption}

\begin{assumption}[Bounded second moments]\label{assumption:secondmoment}
For all $(F,G) \in \mathrm{supp}(P)$,
$
    \mathbb{E}_{X\sim F}[\|X\|_2^2] \leq M_1, \quad  \mathbb{E}_{Y\sim G}[ \|Y\|_2^2]\leq M_2,
$
for some constants $M_1, M_2 < \infty$.
\end{assumption}

\begin{assumption}[Prior positivity]\label{assumption:prior}
The prior $\pi$ assigns positive mass to every neighborhood of
$\phi_0$; that is, for every $\epsilon > 0$,   
$
    \pi\big\{B_\epsilon(\phi_0)\big\} > 0,  
    \quad \text{where} \quad  
    B_\epsilon(\phi_0) = \{\phi \in \Phi : \|\phi - \phi_0\|_2 < \epsilon\}.
$
\end{assumption}

\begin{assumption}[Regularity of Regression Function]\label{assumption:continuity}
The regression function $f_\phi$ admits $\omega: \Re_+\to\Re_+$ ($\lim_{t\to 0} \omega(t)=0$) as a modulus of continuity: $\mathbb{E}_{X\sim F}\|f_\phi(X) - f_{\phi'}(X)\|_2^2 \leq \omega(\|\phi -\phi'\|_2^2)$ for all $(F,G) \in \mathrm{supp}(P)$ and $\phi,\phi' \in \Phi$. For all $(F,G) \in \mathrm{supp}(P)$ and $\phi \in \Phi$,
$
    \mathbb{E}_{X\sim F} [\|f_\phi(X)\|_2^2] \leq C,
$
for a constant $ C < \infty$.
\end{assumption}

\begin{theorem}\label{theoremm:posterior_consistency}
 Under 
Assumptions~\ref{assumption:identifiability}–\ref{assumption:continuity},
for every $\epsilon>0$, the posterior measure $\pi_N$
of~\eqref{eq:posterior_parameter} satisfies 
\begin{align}
    \pi_N\big(\{ \phi \in \Phi : \|\phi - \phi_0\|_2 \geq \epsilon \}\big) \xrightarrow{a.s} 0 ,
\end{align}
as $N \to \infty$ under i.i.d sampling.
\end{theorem}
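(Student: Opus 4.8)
The plan is to rewrite the generalized posterior in~\eqref{eq:posterior_parameter} in terms of the empirical risk as $p(\phi\mid\mathcal{S}) \propto p(\phi)\exp(-wN R_N(\phi))$, and then to control the posterior mass of the complement $A_\epsilon = \{\phi\in\Phi: \|\phi-\phi_0\|_2\geq\epsilon\}$ by sandwiching it between an upper bound on its unnormalized mass and a lower bound on the normalizing constant. The crux of the argument is a uniform law of large numbers (ULLN): I would first show that $\sup_{\phi\in\Phi}|R_N(\phi)-R(\phi)|\to 0$ almost surely, after which the concentration bound follows from a standard risk-separation argument.

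To establish the ULLN, I would verify that the integrand $\phi\mapsto SW_2^2(f_\phi\sharp F,G)$ is bounded and equicontinuous in $\phi$ uniformly over $(F,G)\in\mathrm{supp}(P)$. Boundedness follows from Assumptions~\ref{assumption:secondmoment} and~\ref{assumption:continuity}, since finite second moments force $SW_2(f_\phi\sharp F, G)$ below a universal constant. For equicontinuity, I would combine the triangle inequality for $SW_2$ with the bound $SW_2 \leq W_2$ and the elementary coupling estimate $W_2^2(f_\phi\sharp F, f_{\phi'}\sharp F) \leq \mathbb{E}_{X\sim F}\|f_\phi(X)-f_{\phi'}(X)\|_2^2 \leq \omega(\|\phi-\phi'\|_2^2)$, where the final inequality is Assumption~\ref{assumption:continuity}. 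Together these give $|SW_2^2(f_\phi\sharp F,G)-SW_2^2(f_{\phi'}\sharp F,G)|\leq C'\sqrt{\omega(\|\phi-\phi'\|_2^2)}$ uniformly in $(F,G)$. A finite-cover argument then closes the ULLN: cover the compact $\Phi$ (Assumption~\ref{assumption:compact}) by finitely many small balls, apply the pointwise strong law of large numbers at the ball centers, and use the uniform modulus of continuity to transfer the estimate to all of $\Phi$.

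With the ULLN in hand, I would bound $\pi_N(A_\epsilon)$ directly. For the normalizing constant, prior positivity (Assumption~\ref{assumption:prior}) together with continuity of $R$ yields a neighborhood $B_\rho(\phi_0)$ of positive prior mass on which $R(\phi)<R(\phi_0)+\eta$; the ULLN then forces $R_N(\phi) < R(\phi_0)+2\eta$ there for large $N$, so the normalizing constant is at least $\pi(B_\rho(\phi_0))\exp(-wN(R(\phi_0)+2\eta))$. For the numerator, the separation condition of Assumption~\ref{assumption:identifiability} gives $R(\phi)\geq R(\phi_0)+\Delta(\epsilon)$ on $A_\epsilon$, whence $R_N(\phi)\geq R(\phi_0)+\Delta(\epsilon)-\eta$ by the ULLN, so the unnormalized mass on $A_\epsilon$ is at most $\exp(-wN(R(\phi_0)+\Delta(\epsilon)-\eta))$. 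Taking the ratio gives $\pi_N(A_\epsilon) \leq \pi(B_\rho(\phi_0))^{-1}\exp(-wN(\Delta(\epsilon)-3\eta))$, which decays to zero exponentially once $\eta$ is chosen smaller than $\Delta(\epsilon)/3$.

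I expect the main obstacle to be the ULLN, and specifically the step of upgrading the per-sample continuity estimate into a modulus of continuity for $R_N$ that is genuinely uniform over the support of $P$; this uniformity is exactly what legitimizes the finite-cover approximation and is where Assumption~\ref{assumption:continuity} does the essential work. The subsequent prior-mass-versus-risk-separation bounds are the routine sandwich familiar from Gibbs-posterior consistency arguments.
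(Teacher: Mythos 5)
Your proposal is correct and follows essentially the same route as the paper's own proof: the identical uniform law of large numbers (established via the triangle inequality for $SW_2$, the bound $SW_2 \leq W_2$, the identity-coupling estimate combined with Assumption~\ref{assumption:continuity}, and a finite $\delta$-net of the compact $\Phi$ with the pointwise strong law at the net centers), followed by the same prior-mass-versus-risk-separation sandwich yielding the bound $\pi_N(A_\epsilon) \leq \pi(B_\delta(\phi_0))^{-1}\exp(-wN(\Delta - 3\eta))$. The only cosmetic difference is your threshold $\eta < \Delta(\epsilon)/3$ versus the paper's $\eta < \Delta(\epsilon)/4$, both of which make the exponent strictly negative.
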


The proof of Theorem~\ref{theoremm:posterior_consistency} is given in
Appendix~\ref{subsec:proof:theoremm:posterior_consistency}, which
requires proving the uniform law of a large number for $R_N(\phi)$ and
$R(\phi)$. 

\subsection{ Posterior Simulation}
We implement the posterior inference
by Markov chain Monte Carlo (MCMC) simulation from
\eqref{eq:pseudo_posterior}. Again, we
restrict the function $f$ to a parametric form, i.e., $f_\phi$ with
parameters $\phi \in \Phi$ for computational efficiency. 
We restrict the function $f_\phi$ and  the prior $p(\phi)$  to be
differentiable with respect to $\phi$ to facilitate the use of
gradient-based MALA (one-step hybrid Monte Carlo) transition
probabilities  using  the following proposal distribution:
$q(\phi' \mid \phi) \propto \exp\left(-\frac{1}{4\eta} \|\phi' -
 \phi - \eta \nabla_\phi \log p(\phi \mid \mathcal{S})\|_2^2\right)$,
for a fixed step size $\eta > 0$.
We sample $\phi' \sim q(\phi' \mid \phi)$ as
$\phi' = \phi + \eta \nabla_\phi \log p(\phi \mid \mathcal{S}) +
 \sqrt{2\eta} \epsilon_0$,
with $\epsilon_0 \sim \mathcal{N}(0, I)$, a standard multivariate
Gaussian distribution  of  dimension matching the parameter
$\phi$. We accept $\phi'$ with probability: 
$
\alpha = 
  1 \wedge
  {p(\phi \mid \mathcal{S}) q(\phi \mid \phi')}\big/
  {\left(p(\phi' \mid \mathcal{S}) q(\phi' \mid \phi)\right)}.
$

In practice, we cannot evaluate the generalized likelihood since $SW_2^2(f_\phi \sharp
  \Fh_i, \Gh_i)$
  is intractable  due to two reasons.
 First, we only  observe $F_i$ and $G_i$ through their samples
  i.e., $x_{i1},\ldots,x_{im_{i1}} \sim F_i 
  $ and $y_{i1},\ldots,y_{im_{i2}} \sim G_i$.
Second, the expectation in the definition of SW \eqref{eq:SW} is
intractable. As discussed in Section~\ref{sec:preliminaries}, we can
use plug-in estimate \eqref{eq:empirical_SW} as a simple solution for
the first issue. For the second issue, we can use Monte Carlo estimate
as a solution \eqref{eq:MC_empirical_SW}. For the posterior inference
purpose, we replace the generalized likelihood $p(\phi\mid \SSS)$ by
its estimation $\hat{p}(\phi\mid \SSS) \propto  p(\phi) \exp\left(-w
\sum_{i=1}^N \widehat{SW}_2^2(f_\phi \sharp \Fhat_i, \Ghat_i;L)\right)$.

 Under these approximations  we derive an
  estimate of the gradient of the log posterior density
$\nabla_\phi \log p (\phi \mid \mathcal{S}) \approx \nabla_\phi \log p(\phi) + \nabla  \hat{p}(\mathcal{S} \mid \phi)= \nabla_\phi \log p(\phi)
- w \sum_{i=1}^N \nabla_\phi \widehat{SW}_p^p(f_\phi \sharp \Fhat_i, \Ghat_i;L)$,
where
$\nabla_\phi \widehat{SW}_2^2(f_\phi \sharp \Fhat_i, \Ghat_i; L) =
\frac{1}{L} \sum_{l=1}^L \nabla_\phi W_2^2(\theta_l \sharp f_\phi
\sharp \Fhat_i, \theta_l \sharp \Ghat_i)$
with
\begin{equation}
  \nabla_\phi W_2^2(\theta_l \sharp f_\phi \sharp \Fhat_i, \theta_l
  \sharp \Ghat_i) = \nabla_\phi \left(\min_{\gamma \in
      \Gamma(\mathbf{1}/m_{i1}, \mathbf{1}/m_{i2})}
    \sum_{i'=1}^{m_{i1}} \sum_{j'=1}^{m_{i2}}
    (\theta_l^\top f_\phi(x_{ii'}) - \theta_l^\top
    y_{ij'})^2 \gamma_{i'j'}\right). 
\end{equation}
Since the function
$\sum_{i'=1}^{m_{i1}} \sum_{j'=1}^{m_{i2}} (\theta^\top
f_\phi(x_{ii'}) - \theta^\top y_{ij'})^2$
is continuous in $\phi$ and the set of discrete transportation plans
$\Gamma(\mathbf{1}/m_{i1}, \mathbf{1}/m_{i2})$ is a compact set, we
can apply Danskin's envelope theorem with $\gamma^\star =
\text{argmin}_{\gamma \in \Gamma(\mathbf{1}/m_{i1},
\mathbf{1}/m_{i2})} \sum_{i'=1}^{m_{i1}} \sum_{j'=1}^{m_{i2}}
(\theta^\top f_\phi(x_{ii'}) - \theta^\top y_{ij'})^2 \gamma_{i'j'}$
and obtain the gradient:
\begin{multline}
  \nabla_\phi \left(\min_{\gamma \in \Gamma\left(\frac{\mathbf{1}}{m_{i1}},
      \frac{\mathbf{1}}{m_{i2}}\right)} \sum_{i'=1}^{m_{i1}} \sum_{j'=1}^{m_{i2}}
    (\theta^\top f_\phi(x_{ii'}) - \theta^\top y_{ij'})^2
    \gamma_{i'j'}\right)   \\
  = \nabla_\phi \left(\sum_{i'=1}^{m_{i1}} \sum_{j'=1}^{m_{i2}}
    (\theta^\top f_\phi(x_{ii'}) - \theta^\top y_{ij'})^2
    \gamma_{i'j'}^\star \right)  
  = \sum_{i'=1}^{m_{i1}} \sum_{j'=1}^{m_{i2}} \gamma_{i'j'}^\star
  \nabla_\phi (\theta^\top f_\phi(x_{ii'}) - \theta^\top y_{ij'})^2  \\
  =2 \sum_{i'=1}^{m_{i1}} \sum_{j'=1}^{m_{i2}}
  \gamma_{i'j'}^\star(\theta^\top f_\phi(x_{ii'}) - \theta^\top
  y_{ij'}) \nabla_\phi (\theta^\top f_\phi(x_{ii'})).
  \label{eq:gradient}
\end{multline}
The calculation of $\nabla_\phi (\theta^\top f_\phi(x_{ii'}))$ depends
on the form of $f_\phi$.

In the upcoming examples and simulations,
we utilize a linear function for simplicity, $f_{A,b}(x) = Ax + b$,
 i.e., $\phi=(A,b)$ with 
$A=[A_{ij}]\in \Re^{d_2\times d_1}$ and $b\in
 \Re^{d_2}$. It is easy to check that this function
   satisfies Assumption~\ref{assumption:continuity}, i.e., it is
   Lipchitz in $\phi$ (see
   Proposition~\ref{proposition:linear_function} in
   Appendix~\ref{subsec:proof:theoremm:posterior_consistency}). While
   any  alternative  functions can be used e.g., deep neural
   network~\citep{wilson2020bayesian}, we choose a linear form as 
   the most parsimonious model,
\ and recommend more complex choices only if residual errors so
indicate.    
We put a horseshoe
prior~\citep{carvalho2009handling} on $A$, which will be later discussed in detail in the context of graph discovery:
\begin{align} \label{eq:HS}
    A_{ij} \mid \lambda_{ij}, \tau &\sim \mathcal{N}(0, \lambda_{ij}^2
                                     \tau^2), \quad 
    \lambda_{ij} \sim C^+(0, 1), \quad 
    \tau \sim C^+(0, 1),
\end{align}
and $b \sim \mathcal{N}(0, I)$. 
We implement posterior simulation using
 standard MCMC transition probabilities using the inverse gamma
complete conditional posterior distributions for
$\lambda_{ij}^2$,
$\nu_{ij}$,
$\tau^2$,
and
$\zeta$.  See Supplementary Materials ~\ref{sec:MCMC} for more details of the MCMC sampler.
To sample from $p(A, b \mid \lambda^2, \tau^2, \mathcal{S})$, we use
the MALA algorithm as discussed above.  Substituting  in $\nabla_A
(\theta^\top (Ax_{ii'} + b)) = \theta x_{ii'}^\top$ and $\nabla_b
(\theta^\top (Ax_{ii'} + b)) = \theta$ into
Equation~\eqref{eq:gradient}, we obtain the gradient for the
log-likelihood. For the gradient of the log-prior, we have
$\nabla_{A_{ij}}\log p(A_{ij} \mid \lambda_{ij}^2, \tau^2) =
\nabla_{A_{ij}} \frac{-A_{ij}^2}{2 \lambda_{ij}^2 \tau^2} =
\frac{-A_{ij}}{\lambda_{ij}^2 \tau^2}$ and $\nabla_b \log p(b) =
\nabla_b \frac{-\|b\|_2^2}{2} = -b$.

\subsection{  Simulation}
\label{subsec:simulation}
\begin{figure}[!t]
\begin{center}
    \begin{tabular}{cc}
  \widgraph{0.5\textwidth}{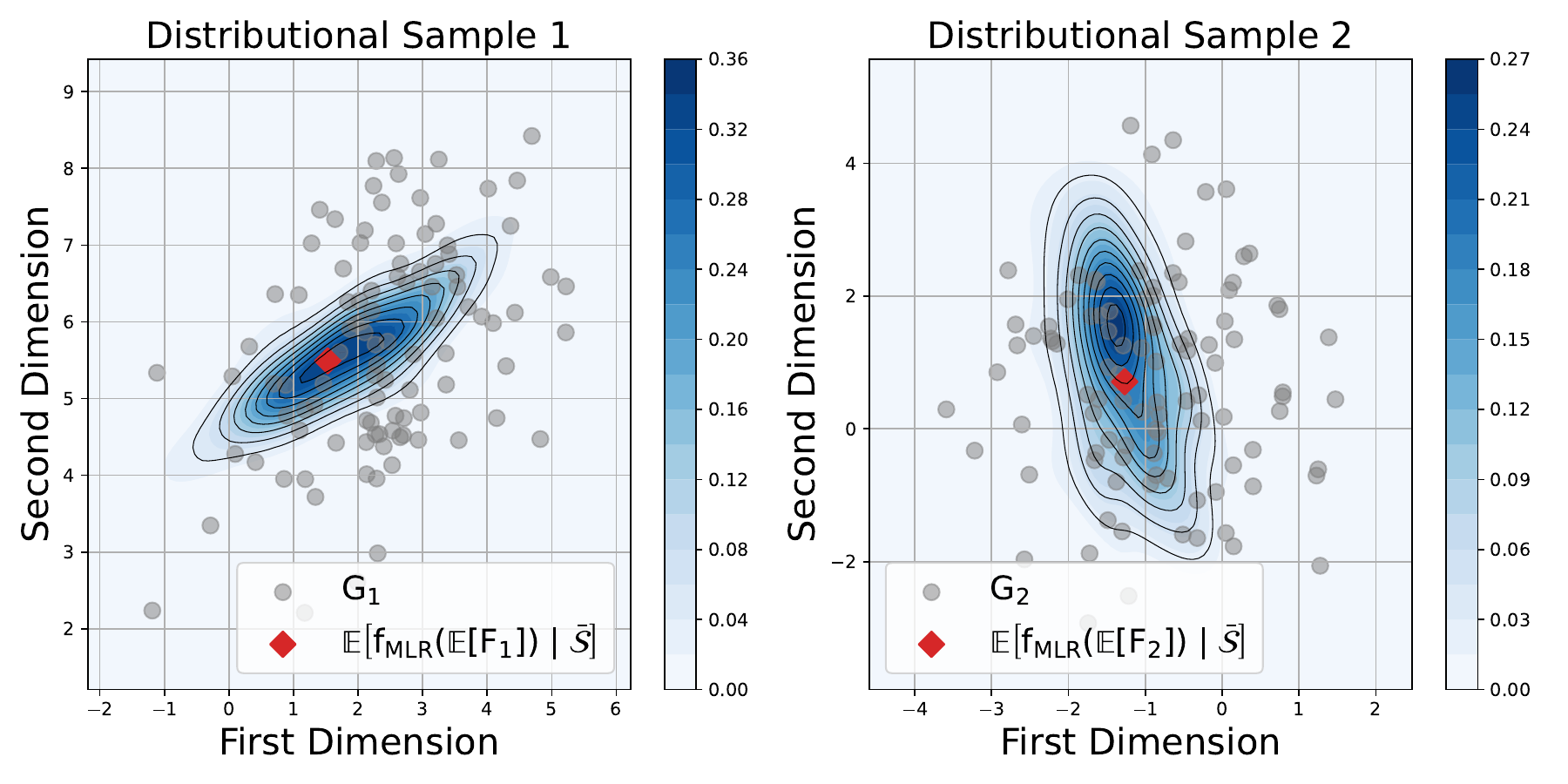} &
  \widgraph{0.5\textwidth}{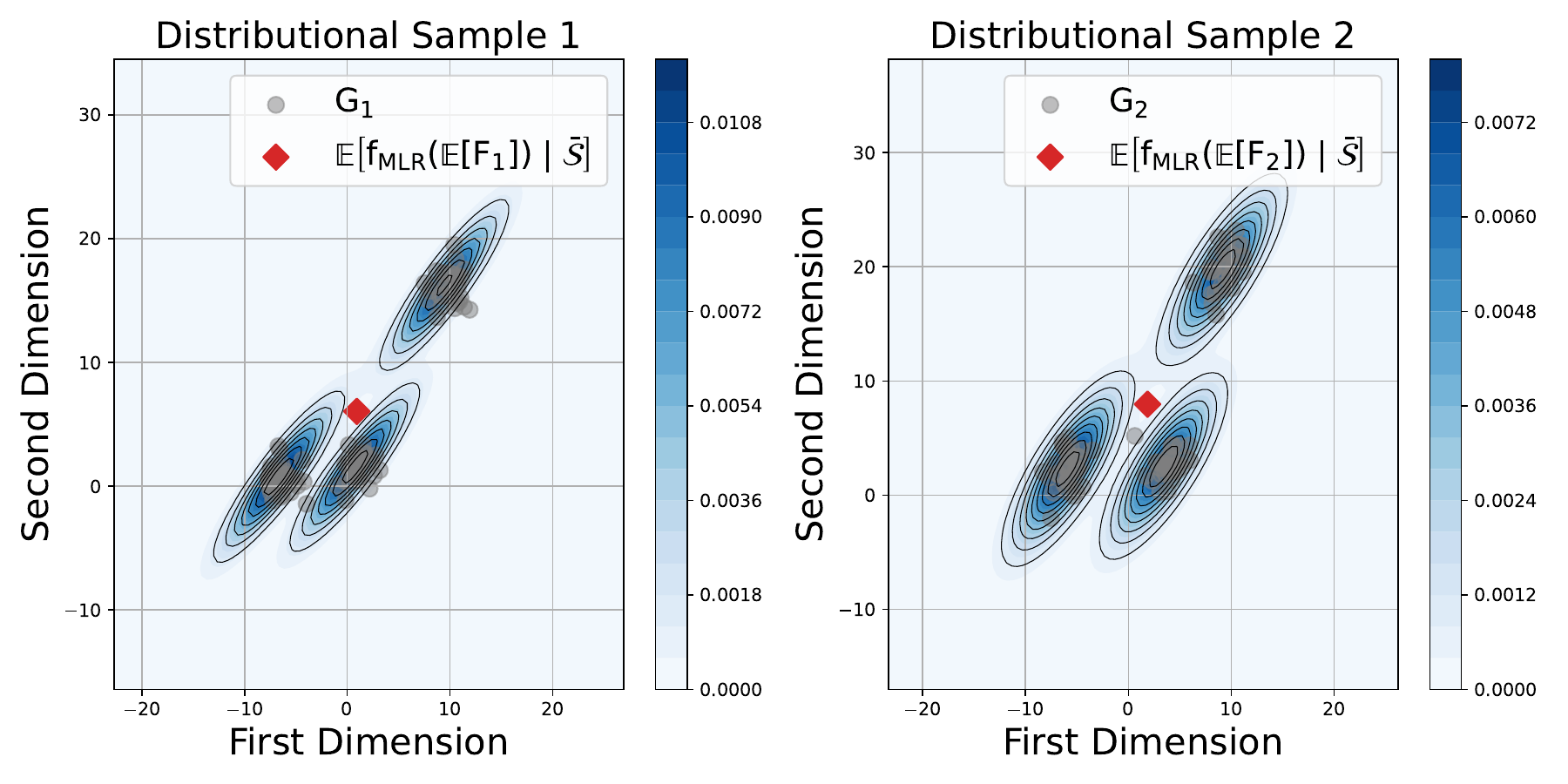}  \\
   \widgraph{0.5\textwidth}{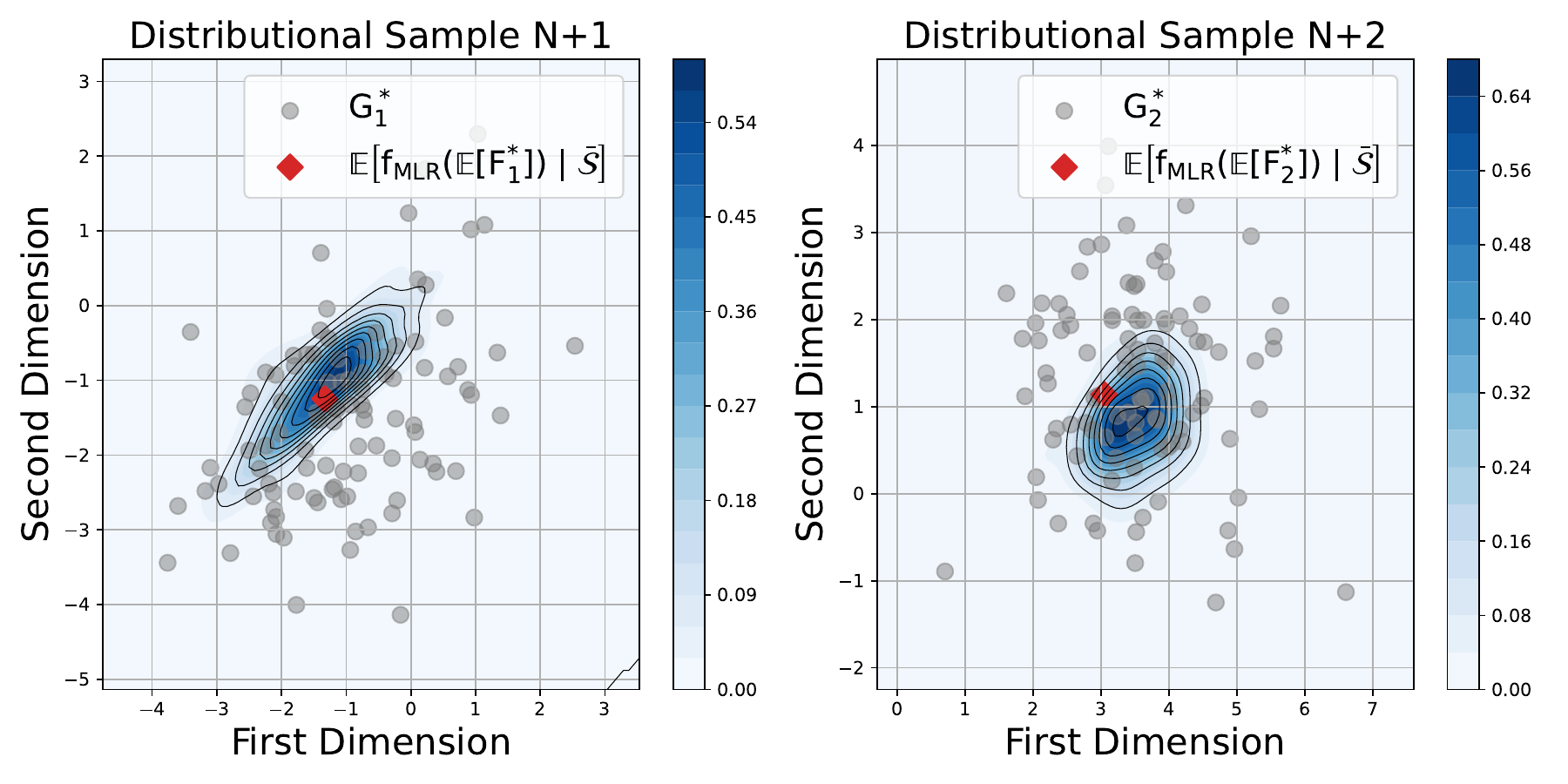} &
  \widgraph{0.5\textwidth}{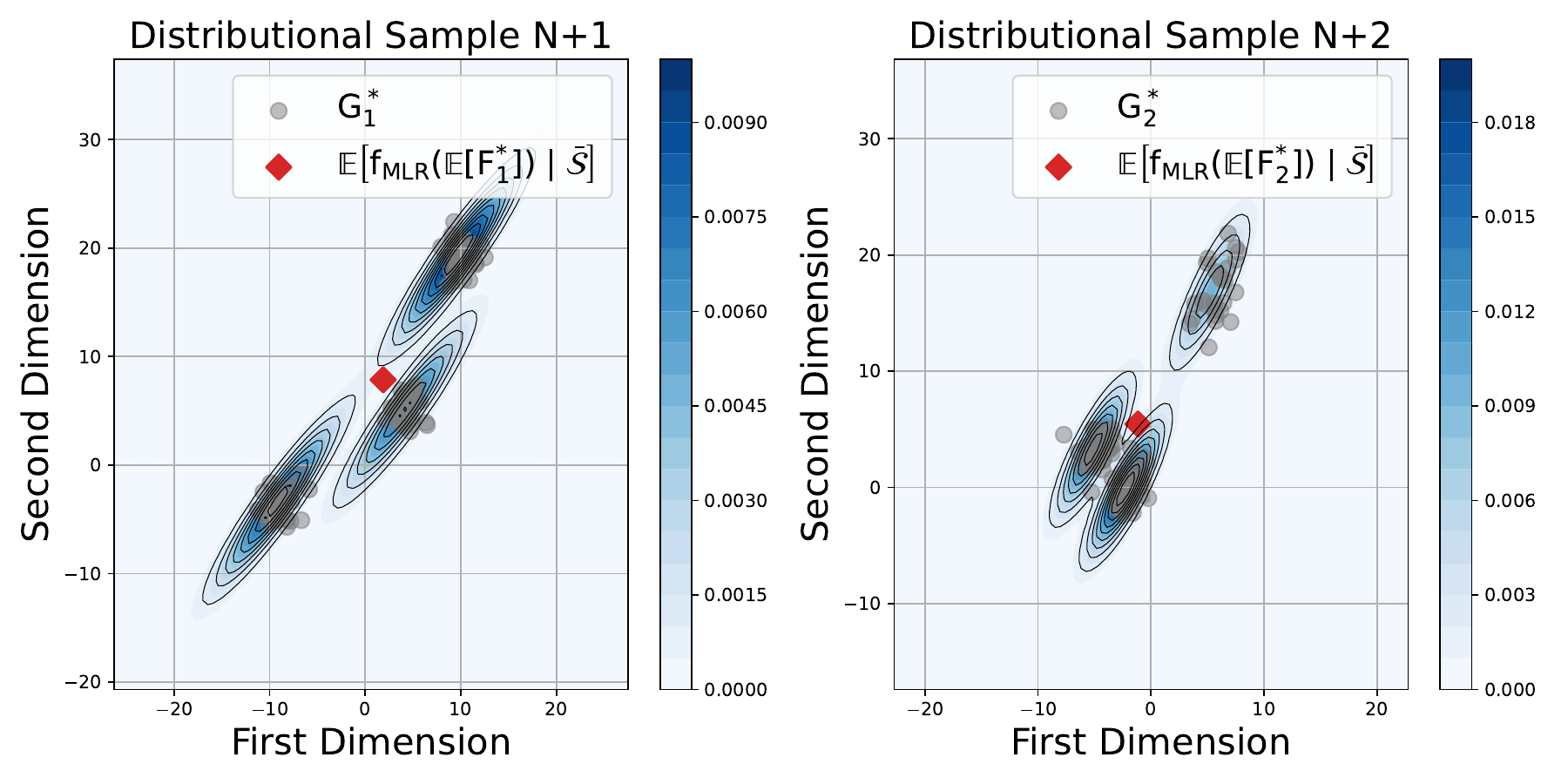} \\
 (a) \underline{Scenario 1} & (b) \underline{Scenario 2}
  \end{tabular}
  \end{center}
  \vspace{-0.2 in}
  \caption{
    \footnotesize{
      Fitted and predictive 
      densities $\Gbar_i$ under Bayesian DDR and
      expected fitted means under  Bayesian MLR.
      The first two columns shows results for two pairs of predictor
      and response distributions  under \underline{scenario 1}. 
    The last  two columns shows   the same 
    under \underline{scenario 2}. The first and second row show fitted
     (for $i=1,2$) and predicted (for  $i=N+1,N+2$)
     densities, respectively.
     Here and in all later figures showing fitted densities $\Gbar_i$, 
    the grey dots
    are the samples from the response distributions, the red diamonds
    show the fitted mean under Bayesian MLR, and the blue
    contours are fitted densities  under Bayesian DDR. 
}
} 
  \label{fig:prediction_simulation}
\end{figure}

\begin{figure}[!t]
  \begin{center}
    \begin{tabular}{cc}
      \widgraph{0.5\textwidth}{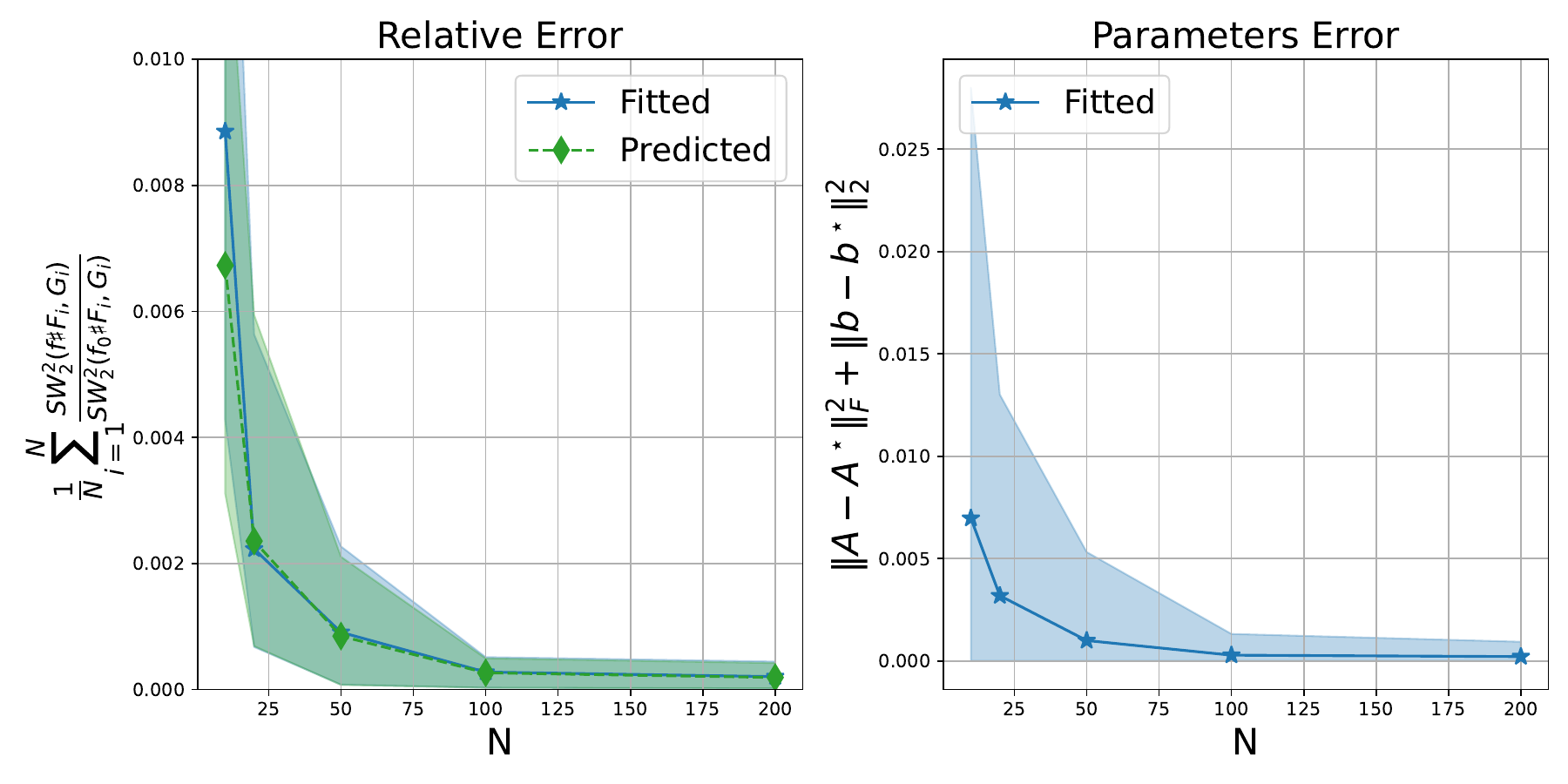}
      &
      \widgraph{0.5\textwidth}{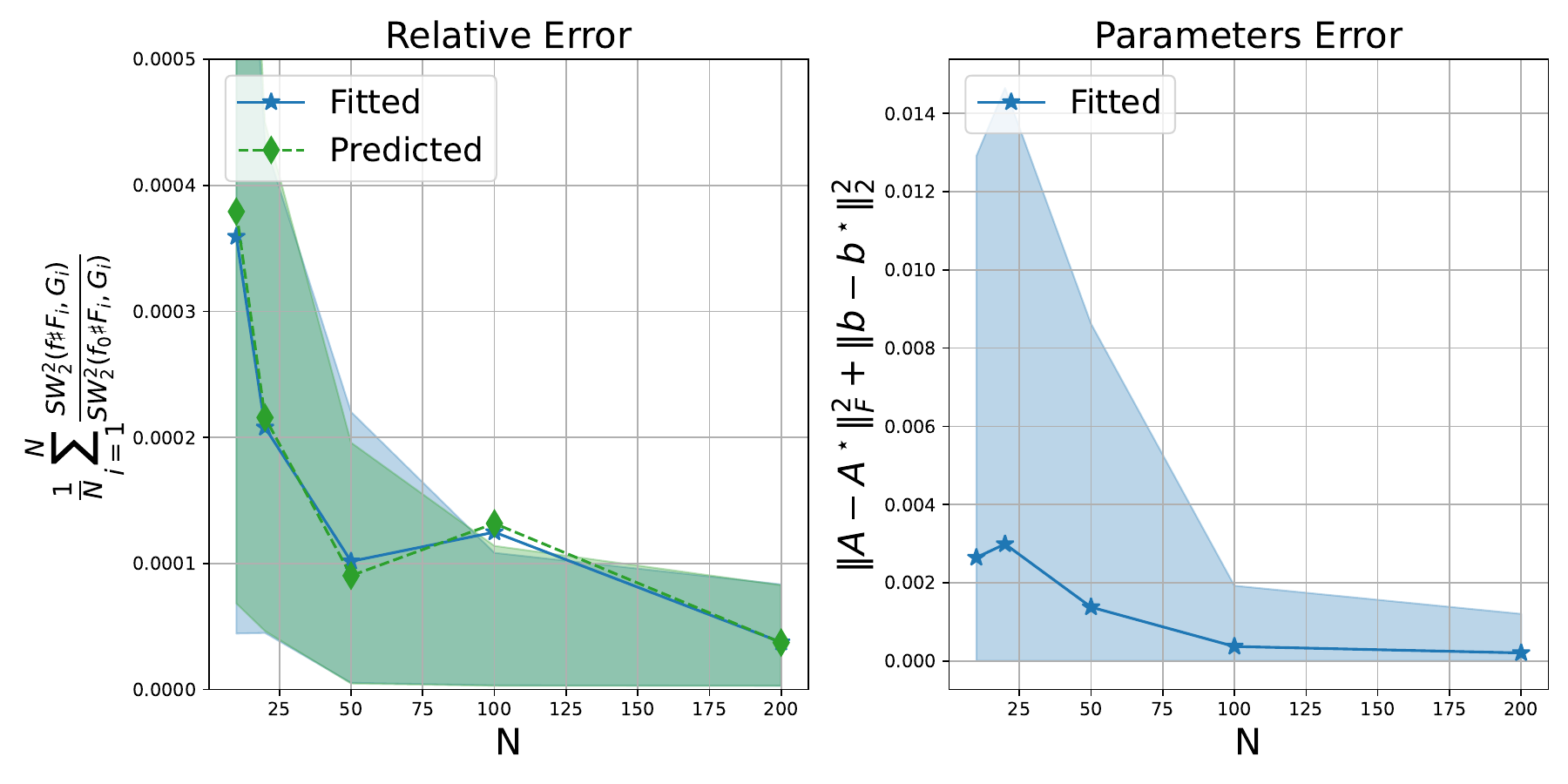}\\
       \hspace{1cm} RPE \hspace{2.5cm} $L(\phi,\phi^\star)$ &
       \hspace{1cm} RPE \hspace{2.5cm} $L(\phi,\phi^\star)$\\[4pt]
      \footnotesize{(a)
     Errors
      under \underline{scenario 1}.} &
      \footnotesize{(b)  Errors
      under \underline{scenario 2}.} 
    \end{tabular}
  \end{center}
  \vspace{-0.2 in}
  \caption{
    \footnotesize{Mean in-sample relative
      predictive errors~\eqref{eq:RPE} (fitted computed with $\mathcal{S}$ and predicted computed with $\mathcal{S}^*$),  and posterior mean squared error $\|A-A^\star\|_F^2 +\|b-b^\star \|_2^2$
      with corresponding 95\%   highest credible interval. 
    }}
\label{fig:error_simulation}
\end{figure}


We generate distributional samples $\SSS = \{(\Fh_1, \Gh_1),
\ldots, (\Fh_N, \Gh_N)\}$ with $N \in \{10, 20, 50, 100, 200\}$. In
particular,
letting
 $A^\star = \left[\begin{smallmatrix}
    1 & 0 \\
    1 & 1
\end{smallmatrix}\right]$ and $b^\star =  [1, 1]^\top$, 
we construct: 
$$
    \Fh_i = \frac{1}{m} \sum_{j=1}^{m} \delta_{x_{ij}}, \quad 
    \Gh_i = \frac{1}{m} \sum_{j=1}^{m} \delta_{A^\star x_{ij} +
            b^\star + \epsilon_0}
$$
where $\epsilon_0 \sim \mathcal{N}([0, 0]^\top, I_2)$, and $m =
100$.
 See below for the choice of $x_{ij}$. 
    We aim to examine whether the proposed model is able to
    recover the true parameters under a well-specified scenario.
For easier visualization we consider the bivariate setting.
We consider two
 simulation scenarios. 
In \underline{scenario 1},  we sample $x_{i1}, \ldots, x_{im}
\overset{i.i.d.}{\sim} \mathcal{N}(\mu_i, \Sigma_i)$ with $\mu_i \sim
\mathcal{N}([0, 0]', 4I_2)$ and $\Sigma_i \sim IW(I_2,6)$ (inverse
Wishart distribution with scale matrix $I_2$ and degrees of freedom 6)
for $i = 1, \ldots, N$. Under  \underline{scenario 2}, 
we sample $x_{i1}, \ldots,
x_{im} \overset{i.i.d.}{\sim} \frac{1}{3}\mathcal{N}(\mu_{i1},
\Sigma_{i1}) + \frac{1}{3}\mathcal{N}(\mu_{i2}, \Sigma_{i2}) +
\frac{1}{3}\mathcal{N}(\mu_{i3}, \Sigma_{i3})$, with $\mu_{i1} \sim
\mathcal{N}([0, 0]^\top, 4I_2)$, $\mu_{i2} \sim \mathcal{N}([8, 8]^\top,
4I_2)$, $\mu_{i3} \sim \mathcal{N}([-8, 8]^\top, 4I_2)$, and $\Sigma_{i1},
\Sigma_{i2}, \Sigma_{i3} \sim IW(I_2,6)$ for $i = 1, \ldots, N$.


We compare the proposed Bayesian DDR with
Bayesian multivariate linear regression (MLR) 
 replacing the distributions $\Gh_i$ and $\Fh_i$ by
sample-specific average gene expressions as one would observe in 
a pseudo-bulk gene expression dataset.
 That is, the data are  
$\SSb = \{(\xbar_i,\ybar_i)\}$ where $\xbar_i = \frac{1}{m_{i1}} \sum_{j=1}^{m_{i1}}
x_{ij}$ and $\ybar_i = \frac{1}{m_{i2}} \sum_{j=1}^{m_{i2}}
y_{ij}$. Let $\Xbar = (\xbar_1, \ldots, \xbar_N)$ and $\Ybar =
(\ybar_1, \ldots, \ybar_N)$.  The Bayesian MLR
model is 
\begin{align} 
    &\Ybar \mid \Xbar, A, \Sigma \sim \mathcal{MN}(\Xbar A, \Sigma, I), \quad  A \mid \Sigma \sim \mathcal{MN}(0, \Sigma, I), \quad \Sigma \sim IW(I, d_2).
\end{align}
where $\mathcal{MN}(M, U, V)$ denotes the matrix normal distribution
with location matrix $M$ and scale matrices $U$ and $V$, and the intercepts are included in $A$. 
Inference under the Bayesian MLR model is implemented by
 a Gibbs sampler alternating draws from the complete conditional
posterior distributions for $A$ and $\Sigma$: 
\begin{equation}
\Sigma \mid \Ybar, \Xbar, A \sim IW(V_N, \nu_N), \quad
A \mid  \Ybar, \Xbar, \Sigma \sim \mathcal{MN}(B_N, \Lambda_N^{-1}, \Sigma)
\label{MLR}
\end{equation}
with $V_N = I + (\Ybar - \Xbar A)^\top (\Ybar - \Xbar A) + (A - A_0)^\top (A - A_0)$,
$B_N = (\Xbar^\top \Xbar + I)^{-1}(\Xbar^\top \Ybar + A_0)$, 
$\Lambda_N = \Xbar^\top \Xbar + I$, and
$\nu_N = d_2 + N$.

For both, Bayesian DDR and Bayesian MLR, we  save  1000 MCMC
samples. In the MALA algorithm
 for posterior simulation under the Bayesian DDR, 
we use a step size $\eta = 10^{-5}$, and
 we use $w = 10$ in the generalized likelihood. 
We discard the first 500 samples
as burn-in samples. With the final $T = 500$ posterior samples
$\{A_1, \dots, A_T; b_1, \dots, b_T\}$ we evaluate distributional
predictions $\Gbar(z)$ for any predictor distribution $\Fh = \frac{1}{m}
\sum_{i=1}^m \delta_{x_i}$  as
$\Gbar(z) = \mathbb{E}[(f_{A,b} \sharp \Fh)(z) \mid \mathcal{S}]$.

 In the first row of Figure~\ref{fig:prediction_simulation} 
we visualize the fitted densities
(i.e., in-sample prediction)  $\Gbar$ under the Bayesian DDR  and
the fitted means under MLR for 2 randomly chosen distributional
samples  under both scenarios,  with $N = 10$.
 Here and in later figures we use kernel density estimation to
smooth the shown contours. 
 Inference under the Bayesian DDR reports 
distributions whereas inference under the Bayesian MLR yields a
single vector. For 
the mixture of Gaussians case  (\underline{scenario 2}), the
expected means under the Bayesian MLR are not even in a high
probability region.  

 For an assessment of inference as a function of sample size, we
evaluate 
the change in RPE and the change in posterior expected squared error
 $L(\phi, \phi^\star) = ||A-A^\star||^2_F + ||b-b^\star||^2_2$,
for parameters $\phi=(A,b)$ 
as functions of sample size $N$.  Results are shown  in
Figure~\ref{fig:error_simulation}. To compute RPE, we
 define the reference model $f_0$ in \eqref{eq:RPE}
as intercept model, with $A=0$, leaving only the intercept $b$. 
From Figure~\ref{fig:error_simulation}, we see that
both, RPE (in blue) and  $L(\phi,\phi^\star)$ 
decrease with $N$ as expected. 
From this simulation, we observe that the generalized
posterior under Bayesian DDR  already concentrates around the true
parameters for  practically  feasible sample sizes $N$.

Next, we consider out-of-sample prediction. We use a test
sample
$\mathcal{S}^*=\{(\Fh_{N+1}, \Gh_{N+1}),$\\$ \ldots, (\Fh_{N+N^*}, \Gh_{N+N^*})\}$,
which are generated under the same simulation truths (scenarios 1
and 2) as before. We use $N^* = 200$ (the large $N^\star$ is
needed for the evaluation of $\pRPE$).
The second row of Figure~\ref{fig:prediction_simulation} shows out-of-sample
predicted densities $\Gbar_i(z) = \mathbb{E}[(f_{A,b} \sharp
\Fh_i)(z) \mid \SSS]$ for two randomly chosen distributional samples
in the test set, conditional on a
training sample of size $N=10$, under both scenarios. 
Predictive inference under the Bayesian DDR matches the observed  
densities quite well, confirming the desired generalizability.
We further evaluate generalizability 
by evaluating out-of-sample RPE in
Figure~\ref{fig:error_simulation} in green color.
Again, we find decreasing out-of-sample RPE with increasing sample
size, as expected.

\section{ Example: Cell to Cell Communication}
\label{sec:DDR_B_to_T}

\begin{figure}
    \centering
    \includegraphics[width=0.5\linewidth]{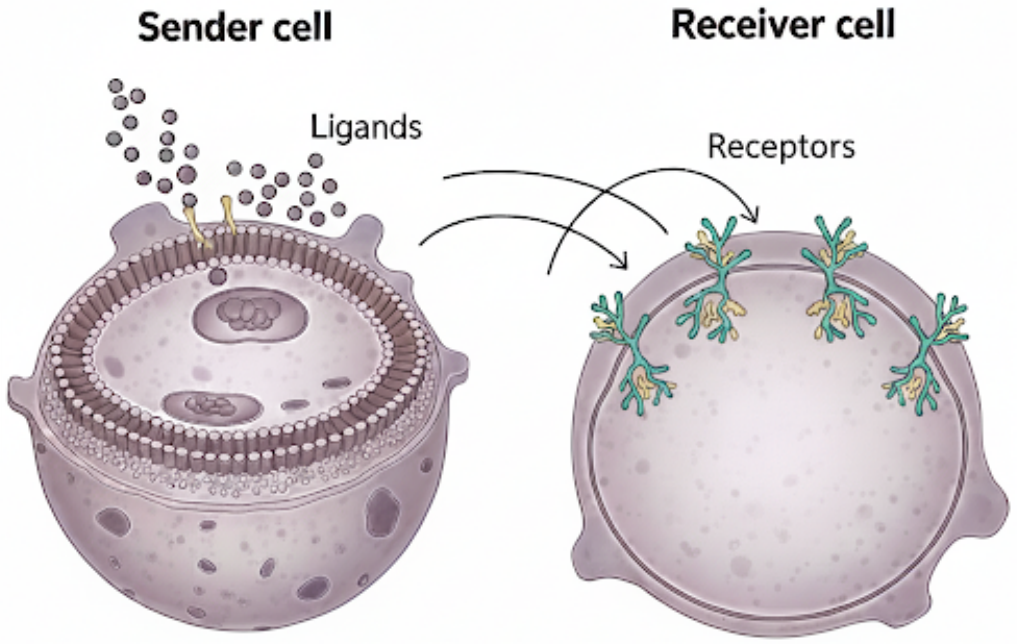}
    \caption{Cell-cell communication: the sender cell produces and sends ligands (signaling molecules) that bind to receptors expressed by the receiver cell.}
    \label{fig:cellcell}
\end{figure}

Cell-cell communication plays a fundamental role in understanding
complex biological systems, as interactions between cells drive
various physiological and pathological processes. See Figure \ref{fig:cellcell} for an illustration of a sender cell sending signals (ligands), which are received by a receiver cell via the expressed receptors. A natural way to
quantify these interactions is by modeling how the distribution of
ligands of  one cell type influences the distribution of
receptors of another cell type.  
We demonstrate the proposed generalized Bayesian DDR with an
application to inference on the communication between B cells and
T cells using the population-scale single-cell data, OneK1K
\citep{yazar2022single}. We select 439 donors who have at least 100
observed cells for both cell types.
 We formalize cell to cell communication as a regression of a
distribution of gene expressions of T cell receptors (as response) on
gene expression of B cell ligands (as predictor) to represent
``B cell to T cell'' communication, and vice versa for ``T cell to
B cell.''
Letting $\Gh_i$ denote the distribution of T cell receptor gene
expression for sample $i$ (as empirical distribution), and 
$\Fh_i$ the same for B cell ligands, the regression of T cell
on B cell gene expression becomes a density-density regression
as in \eqref{eq:DDRmodel} for the pairs
$(\Fh_i, \Gh_i)$. And similarly for T to B. 
For the regression of T cell gene expression on B cell gene expression
(B cells to T cells) the predictor is the distribution  $\Fh_i$
 of B cell ligands \{CD40, CD86, ICOSL, IL-6, BAFF, APRIL\} 
and the response is the distribution  $\Gh_i$  of T cell receptors
\{CD3D, CD3E, CD3G, TRBC1, TRBC2, CD28, ICOS, IL-2R, IFNGR, IL-21R,
PD-1, CTLA-4, CXCR5, CCR7\}. 
For T cells to B cells, the predictor is the distribution of
T cell ligands \{CD40L, IFN-$\gamma$ \}
and the response is the distribution of B cell receptors
\{IGHM, IGHD, IGHG, IGHA, CD40, ICOSL, IL-21R, IL-6R, BAFFR, CXCR5,
CCR7\}.
 The orthogonal nature of the ligand and receptor sets makes the
regressions of T on B and B on T complementary. 
For each task, we split the set of donors into a training set with 80\%
of the  subjects  and a test set with  the remaining
 20\%.
We standardize all 
ligands and receptors based on the mean and standard deviation from
all cells and all donors in the training set.

\begin{figure}[!t]
\begin{center}
    \begin{tabular}{c}
  \widgraph{1\textwidth}{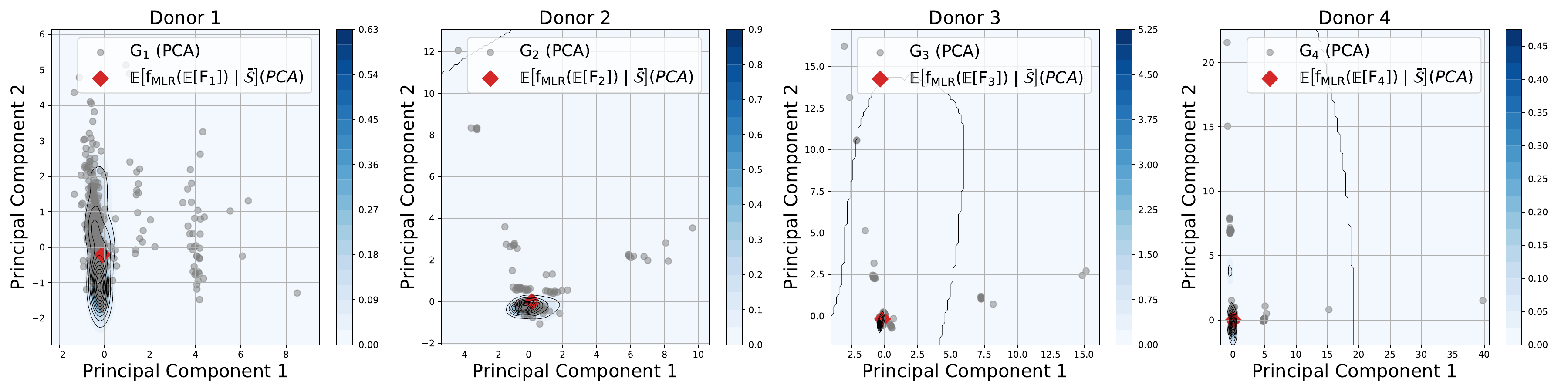} \\
\widgraph{1\textwidth}{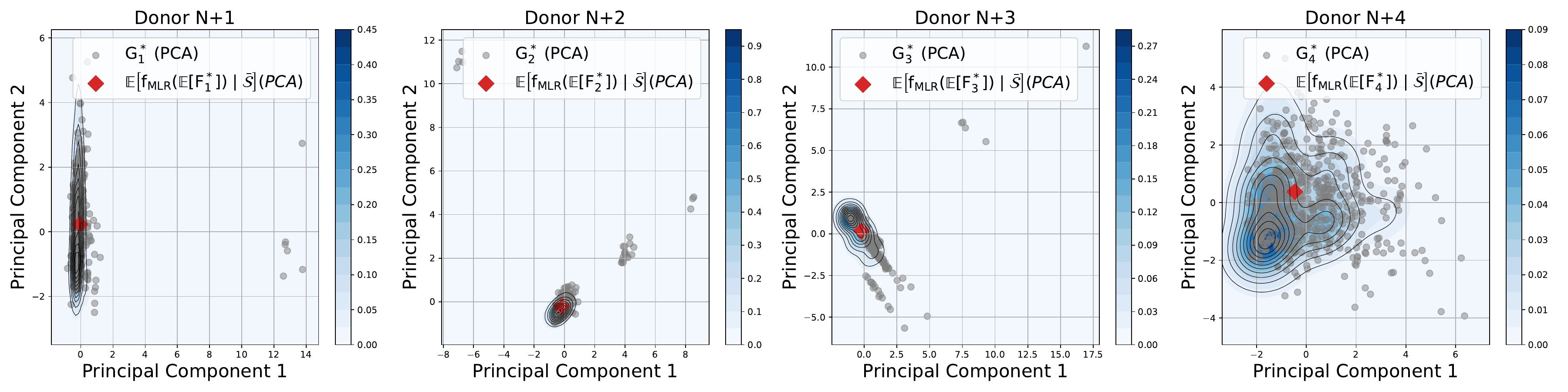} \\
  \end{tabular}
  \end{center}
  \vspace{-0.2 in}
  \caption{
    \footnotesize{Density-density regression of
      T cell gene expression on B cell gene expression under the
      Bayesian DDR model.
      The first row shows posterior fitted densities
       $\Gbar_i$ for four randomly chosen samples in the training
      set  (in-sample).
      The second row shows predictive densities  $\Gbar_i$ for four
      randomly chosen samples in the test set   (out-of-sample).
       Compare with Figure \ref{fig:prediction_simulation} for
      an explanation of the symbols and contours. 
      Here and in all later figures illustrating high dimensional gene
      expression features, 
      we use PCA on the atoms of the observed response distribution
      for each donor to plot the first two projections.
}
} 
  \label{fig:B_T_prediction}
\end{figure}


 We implement inference under the proposed Bayesian DDR and MLR
models. 
For the MALA algorithm, we use a step size $\eta = 10^{-4}$.
In the definition of the generalized log-likelihood, we use a factor 
$w = 100$, and we use $L = 1000$ for approximating the SW
distance. We obtain 1000 MCMC samples and discard the first 500
samples for both the Bayesian DDR and the Bayesian MLR.
Figure~\ref{fig:B_T_prediction} 
shows the fitted densities $\Gbar$ for 4 randomly selected donors in the
training set (first row of panels) and the predicted densities $\Gbar$
for 4 random donors in the test set (second row of panels) under the
proposed Bayesian DDR of T cell gene
expression on B cell gene expression. 
For comparison, we also indicate the fitted and predicted values for
 $\Ybar_i$  under the Bayesian MLR \eqref{MLR}. 
For the Bayesian DDR, we show
contour plots of fitted and predicted 
$\Gbar = \mathbb{E}[(f_{A,b} \sharp \Fh) \mid \Fh, \SSS]$ 
for predictor $\Fh_i=\Fh$.

For each donor,
we plot the first two projections under PCA of the atoms of the 
observed response distribution $\Gh_i$. 
As a density-density regression, Bayesian DDR 
reports an entire distribution, while
Bayesian MLR naturally only reports a single point estimate of
$\Ybar_i$.
We use RPEs as defined in \eqref{eq:RPE} to assess the fit and
prediction. 
For the regression of T cell gene expression on B cell
gene expression shown in Figure~\ref{fig:B_T_prediction},
under the Bayesian DDR we find posterior 
mean RPEs of $0.233$ with a 95\% credible
interval of $[0.232, 0.235]$
for the fitted densities in the training set, and 
$0.217~ (0.216 \text{ to } 0.219)$ for the predicted densities
for the test set.
For the regression of B cell gene expression on T cell gene
expression (not shown), we find the posterior mean
RPE of $0.617~ (0.616 \text{ to } 0.619)$ for
the training set and $0.606~ (0.603 \text{ to } 0.609)$ for the
test set.

In summary, T cell receptor expression can be
fitted well using $f_{A,b}\sharp \Fh_i$, via the linear mapping
$f_{A,b}$, while the distribution of B cell receptor expression
can be partially linearly explained by T cell ligand expression.


\section{ Application  to Graph Discovery}
\label{sec:graph_discovery}
We extend DDR to a setting with multiple pairs of
distributions. 
The goal is to discover a directed network or graph of
these distributions. A directed graph is a pair 
$\GG=(\VV,\EE)$ where $\VV = \{v_1, \ldots, v_V\}$ is a set of nodes
and $\EE \subseteq \EE_0 = \{(\ell, r) \mid \ell, r \in \VV, \ell \neq r\}$
is a set of directed 
edges. For each pair $e = (\ell, r) \in \EE_0$, 
we observe a sample of $N_e$ pairs of distributions $\SSS_{e} = \{(\Fh_{ei}, \Gh_{ei})\}_{i=1}^{N_{e}}$. 
We introduce two approaches 
to discover a directed graph of distributional nodes.
The first  option is to report  a fully connected graph with
edges labeled by weights generated from DDR.
The second approach
 casts the graph discovery as a decision problem, which we address
by way of a simple decision boundaries. 
In both cases, the graph is reported as a posterior summary based on
the generalized posterior \eqref{eq:pseudo_posterior}.


To define a {\bf weighted fully connected graphs} 
we start by setting up a Bayesian DDR model for each edge $ e \in \EE_0$.
Using inference under the DDR model, we evaluate 
the relative predictive error as defined in \eqref{eq:RPE2} with approximation of SW,
\begin{align}
    RPE_e(\mathcal{S}_e,L) = \frac{1}{N_e} \sum_{i=1}^{N_e}\frac{\widehat{SW}_2^2(f_{e} \sharp \Fh_{ei}, \Gh_{ei};L)}{\widehat{SW}_2^2(f_{e0} \sharp \Fh_{ei}, \Gh_{ei};L)},
\end{align}
which is estimated as 
\begin{align}
\label{eq:RPEedges}
    \pRPE_e(\mathcal{S}_e,L) = \frac{1}{T}\frac{1}{N_e} \sum_{t=1}^T  \sum_{i=1}^{N_e}\frac{\widehat{SW}_2^2(f_{et} \sharp \Fh_{ei}, \Gh_{ei};L)}{\widehat{SW}_2^2(f_{e0t} \sharp \Fh_{ei}, \Gh_{ei};L)},
\end{align}
using posterior samples
$f_{e1}, \ldots, f_{eT} \sim p(f_{e} \mid \SSS_e)$ and 
$f_{e01}, \ldots, f_{e0T} \sim p(f_{e0} \mid \SSS_{e})$
 under  a reference model $f_{e0}$. For example, if $f_{e}$ is a
linear mapping, as we shall use it later,
the reference model $f_{e0}$ could be the intercept-only model.
Given the posterior mean RPE for all pairs, we report a fully connected graph
with the estimated RPEs as weights for directed edges.

For an alternative approach, we define a
 {\bf sparse graph based on FDR and FNR control.} 
We augment the inference model to a decision problem by adding a loss
function.   As before,  we use  a linear mapping 
$f_{A,b}(x) = Ax + b$ with parameters $\phi=(A,b)$.
 Again,   we place a horseshoe prior on $A$, as in 
\eqref{eq:HS} before.
Let $A_e$ be
the linear coefficient matrix in $f_{A,b}$ that maps from
$\Fh_{ei}$ to $\Gh_{ei}$ for $i = 1, \ldots, N_{e}$.
We define the \textit{$\epsilon$-inclusion probability} ($\epsilon >
0$):
\begin{align}
\label{eq:eIP}
  \eIP = \mathbb{P}
  \Big(\overbrace{\max_{ij}\{|A_{eij}|\} > \eps}^{R_e} ~\Big|~
  \SSS_{e}\Big)\quad i=1,\ldots,d_{e1};\; j=1,\ldots,d_{e2}, 
\end{align}
where $d_{e1}$ and $d_{e2}$ are the dimensions of $\hat G_{ei}$ and
$\hat F_{ei}$.
The posterior probability \eqref{eq:eIP} 
can be estimated using the MCMC samples for $A_e$.
We interpret the  event $R_e$ 
as true  
presence of edge $e$, and 
use the posterior probability $\eIP$ to define a parametrized decision rule.
We include edge $e=(\ell,r)$ if $\eIP > 0.5$. that is, we use the rule 
$\dt_e = I(\eIP > 0.5)$.
The rule $\dt_e$ reduces the decision to the choice of a single
decision parameter $\eps$.
 A true positive happens if $d_e=1_{R_e} = 1$. 
We use a loss function based on posterior expected false
discovery rate ($\pFDR$) and false negative rate ($\pFNR$) to choose $\eps$.  In
short, for a fixed bound on $\pFDR$, we choose $\eps$ to
minimize $\pFNR$. 
The two error rates are defined as
\begin{equation}
\label{eq:FDR_FNR}
\pFDR = \frac{\sum_{e \in \EE_0} (1 - \eIP)\, \dt_e}
             {\sum_{e \in \EE_0} \dt_e + 0.001} \text{~ and~ }
\pFNR = \frac{\sum_{e \in \EE_0} \eIP\, (1 - \dt_e)}
             {|\EE_0| - \sum_{e \in \EE_0} \dt_e + 0.001},
\end{equation}
where $|\EE_0|$ is the total number of  possible edges (and
$0.001$ in the denominator avoids $0$-denominators).
See, for example,  \cite{muller2004optimal} for a discussion of
$\pFDR$ and $\pFNR$.
We choose $\eps$ to minimize $\pFNR$ subject to $\pFDR \le 0.10$.

\section{ Example: A Cell-Cell Communication Graph}
\label{sec:cell_graph_exp}

\subsection{ Results for the OneK1K single-cell data}
\label{subsec:cell-cell}
\begin{table}[!t]
    \centering
    \scalebox{0.7}{
    \begin{tabular}{|l|l|p{5cm}|p{9cm}|}
    \toprule
         Ordered Pairs& N& Ligands $\Fh_i$&Receptors $\Gh_i$ \\
         \midrule
         T Cells to B Cells& 439&  CD40L, IFN-$\gamma$ & IGHM, IGHD, IGHG, IGHA, CD40, ICOSL, IL-21R, IL-6R, BAFFR, CXCR5, CCR7 \\
         \midrule
         B Cells to T Cells& 439& CD40, CD86, ICOSL, IL-6, BAFF, APRIL & CD3D, CD3E, CD3G, TRBC1, TRBC2, CD28, ICOS, IL-2R, IFNGR, IL-21R, PD-1, CTLA-4, CXCR5, CCR7\\
         \midrule
T Cells to monocytes& 130& CD40L, IFN-$\gamma$, CD70 & CD86, IL-12R, TNFR1, TNFR2, IL-6R, CXCR4, PD-1, LAG-3\\
\midrule
monocytes to T Cells&130&  CD86, ICOSL, IL-23, TNF-$\alpha$, IL-1$\beta$, IL-6, CD70 & CD3D, CD3E, CD3G, TRBC1, TRBC2, CD28, ICOS, IL-2R, IFNGR, IL-21R, PD-1, CTLA-4, CXCR3, CXCR4\\
\midrule
B Cells to monocytes& 59& CD40L, CD70, IL-6, TNF-$\alpha$, Tim-3, BAFF, APRIL & CD40, CD86, IL-12R, TNFR1, TNFR2, CXCR3, PD-1, LAG-3\\
\midrule
monocytes to B Cells&59&  CD40L, IL-6, TNF-$\alpha$, Tim-3, BAFF, APRIL & CD40, IL-6R, IL-10R, BAFF-R, CXCR4, CCR7, PD-1\\
\midrule
monocytes to NK Cells&102&  CD40L, TNF-$\alpha$, IL-15, IL-18, Tim-3 & NKG2D, NKp30, NKp46, DNAM-1, NKG2A, IL-2R, IL-15R\\
\midrule
NK Cells to monocytes&102&  IL-15, IFN-$\gamma$, FasL & IL-12R, IL-15R, IFN-$\gamma$R, CXCR3, Fas, CD40, PD-1, Tim-3\\
\midrule
NK Cells to T Cells&700&  IL-15, IFN-$\gamma$, CD40L, FasL & IL-12R, IL-15R, IFN-$\gamma$R, CD28, ICOS, PD-1, Tim-3\\
\midrule
T Cells to NK Cells&700&  IFN-$\gamma$, CD86 & IL-2R, IL-4R, IL-21R, IFN-$\gamma$R, CCR1, CXCR3, CD28, ICOS, PD-1, Tim-3, NKG2D\\
\midrule
B Cells to NK Cells& 319& IL-6, BAFF, TNF-$\alpha$, CD40L & IL-6R, BAFF-R, TNFR, IL-10R, CCR1, CXCR3, CD40, ICOS, PD-1, Tim-3, NKG2D\\
\midrule
NK Cells to B Cells&319& IL-15, IFN-$\gamma$, CXCL8, CD40L & IL-2R, IL-15R, IFN-$\gamma$R, CCR1, CXCR2, CD40, ICOS, PD-1, Tim-3\\
\bottomrule
    \end{tabular}
    }
    \vskip 0.1in
    \caption{Summary of a dataset of 4 cell
      types: B cells, T cells, and NK cells, and monocytes. For each
      pair the table lists the number of
      donors and the list of ligands and receptors. 
}
    \label{tab:data}
\end{table}

We consider four cell types from the OneK1K dataset: B cells, T cells,
NK cells, and monocytes, and aim to infer
a graph to represent cell-cell communication between these
four subtypes. 
For each ordered pair $e$ of cell types, we extract the ligand genes from
the “sender” cell and the receptor genes from the “receiver” cell
and regress the distribution $G_{ei}$ of gene expression for 
the latter onto the distribution $F_{ei}$ of the former.
The data are the observed distributions of single-cell RNA-seq
gene expression for donors $i = 1, \ldots, N$. 
We select only donors who have at least 100 cells.
Table~\ref{tab:data} shows 
the number of donors $N$ for each ordered pair of cell types and the list
of ligands and receptors. We standardize all
ligands and receptors across all cells and donors.
As before, we implement
Bayesian DDR using a linear function $f_{A,b}(x) = Ax + b$ and
Bayesian MLR.
Under both models, we 
generate 1000 MCMC samples with the first 500 samples discarded as
burn-in. We use a step size of $\eta = 10^{-4}$ for the MALA
algorithm, and a scale factor of $w = 100$ in the generalized log-likelihood function.


\begin{figure}[!t]
\begin{center}
\begin{tabular}{c}
\widgraph{0.5\textwidth}{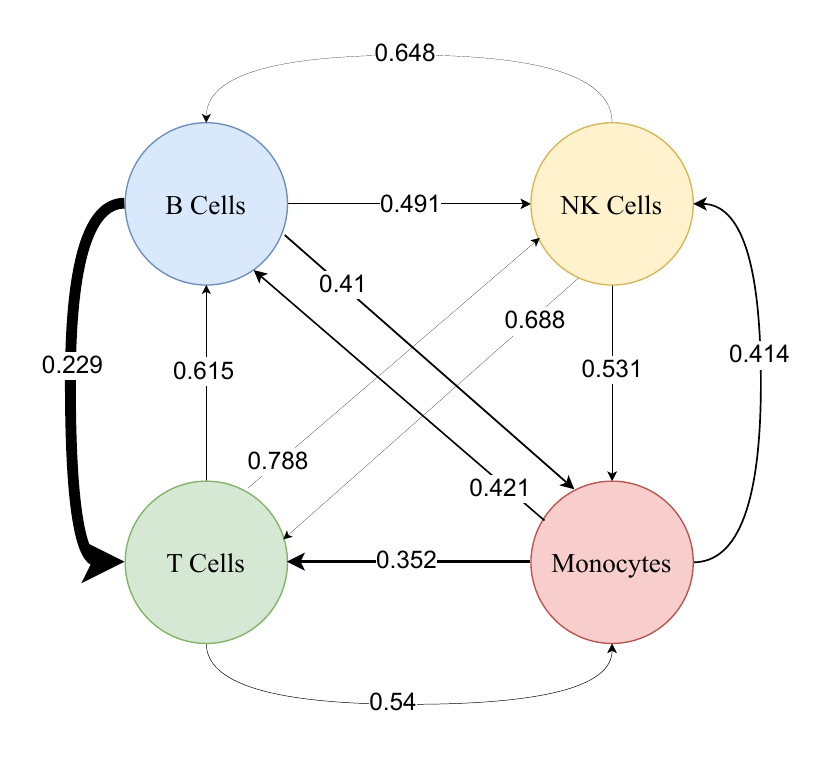} \\
(a) Graph discovered by RPEs~\eqref{eq:RPE}
\end{tabular}
    \begin{tabular}{cc}
 \widgraph{0.4\textwidth}{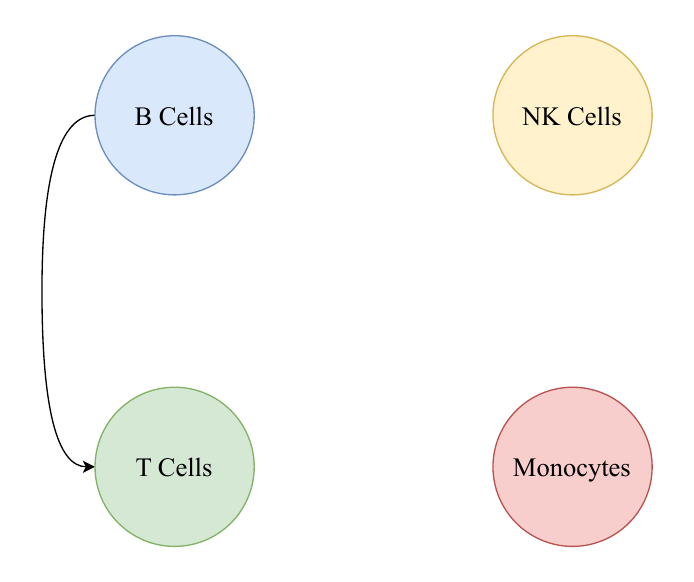} & \widgraph{0.45\textwidth}{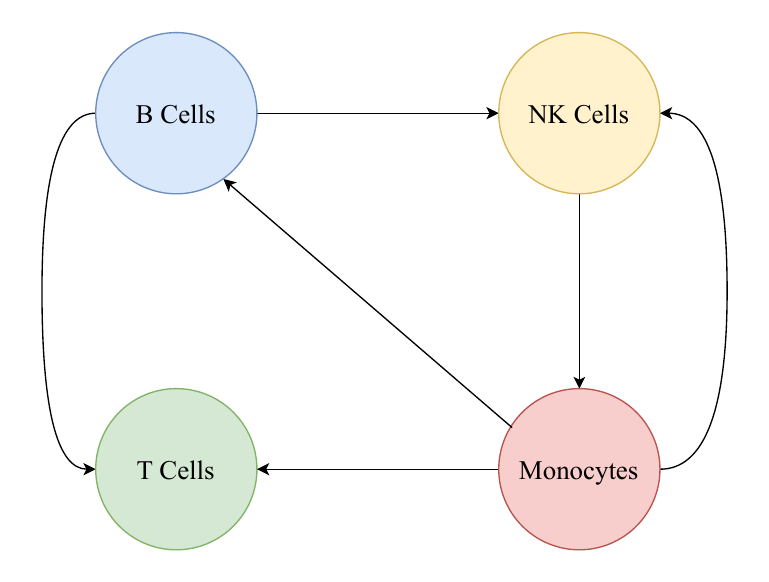} \\
\footnotesize{(b) Graph discovered by MLR} & \footnotesize{(c) Graph discovered by DDR}
  \end{tabular}
  \end{center}
  \vspace{-0.15 in}
  \caption{
  \footnotesize{ (a) Weighted graph using mean RPE under the Bayesian
    DDR model.  
    Edge thickness is controlled by an exponential kernel of 
    mean RPE,   showing thicker edges for lower mean RPE.  Estimated graphs under  (b) Bayesian MLR
    and, (c) Bayesian DDR.  Both are based on FDR and FNR
    control. 
}
} 
  \label{fig:FDR_graph}
\end{figure}




 As before, 
for visualization we use PCA on the response distribution of each
donor to obtain the projection matrix in two dimensions.  
Figures~\ref{fig:prediction_1}-\ref{fig:prediction_3} in Supplementary
Materials~\ref{subsec:graph_discovery_appendix} shows fitted densities $\Gbar$ under Bayesian DDR
for  ordered pairs of cell types.
 For comparison, as before, we also include  fitted means
under Bayesian MLR.
Bayesian DDR reports
entire fitted densities $\Gbar$. In contrast,
Bayesian MLR provides a single fitted value for $\ybar$,
stopping short  of representing the entire distributional information
of the response. 

Figure~\ref{fig:FDR_graph}a
shows an estimated graph of cell-cell communication, 
with edge weights based on mean RPE.
Lower mean RPE means a higher level of association. We use an
exponential kernel based on mean RPE to control the thickness of the edges. We refer the reader to  Figure~\ref{fig:RPEs} in Supplemental Materials~\ref{subsec:graph_discovery_appendix} for boxplots of RPE for all ordered pairs
of cell types. 
 For an alternative graph estimate, we
show in Figure~\ref{fig:FDR_graph}b and Figure~\ref{fig:FDR_graph}c
the graph based on FDR and FNR control as discussed before. 
 For comparison, we also report an estimated graph based on 
inference under Bayesian MLR (using the same FDR and FNR control).
For Bayesian MLR, we found the optimal threshold $\epsilon = 0.858$,
yielding posterior expected FDR and FNR of $(\pFDR, \pFNR) = (0.1, 0.039)$. 
Under the Bayesian DDR, the optimal threshold is $\epsilon = 0.974$,
corresponding to $\pFDR = 0.1, \pFNR = 0.108$.
The graph based on Bayesian MLR includes only one edge from B
to T cells. Under the Bayesian DDR model we find 6 edges: B to T
cells, B to NK cells, monocytes to T cells, monocytes to NK
cells, NK cells to monocytes, and monocytes to B cells. Comparing with
the mean RPE weights in Figure~\ref{fig:FDR_graph}a, we note that the
estimated graph under FDR and FNR control seems to select edges with
the mean RPE smaller than $0.5$.




 Besides facilitating the graph construction,
another benefit of choosing the linear function $f_{A,b}$ is the
opportunity to explore the 
interaction between ligands and receptors through
 inference on the corresponding coefficients in $A$. 
Figures~\ref{fig:T_to_B_coeff}-\ref{fig:NK_to_B_coeff} in
Supplementary Materials~\ref{subsec:graph_discovery_appendix}
 show boxplots for the linear regression coefficients $A$ for
all ordered pairs of cell types. 
From these summaries, one could 
determine which ligands are significant in
predicting specific receptors. We  discuss more details
in the Supplementary Materials~\ref{subsec:graph_discovery_appendix}. 

\subsection{Semi-Simulation}
\label{subsec:semi_simulation}
We  validate the proposed  graph discovery framework
 by setting up a simulation using simulation truths inspired by
the  OneK1K data. Particularly,  we use the real predictor
distributions taken from the data and a simulation  truth for
$f_{A,b}$ 
estimated from the real data  to set up a simulation truth for the
response distributions.
 Assume then that 
we have a posterior Monte Carlo sample with $T$
posterior samples under the Bayesian DDR model 
for all pairs of cell types.
 Recall that $\dt_e \in \{0,1\}$ is an indicator for including
edge $e$ in the inferred graph under Bayesian DDR
(Figure~\ref{fig:FDR_graph}c). 
We define the following empirical distributions: 
$$
    p_1 \propto\sum_{e\in \EE_0} \sum_{t=1}^T
      \sum_{i=1}^{d_{e1}}\sum_{j=1}^{d_{e2}}\delta_{A_{eijt}}
      I(\dt_e=1) \text{ and }
    p_0 \propto \sum_{e\in \EE_0} \sum_{t=1}^T
      \sum_{i=1}^{d_{e1}}\sum_{j=1}^{d_{e2}}\delta_{A_{eijt}}
      I(\dt_e=0), 
$$
where $A_{eijt}$ is  the $(i,j)$ element of
$A_e$ in the  $t$-th MCMC sample for cell types $e=(\ell,r)$.
 That is,  $p_1$ is the empirical distribution
of coefficients for included edges in Figure
\ref{fig:FDR_graph}b, and $p_0$ is the  same for missing edges. 
We consider the following three scenarios: 
\begin{figure}[!t]
\begin{center}
\begin{tabular}{ccc}
 \widgraph{0.25\textwidth}{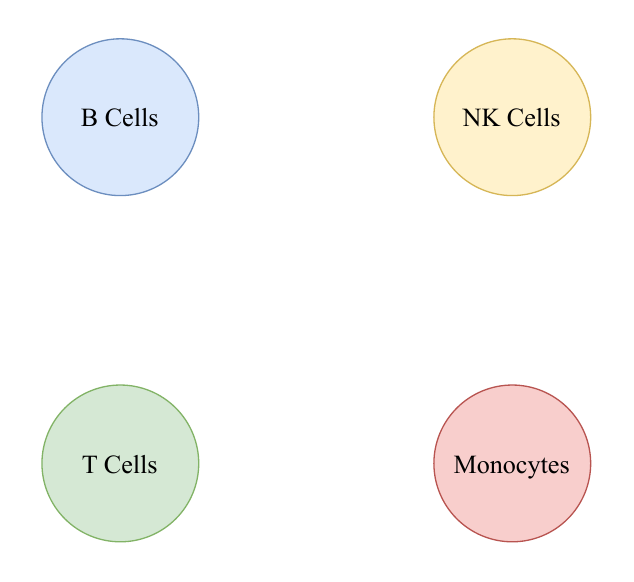}  & \widgraph{0.25\textwidth}{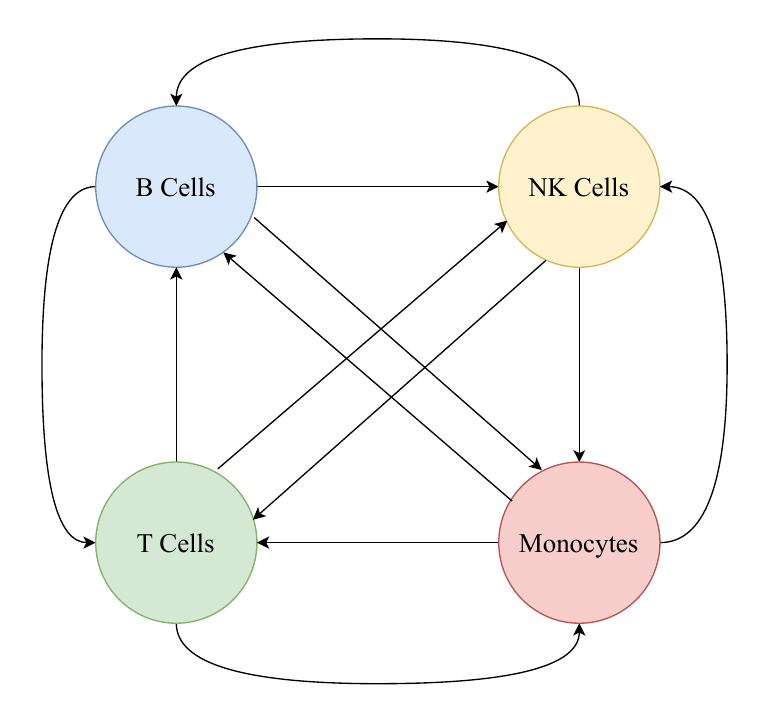} & \widgraph{0.25\textwidth}{figs/Yangsdata/Cell_Graph.pdf} \\
 & (a) Simulation Truth \\
 \widgraph{0.25\textwidth}{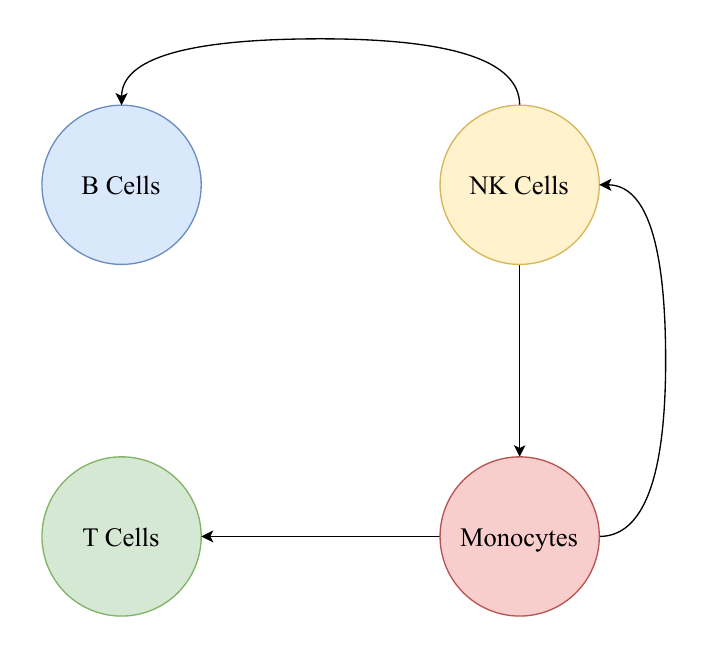} &
    \widgraph{0.25\textwidth}{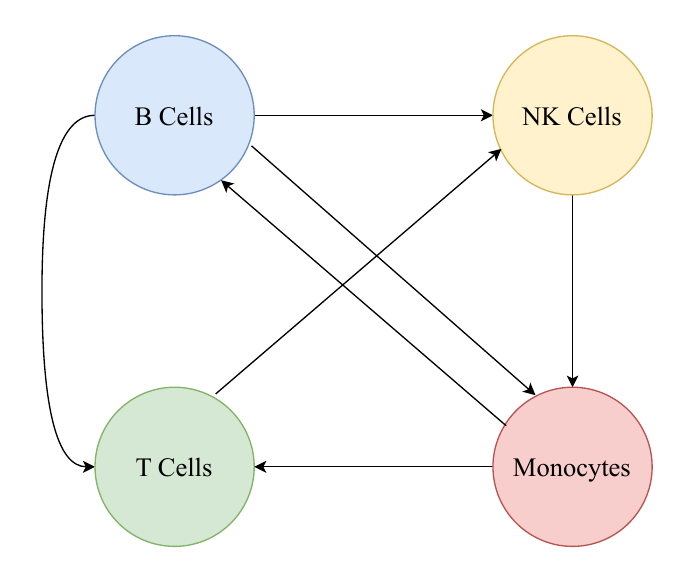} 
 & \widgraph{0.25\textwidth}{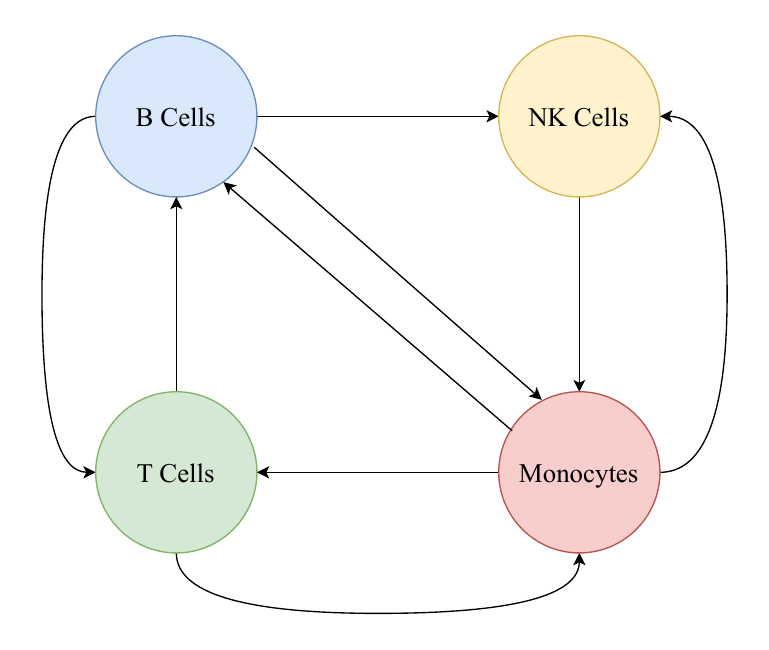}
\\
& (b) Estimated graphs  under DDR \\
\widgraph{0.25\textwidth}{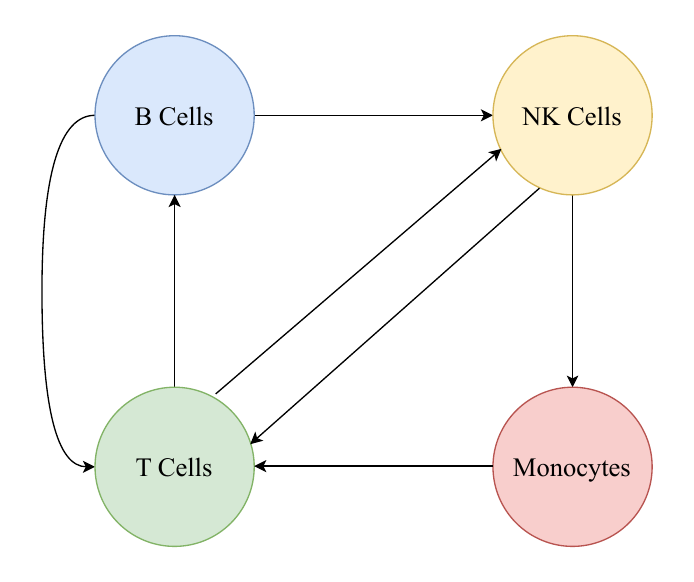}& \widgraph{0.25\textwidth}{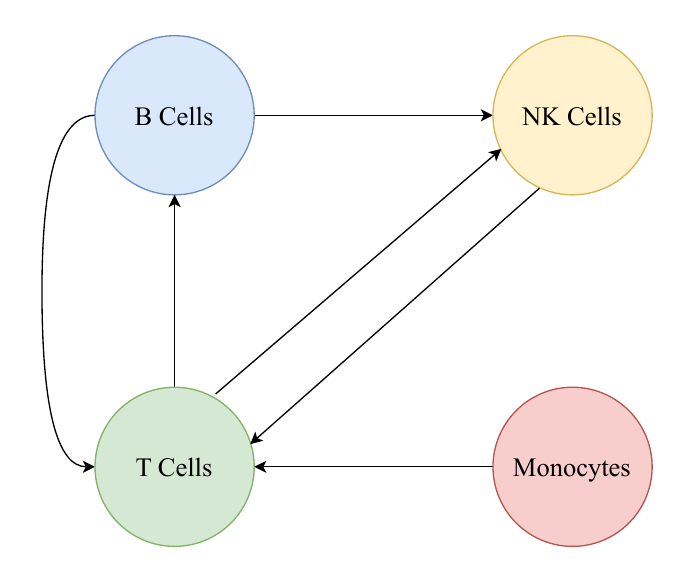}  & \widgraph{0.25\textwidth}{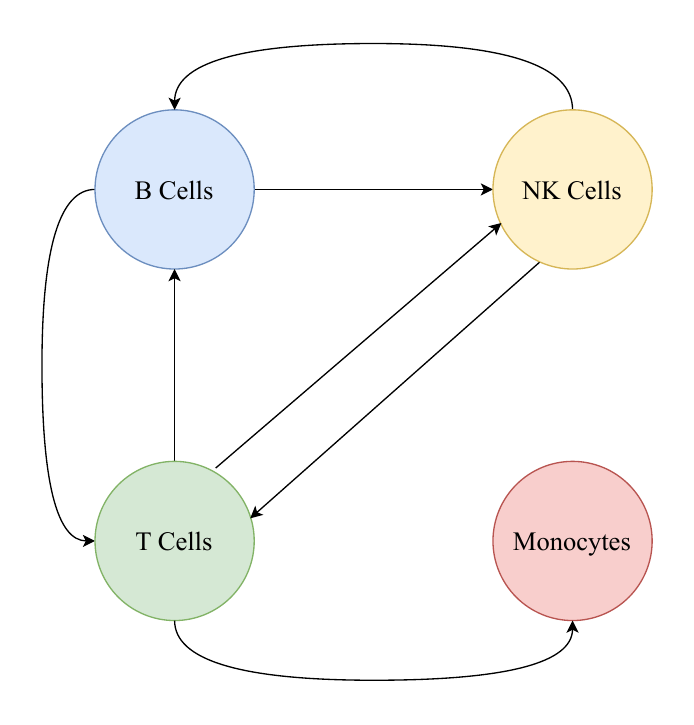}\\
& (c) Estimated graphs  under MLR  \\
 \underline{Scenario 1}&    \underline{Scenario 2} &    \underline{Scenario 3}

  \end{tabular}
  
  \end{center}
  \vspace{-0.15 in}
  \caption{
  \footnotesize{Simulated true graphs  (first row) and
    estimated graphs  under the Bayesian DDR model (second row) and the Bayesian MLR model (third row)  using FDR and
    FNR control  in the
    ``no edge''-\underline{scenario 1} (left column),
    ``full graph''-\underline{scenario 2} (center column), and ``sparse graph'' -\underline{scenario 3} (right column) scenarios.}}

  \label{fig:Semi_simulated_graph_null}
\end{figure}
Under \underline{scenario 1} (``no edge") we
sample  for each pair $e=(\ell, r)$ of cell types the
simulation truth 
$A_{eij}^\star\sim p_0$ and 
$b_{ej}^\star \sim p_0$ ($i=1,\ldots,
d_{e1},~j=1,\ldots,d_{e2}$).
Under \underline{scenario 2} (``full graph")
 we sample the simulation truth for all pairs $e=(\ell,r)$ as 
$A_{eij}^\star\sim p_1$ and
$b_{ej}^\star \sim p_1$ ($i=1,\ldots, d_{e1}),~j=1,\ldots,d_{e2}$).
Finally, under \underline{scenario 3} (``sparse graph") we use
the inferred graph under Bayesian DDR (Figure~\ref{fig:FDR_graph}b) as
the ground truth. For included edges $e$, we sample
 the simulation truth like under scenario 2. For excluded edges
we sample like under scenario 1. 
Under all three scenarios, after  generating the simulation truths
$(A_e,b_e)$  for the linear mappings, we simulate response
distributions   
$
\Gh_{ei} = \frac{1}{m_{ei}} \sum_{j=1}^{m_{ei}} \delta_{A^\star_e x_{ij} + b_e^\star + \epsilon_0}
$
where 
$x_{ij}$ is the observed ligands gene expression from the $i$-th
donnor in the OneK1K data, and $\epsilon_0 \sim \mathcal{N}(0, 0.01I)$
for $i=1, \ldots, N_{e}$.
 This concludes the definition of the simulation truth. 
We then implement inference under the Bayesian DDR and the Bayesian
MLR model,
with the same hyperparameters as in the earlier data analysis.
 Finally, we  use FDR and FNR control to report a graph. 

{\em Inference under Bayesian DDR.}
 Figure~\ref{fig:Semi_simulated_graph_null}
shows  the simulated true graphs and the reported graphs  under
the Bayesian DDR model for all three scenarios. 
Under \underline{scenario 1} (``no edge")
 we find the optimal threshold as $\epsilon=1$, 
yielding $(\pFDR, \pFNR) = (0.035, 0.064)$,
and 5 edges are  reported. 
Under \underline{scenario 2} (``full graph"), 
$\epsilon=1$ with $(\pFDR, \pFNR) = (0.083, 0.125)$, and 7
edges are discovered.
Finally, under \underline{scenario 3} (``sparse graph''),
$\epsilon= 1$, and $(\pFDR, \pFNR) =
(0.036, 0.17)$. In this setup, Bayesian DDR yields 9 edges, with 6
correct edges and 3 incorrect edges.  
%
%
 Boxplots of RPE for all ordered pairs of cell types,
and under all three scenarios
is shown in  
in Figure~\ref{fig:Semi_simulated_graph_null_RPEs} in Supplementary
Materials~\ref{subsec:semi_simulation_appendix}.
The relatively low RPE values show a good fit for almost all
ordered pairs of cell types across all three settings.
Additionally,
Figure~\ref{fig:coefficient_semi_simulation} in Supplementary
Materials~\ref{subsec:semi_simulation_appendix} shows
boxplots
 of 
squared errors for the parameters ($A_e,b_e$) 
for the three settings and all pairs of cell types. 
Overall we find  relatively small parameter errors in all cases.   




  

\begin{figure}[!t]
\begin{center}
    \begin{tabular}{c}
  \widgraph{1\textwidth}{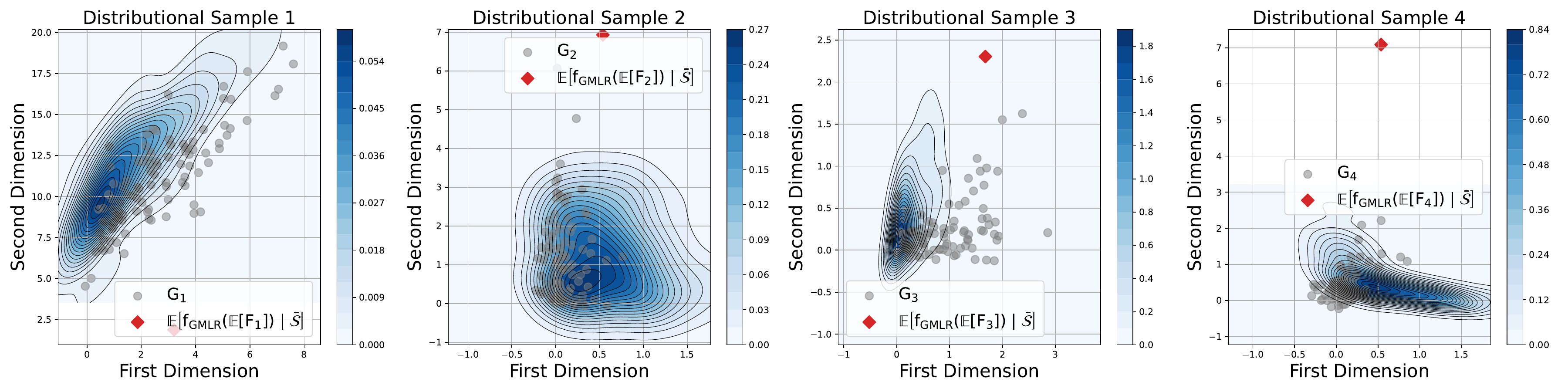} 
  \end{tabular}
  \end{center}
  \vspace{-0.2 in}
  \caption{
  \footnotesize{ Fitted densities $\Gbar$ under 
    Bayesian DDR with quadratic  mapping $f_{A,b}$ 
    fitted means under a Bayesian  multivariate nonlinear
    regression,  with quadratic mean function.
}
} 
  \label{fig:prediction_simulation_nonlinear}
\end{figure}

{\em Inference under Bayesian MLR:}
 Note that under all three scenarios the simulation truth is based
on a linear mapping of the atoms in $F_i$, implying 
a linear relationship also between
$\xbar_{i}=\frac{1}{m_{i1}}\sum_{j=1}^{m_{i1}} x_{ij}$ and
$\ybar_{i}=\frac{1}{m_{i2}}\sum_{j=1}^{m_{i2}} y_{ij}$.
 Those are the 
measurements reported by experimental platforms that record bulk gene
expression in samples.
 Naturally, this setting favors Bayesian MLR, 
which yields the following $(\pFDR, \pFNR)$ values: $(0.083,0.007)$
for \underline{scenario 1} (``no edge'');
$(0.007,0.008)$
for \underline{scenario 2} (``full graph"); and $(0.071,0.039)$
for \underline{scenario 3} (``sparse graph").
 The estimated graphs are shown in Figure 
\ref{fig:Semi_simulated_graph_null}. 
In all three cases, the FNR under DDR was higher, but remained within
practically useful bounds.

The previous simulation included a linear (in the atoms) simulation
truth, favoring MLR (since it also implies linearity of mean bulk gene
expression).
To highlight the limitations of linear regression for bulk gene
expression $(\xbar,\ybar)$ we consider another simulation, with: 
$
    \Fh_i = \frac{1}{m} \sum_{j=1}^{m} \delta_{x_{ij}}, \quad  \Gh_i = \frac{1}{m} \sum_{j=1}^{m} \delta_{ (A^\star x_{ij} +b^\star)^2 + \epsilon_0}, 
$
where $A^\star = \left[\begin{smallmatrix} 1 & 0 \\ 1 & 1 \end{smallmatrix}\right]$, $b^\star = [1,1]'$, $\epsilon_0 \sim \NN([0,0]',
0.01I)$, and $m=100$. We sample $x_{i1}, \ldots, x_{im}
\overset{i.i.d.}{\sim} \NN(\mu_i, \Sigma_i)$ with $\mu_i \sim
\NN([0,0]', I)$ and $\Sigma_i \sim IW(10, I)$ for $i = 1,
\ldots, N=10$. We implement Bayesian DDR with $f_{A,b}(x) =
(Ax + b)^2$ (element-wise quadratic), and  compare with
inference under  a Bayesian
multiple nonlinear regression (MNR) assuming $\ybar \mid \xbar,
A, \Sigma \sim \mathcal{MN}((\xbar A)^2, \Sigma, I)$ (element-wise
quadratic).
 Figure~\ref{fig:prediction_simulation_nonlinear} shows 
the fitted densities under  Bayesian DDR and fitted means under
the Bayesian MNR model. 
While inference under the Bayesian DDR can still often capture the
high-density regions,  
Bayesian MNR naturally gives inaccurate fits of the mean,
 due to model mis-specification. 
Similarly, in cases where $\mathbb{E}[X_i] = 0$
or $\mathbb{E}[X_i]$ does not exist
(e.g., Cauchy distribution), 
Bayesian MLR based on the mean statistic is not feasible, while
Bayesian DDR remains well-defined.



\section{Conclusion}
\label{sec:conclusion} e introduced Bayesian Density-Density
Regression (Bayesian DDR) as a novel framework for modeling complex
regression relationships between multivariate random
distributions.

 Some limitations remain.  First, the framework can suffer
from model misspecification due to the use of  the parametric
mapping $f_{\phi}$. However, in simulations and in the application we found
it sufficiently flexible.  The MALA sampling algorithm may not be
optimal for ensuring  a well mixing Markov chain, due to  the
non-convex nature of sliced
Wasserstein~\citep{tanguy2025properties}. Moreover, using the
conventional sliced Wasserstein distance for generalized likelihood
might not be suitable for all types of distributions with
non-Euclidean structures. For the graph discovery framework, the
$\epsilon$-inclusion probability is an ad-hoc choice, allowing
many alternatives. 

Future work will focus on addressing these limitations. For instance,
we may extend the framework to Bayesian nonparametric density-density
regression. In contrast to recent frequentist nonparametric
density-density regression
approaches~\citep{chen2023wasserstein,ghodrati2022distribution}, which
focus on one-dimensional regression, we  envision a Bayesian
nonparametric prior  for the regression function in multivariate
settings. One could  explore alternative geometric variants of the
SW distance for the generalized likelihood, and other
loss functions or policies for graph discovery. Finally, to deal with a more diverse set of distribution, Orlic-Wasserstein~\citep{sturm2011generalized} can be considered for the generalized likelihood.

\bigskip
\begin{center} {\large\bf SUPPLEMENTARY MATERIALS}
\end{center}

The Supplementary Materials include additional materials and visualizations, as
mentioned in the main text.

\bibliographystyle{apalike}

\bibliography{bib}

\newpage
\appendix
\begin{center} \Large Supplementary Materials for ``Bayesian
  Multivariate Density-Density Regression"
\end{center}

\section{Proofs}

\subsection{Proof of Theorem~\ref{theoremm:posterior_consistency}}
\label{subsec:proof:theoremm:posterior_consistency}

\begin{lemma}
\label{lemma:bound}
    For any $G_1 \in \mathcal{P}_2(\mathbb{R}^d)$ and $G_2 \in \mathcal{P}_2(\mathbb{R}^d)$, we have:
    \begin{align}
        SW_2(G_1,G_2)\leq W_2(G_1,G_2).
    \end{align}
\end{lemma}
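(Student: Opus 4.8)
The plan is to bound the integrand in the definition of $SW_2^2$ pointwise in $\theta$ by $W_2^2(G_1,G_2)$, and then integrate over the sphere. The crux is the elementary observation that each projection $f_\theta(x) = \theta^\top x$ is a $1$-Lipschitz map whenever $\theta \in \mathbb{S}^{d-1}$, since by Cauchy--Schwarz $|\theta^\top x - \theta^\top y| \le \|\theta\|_2 \|x-y\|_2 = \|x-y\|_2$. Thus projecting contracts distances, and we expect it to contract the Wasserstein distance as well.

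First I would fix $\theta \in \mathbb{S}^{d-1}$ and take an arbitrary coupling $\pi \in \Pi(G_1, G_2)$. Pushing $\pi$ forward through the map $(x,y) \mapsto (\theta^\top x, \theta^\top y)$ produces a probability measure on $\mathbb{R}\times\mathbb{R}$ whose marginals are exactly $\theta \sharp G_1$ and $\theta \sharp G_2$, so it is an admissible element of $\Pi(\theta \sharp G_1, \theta \sharp G_2)$. Consequently, using the definition of $W_2^2$ as an infimum over couplings together with the $1$-Lipschitz bound above,
\[
W_2^2(\theta \sharp G_1, \theta \sharp G_2) \;\le\; \int (\theta^\top x - \theta^\top y)^2 \, \mathrm{d}\pi(x,y) \;\le\; \int \|x-y\|_2^2 \, \mathrm{d}\pi(x,y).
\]
Taking the infimum over all $\pi \in \Pi(G_1, G_2)$ on the right-hand side then yields the pointwise inequality $W_2^2(\theta \sharp G_1, \theta \sharp G_2) \le W_2^2(G_1, G_2)$, valid for every $\theta \in \mathbb{S}^{d-1}$.

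Finally I would integrate this pointwise bound over $\theta \sim \mathcal{U}(\mathbb{S}^{d-1})$. Since the right-hand side $W_2^2(G_1,G_2)$ is constant in $\theta$, the expectation of the left-hand side is by definition exactly $SW_2^2(G_1,G_2)$, giving $SW_2^2(G_1, G_2) \le W_2^2(G_1, G_2)$; taking square roots finishes the proof. There is no substantial obstacle here: the only point requiring care is verifying that the pushforward of a coupling of $(G_1,G_2)$ is genuinely a coupling of the projected measures, which is a routine marginal computation, and finiteness of all integrals is guaranteed by the finite-second-moment assumption $G_1,G_2 \in \mathcal{P}_2(\mathbb{R}^d)$.
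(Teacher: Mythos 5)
Your proposal is correct and follows essentially the same route as the paper's proof: bound $W_2^2(\theta\sharp G_1,\theta\sharp G_2)$ by the projected cost of couplings of $(G_1,G_2)$, apply Cauchy--Schwarz with $\|\theta\|_2=1$, and integrate over the sphere. If anything, your version is slightly more careful, since you explicitly verify that the pushforward of a coupling is an admissible coupling of the projected measures --- a step the paper glosses over by asserting an equality of which only the ``$\leq$'' direction is needed.
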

\begin{proof}
From the definitions, we have:
    \begin{align}
    SW_2^2 (G_1,G_2) &= \mathbb{E}_{\theta \sim \mathcal{U}(\mathbb{S}^{d-1})}[W_2^2(\theta\sharp G_1,\theta \sharp G_2)] \\
    &=\mathbb{E}_{\theta \sim \mathcal{U}(\mathbb{S}^{d-1})}\left[\inf_{\pi \in \Pi(G_1,G_2)} \mathbb{E}_{(X,Y)\sim \pi}[ (\theta^\top X - \theta^\top Y)^2] \right] \\
    &=\mathbb{E}_{\theta \sim \mathcal{U}(\mathbb{S}^{d-1})}\left[\inf_{\pi \in \Pi(G_1,G_2)} \mathbb{E}_{(X,Y)\sim \pi}[ (\theta^\top ( X - Y))^2] \right] \\
    &\leq \mathbb{E}_{\theta \sim \mathcal{U}(\mathbb{S}^{d-1})}\left[\inf_{\pi \in \Pi(G_1,G_2)}  \mathbb{E}_{(X,Y)\sim \pi}[  \|\theta\|_2^2 \|X-Y\|_2^2 \right]   \quad \text{(Cauchy–Schwarz)}\\
    &= \mathbb{E}_{\theta \sim \mathcal{U}(\mathbb{S}^{d-1})}\left[\inf_{\pi \in \Pi(G_1,G_2)} \mathbb{E}_{(X,Y)\sim \pi}[\|X-Y\|_2^2 \right] \quad (\|\theta\|_2^2=1) \\
    &= W_2^2(G_1,G_2),
\end{align}
which completes the proof.
\end{proof}

    \begin{lemma}
\label{lemma:lipchitz}
Under Assumptions~\ref{assumption:secondmoment} and~\ref{assumption:continuity},  we have
\begin{align}
   \big| SW_2^2(f_\phi \sharp F, G) - SW_2^2(f_{\phi'} \sharp F, G) \big|
    \leq \gamma(\|\phi - \phi'\|_2),
\end{align}
for a  modulus of continuity $\gamma:\Re_+\to \Re_+$ such that $\lim_{t\to 0}\gamma(t)=0$.
\end{lemma}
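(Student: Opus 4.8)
The plan is to exploit that $SW_2$ is a genuine metric on $\PP_2(\Re^{d_2})$, which lets me convert the difference of squared distances into a product of a \emph{difference} term and a \emph{sum} term via the elementary identity $|a^2-b^2| = |a-b|\,(a+b)$ for $a,b\ge 0$, combined with the triangle inequality. Writing $a = SW_2(f_\phi \sharp F, G)$ and $b = SW_2(f_{\phi'}\sharp F, G)$, the reverse triangle inequality gives $|a-b| \le SW_2(f_\phi \sharp F, f_{\phi'}\sharp F)$, so that
\begin{align}
\big|SW_2^2(f_\phi\sharp F, G) - SW_2^2(f_{\phi'}\sharp F,G)\big| \le SW_2(f_\phi\sharp F, f_{\phi'}\sharp F)\,\big(SW_2(f_\phi\sharp F,G) + SW_2(f_{\phi'}\sharp F,G)\big).
\end{align}
It then suffices to bound the first factor by a quantity vanishing as $\phi'\to\phi$ and the second factor by a uniform constant.

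For the first factor I would invoke Lemma~\ref{lemma:bound} to pass from $SW_2$ to $W_2$, and then bound $W_2(f_\phi\sharp F, f_{\phi'}\sharp F)$ using the synchronous (identity) coupling of $F$ with itself, i.e.\ transporting $f_\phi(X)$ to $f_{\phi'}(X)$ for a common $X\sim F$. This yields $W_2^2(f_\phi \sharp F, f_{\phi'}\sharp F) \le \mathbb{E}_{X\sim F}\|f_\phi(X) - f_{\phi'}(X)\|_2^2 \le \omega(\|\phi - \phi'\|_2^2)$, where the last inequality is exactly the modulus-of-continuity hypothesis of Assumption~\ref{assumption:continuity}. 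Hence $SW_2(f_\phi\sharp F, f_{\phi'}\sharp F) \le \sqrt{\omega(\|\phi-\phi'\|_2^2)}$, which tends to $0$ as $\|\phi-\phi'\|_2 \to 0$.

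For the second factor I would again apply Lemma~\ref{lemma:bound} and then the triangle inequality for $W_2$ anchored at the Dirac mass $\delta_0$: $SW_2(f_\phi\sharp F, G) \le W_2(f_\phi\sharp F,\delta_0) + W_2(\delta_0, G) = \sqrt{\mathbb{E}_{X\sim F}\|f_\phi(X)\|_2^2} + \sqrt{\mathbb{E}_{Y\sim G}\|Y\|_2^2} \le \sqrt{C} + \sqrt{M_2}$, using the uniform second-moment bounds of Assumptions~\ref{assumption:continuity} and~\ref{assumption:secondmoment}, with the identical bound holding with $\phi'$ in place of $\phi$. Combining the three displays gives the bound $2(\sqrt{C}+\sqrt{M_2})\sqrt{\omega(\|\phi-\phi'\|_2^2)}$, so I would set $\gamma(t) = 2(\sqrt{C}+\sqrt{M_2})\sqrt{\omega(t^2)}$, which satisfies $\lim_{t\to 0}\gamma(t)=0$ since $\omega(t^2)\to 0$.

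The only genuinely delicate point is uniformity in $(F,G)$: every bound must hold for all $(F,G)\in\mathrm{supp}(P)$ with constants independent of the pair. This is secured because Assumptions~\ref{assumption:secondmoment} and~\ref{assumption:continuity} impose their moment and modulus bounds uniformly over $\mathrm{supp}(P)$, and the synchronous-coupling estimate is valid for every $F$. No extra regularity (such as absolute continuity) is needed, since both the coupling bound for $W_2$ and the metric/triangle manipulations hold for arbitrary measures in $\PP_2$.
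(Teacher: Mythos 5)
Your proof is correct and follows essentially the same route as the paper's: Lemma~\ref{lemma:bound} plus the (reverse) triangle inequality and the synchronous coupling to control the difference factor, the factorization $|a^2-b^2|=|a-b|\,(a+b)$, and the uniform moment bounds of Assumptions~\ref{assumption:secondmoment} and~\ref{assumption:continuity} to control the sum factor, arriving at the identical modulus $\gamma(t)=2(\sqrt{C}+\sqrt{M_2})\sqrt{\omega(t^2)}$. The one minor divergence is in bounding $SW_2(f_\phi\sharp F,G)\leq \sqrt{C}+\sqrt{M_2}$: you anchor the triangle inequality at the Dirac mass $\delta_0$, which is actually cleaner than the paper's direct coupling estimate (whose intermediate step $\|f_\phi(X)-Y\|_2^2\leq\|f_\phi(X)\|_2^2+\|Y\|_2^2$ is not valid pointwise, though its final bound agrees with yours).
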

\begin{proof}

From Lemma~\ref{lemma:bound}, we first bound the difference $| SW_2(f_\phi \sharp F, G) - SW_2(f_{\phi'} \sharp F, G) \big|$ for $\phi,\phi'\in \Phi$. Using the triangle inequality of $SW_2$, we have:
\begin{align}
\label{eq:SW_diff}
    &| SW_2(f_\phi \sharp F, G) - SW_2(f_{\phi'} \sharp F, G)|  \nonumber \\
    & \leq | SW_2(f_\phi \sharp F, f_{\phi'} \sharp F)+ SW_2(f_{\phi'} \sharp F, G) - SW_2(f_{\phi'} \sharp F, G)| \nonumber\\
    &= SW_2(f_\phi \sharp F, f_{\phi'} \sharp F) 
    \ \leq\ W_2(f_\phi \sharp F, f_{\phi'} \sharp F).
\end{align}
By definition of Wasserstein distance, 
\begin{align}
    W_2(f_\phi \sharp F, f_{\phi'} \sharp F)  &= \inf_{\pi \in \Pi(F,F)} \left( \mathbb{E}_{(X,Y)\sim \pi} \| f_\phi(X) - f_{\phi'}(Y) \|_2^2 \right)^{1/2}  \nonumber \\ 
    &\leq  \left( \mathbb{E}_{(X,Y)\sim (Id,Id)\sharp F} \| f_\phi(X) - f_{\phi'}(Y) \|_2^2 \right)^{1/2}  \nonumber\\
    &=\left( \mathbb{E}_{X\sim F} \| f_\phi(X) - f_{\phi'}(X) \|_2^2 \right)^{1/2} \nonumber \\
    &\leq \sqrt{\omega (\|\phi-\phi'\|_2^2)}
\end{align}
due to  Assumption~\ref{assumption:continuity}
Combining with~\eqref{eq:SW_diff}, we obtain
\begin{equation}
\label{eq:SW_Lip}
    \big| SW_2(f_\phi \sharp F, G) - SW_2(f_{\phi'} \sharp F, G) \big|
    \ \leq\ \sqrt{\omega (\|\phi-\phi'\|_2^2)}.
\end{equation}
We now extend~\eqref{eq:SW_Lip} to $SW_2^2$. By the factorization
\[
    \big| a^2 - b^2 \big| = |a-b|\, (a+b), 
\]
for $a>0,b>0$, 
\begin{align}
    &\big| SW_2^2(f_\phi \sharp F, G) - SW_2^2(f_{\phi'} \sharp F, G) \big| \nonumber \\
    &\leq  \big( SW_2(f_\phi \sharp F, G) - SW_2(f_{\phi'} \sharp F, G) \big) \big( SW_2(f_\phi \sharp F, G) + SW_2(f_{\phi'} \sharp F, G) \big) \nonumber\\
    &\quad \leq \sqrt{\omega (\|\phi-\phi'\|_2^2)} \, \big( SW_2(f_\phi \sharp F, G) + SW_2(f_{\phi'} \sharp F, G) \big).
\end{align}
Using $SW_2 \leq W_2$ and the standard bound
\begin{align}
    &W_2(f_\phi \sharp F, G) = \sqrt{\inf_{\pi \in \Pi(F,G} \mathbb{E}_{(X,Y) \sim \pi } [ \|f_\phi(X)-Y\|_2^2]}\\
    &\leq  \sqrt{\inf_{\pi \in \Pi(F,G)} \mathbb{E}_{(X,Y) \sim \pi } [ \|f_\phi(X)\|_2^2+\|Y\|_2^2]}\\
    &\leq \sqrt{\mathbb{E}_{X\sim F}\|f_\phi(X)\|_2^2} + \sqrt{\mathbb{E}_{Y\sim G}\|Y\|_2^2},
\end{align}
due to the inequality $\sqrt{a+b} \leq \sqrt{a} +\sqrt{b}$.
Together with  Assumption~\ref{assumption:secondmoment} and Assumption~\ref{assumption:continuity} yield
\begin{align}
    SW_2(f_\phi \sharp F, G) + SW_2(f_{\phi'} \sharp F, G)
    \leq \sqrt{C} + \sqrt{M_2}
\end{align}
Hence,
\begin{align}
    \big| SW_2^2(f_\phi \sharp F, G) - SW_2^2(f_{\phi'} \sharp F, G) \big|
    &\leq 2(\sqrt{C} + \sqrt{M_2}) \sqrt{\omega(\|\phi-\phi'\|_2^2)} \\& :=\gamma(\|\phi-\phi'\|_2),
\end{align}
where $\gamma(t) =2(\sqrt{C} + \sqrt{M_2})  \sqrt{\omega(t^2)}$. We can verify that $\lim_{t\to 0} \gamma(t)=0$. We complete the proof of the Lemma.
\end{proof}

\begin{lemma}[Uniform Law of Large Numbers]
\label{lemma:uniform_law}
Under assumptions~\ref{assumption:compact}, ~\ref{assumption:secondmoment}, and~\ref{assumption:continuity}, 
\begin{align}
    \sup_{\phi \in \Phi} |R_N(\phi) - R(\phi)| \overset{a.s}{\to} 0 \quad \text{as } N \to \infty.
\end{align}
\end{lemma}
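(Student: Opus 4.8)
The plan is to package a pointwise strong law of large numbers at finitely many parameter values together with a \emph{deterministic} equicontinuity bound inherited from Lemma~\ref{lemma:lipchitz}, using compactness of $\Phi$ to interpolate between the net points. First I would fix $\phi$ and observe that the summands $SW_2^2(f_\phi \sharp F_i, G_i)$ are i.i.d.\ across $i=1,\dots,N$, and that by the estimate derived inside the proof of Lemma~\ref{lemma:lipchitz} (combining $SW_2 \le W_2$ with Assumptions~\ref{assumption:secondmoment} and~\ref{assumption:continuity}) each summand is bounded by the deterministic constant $(\sqrt{C}+\sqrt{M_2})^2$. Hence each term is integrable and Kolmogorov's SLLN yields $R_N(\phi) \to R(\phi)$ almost surely for every fixed $\phi$.

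The crucial structural point is that the modulus $\gamma$ in Lemma~\ref{lemma:lipchitz} does not depend on $(F,G)$. Consequently the same bound survives both averaging over the sample and integrating against $P$, so that for all $\phi,\phi' \in \Phi$,
\begin{align}
|R_N(\phi) - R_N(\phi')| \le \gamma(\|\phi - \phi'\|_2), \qquad
|R(\phi) - R(\phi')| \le \gamma(\|\phi - \phi'\|_2).
\end{align}
Thus the entire family $\{R_N\}_N$ together with the limit $R$ is equicontinuous with a common, deterministic modulus — this is exactly what converts pointwise convergence into uniform convergence.

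With these two ingredients I would run the standard $\epsilon$-net argument. Fix $\delta > 0$; since $\gamma(t) \to 0$ as $t \to 0$, choose $\rho > 0$ with $\gamma(\rho) < \delta/3$. By Assumption~\ref{assumption:compact}, $\Phi$ is compact, so finitely many centers $\phi_1,\dots,\phi_K$ have $\rho$-balls covering $\Phi$. For arbitrary $\phi$, pick $\phi_k$ with $\|\phi - \phi_k\|_2 < \rho$ and apply the triangle inequality,
\begin{align}
|R_N(\phi) - R(\phi)|
\le |R_N(\phi) - R_N(\phi_k)| + |R_N(\phi_k) - R(\phi_k)| + |R(\phi_k) - R(\phi)|
\le \tfrac{2\delta}{3} + \max_{1 \le k \le K} |R_N(\phi_k) - R(\phi_k)|.
\end{align}
By the first step applied to each of the finitely many $\phi_k$, the maximum on the right tends to $0$ almost surely, hence is eventually below $\delta/3$; taking the supremum over $\phi$ then gives $\sup_{\phi} |R_N(\phi) - R(\phi)| \le \delta$ eventually, almost surely. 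Intersecting over a sequence $\delta \downarrow 0$ yields $\sup_\phi |R_N(\phi) - R(\phi)| \xrightarrow{a.s.} 0$.

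The main obstacle is really the second step: one must check that the modulus $\gamma$ furnished by Lemma~\ref{lemma:lipchitz} is genuinely uniform over $(F,G) \in \mathrm{supp}(P)$ — which is precisely why the bounds $C$ and $M_2$ are posited as uniform constants — so that $\gamma$ transfers unchanged to both the empirical average $R_N$ and the expectation $R$. Once this uniform equicontinuity is secured, the remainder is the routine compactness-plus-pointwise-SLLN packaging.
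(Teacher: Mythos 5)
Your proposal is correct and follows essentially the same route as the paper's own proof: pointwise SLLN at the points of a finite $\delta$-net (justified by the boundedness/integrability coming from Assumptions~\ref{assumption:secondmoment} and~\ref{assumption:continuity}), the $(F,G)$-uniform modulus from Lemma~\ref{lemma:lipchitz} transferred to both $R_N$ and $R$, and a triangle-inequality interpolation over the net using compactness of $\Phi$. The only cosmetic difference is in how the almost-sure statement is finalized (you intersect over $\delta \downarrow 0$, the paper spells out the $\limsup$-of-events/union-bound argument), which is immaterial.
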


\begin{proof}
Since $\Phi$ is compact by Assumption~\ref{assumption:compact}, there exists a $\delta$-net $\{\phi_1, \ldots, \phi_K\} \subset \Phi$ such that for any $\phi \in \Phi$, there exists $\phi^{(k)} \in \{\phi_1, \ldots, \phi_K\}$ with 
\begin{align}
\|\phi - \phi^{(k)}\| \leq \delta.
\end{align}

Using Lemma~\ref{lemma:lipchitz}, we have for each $i = 1, \ldots, N$:
\begin{align}
    |SW_2^2(f_\phi \sharp F_i, G_i) - SW_2^2(f_{\phi^{(k)}} \sharp F_i, G_i)| \leq \gamma(\delta),
\end{align}
where $\lim_{\delta\to 0}\gamma(\delta)=0$.
Averaging over $i$ and taking expectation, this implies
\begin{align}
   &\frac{1}{N}\sum_{i=1}^N |SW_2^2(f_\phi \sharp F_i, G_i) - SW_2^2(f_{\phi^{(k)}} \sharp F_i, G_i)| \leq \gamma(\delta), \\
   & \mathbb{E}[|SW_2^2(f_\phi \sharp F_i, G_i) - SW_2^2(f_{\phi^{(k)}} \sharp F_i, G_i)|] \leq \gamma(\delta)
\end{align}
By  Jensen's inequality, we have:
\begin{align}
   &\left|\frac{1}{N}\sum_{i=1}^N SW_2^2(f_\phi \sharp F_i, G_i) - \frac{1}{N}\sum_{i=1}^N SW_2^2(f_{\phi^{(k)}} \sharp F_i, G_i) \right| \leq \gamma(\delta), \\
   & \left|\mathbb{E}[ SW_2^2(f_\phi \sharp F_i, G_i)] - \mathbb{E}[ SW_2^2(f_{\phi^{(k)}} \sharp F_i, G_i) \right|] \leq \gamma(\delta),
\end{align}
which is equivalent to
\begin{align}
    |R_N(\phi) - R_N(\phi^{(k)})| \leq \gamma(\delta), \quad |R(\phi) - R(\phi^{(k)})| \leq \gamma(\delta).
 \end{align}
Using triangle inequality, we have:
\begin{align}
    &\sup_{\phi \in \Phi} |R_N(\phi) - R(\phi)|  \nonumber \\
    &\leq \sup_{\phi \in \Phi} \left(|R_N(\phi) -R_N(\phi^{(k)}))| +|R_N(\phi^{(k)}) - R(\phi)|  \nonumber \right)\\ 
    &\leq \nonumber   \sup_{\phi \in \Phi} \left( |R_N(\phi) -R_N(\phi^{(k)}))|+ |R_N(\phi^{(k)}) - R(\phi^{(k)})|  +|R(\phi^{(k)}) - R(\phi)| \right) \\
    &\leq \max_{k = 1,\ldots,K} |R_N(\phi^{(k)}) - R(\phi^{(k)})| + 2 \gamma( \delta). \label{eq:key-net-bound}
\end{align}


Let
$
Z_N \;:=\; \sup_{\phi\in\Phi}\,|R_N(\phi)-R(\phi)| $ and $
A_N^\varepsilon \;:=\; \{\,Z_N>\varepsilon\,\},\ \ \varepsilon>0 .
$
Almost sure convergence $Z_N\overset{a.s}{\to} 0$ is equivalent to
\begin{align}
\forall\,\varepsilon>0:\quad \mathbb{P}\!\left(\limsup_{N\to\infty} A_N^\varepsilon\right)=0,
\end{align}
where $\displaystyle \limsup_{N\to\infty} A_N^\varepsilon:=\bigcap_{m=1}^\infty\bigcup_{N\ge m}A_N^\varepsilon$. For a fixed $\varepsilon>0$,  we choose $\delta>0$ so small that $2\gamma(\delta)<\varepsilon/2$. Then \eqref{eq:key-net-bound} implies the event inclusion
\begin{align}
A_N^\varepsilon
\;\subseteq\;
\bigcup_{k=1}^K
\left\{\,|R_N(\phi^{(k)})-R(\phi^{(k)})|>\frac{\varepsilon}{2}\right\}
=: \bigcup_{k=1}^K B_{N,k}^{\varepsilon/2}.
\end{align}
Therefore,
\begin{align}
\limsup_{N\to\infty} A_N^\varepsilon
\;\subseteq\;
\bigcup_{k=1}^K \left( \limsup_{N\to\infty} B_{N,k}^{\varepsilon/2} \right).
\end{align}

By the strong law of large numbers (using Assumption~\ref{assumption:secondmoment}) applied to each fixed $\phi^{(k)}$,
\begin{align}
|R_N(\phi^{(k)})-R(\phi^{(k)})|\overset{a.s}{\to}0,
\end{align}
hence, for every $k$ and every $\eta>0$,
\begin{align}
\mathbb{P}\!\left(\limsup_{N\to\infty} \{\,|R_N(\phi^{(k)})-R(\phi^{(k)})|>\eta\,\}\right)=0.
\end{align}
Taking $\eta=\varepsilon/2$ and using the finite union bound, we get
\begin{align}
\mathbb{P}\!\left(\limsup_{N\to\infty} A_N^\varepsilon\right)
\;\le\;
\sum_{k=1}^K
\mathbb{P}\!\left(\limsup_{N\to\infty} B_{N,k}^{\varepsilon/2}\right)
=0.
\end{align}

Since this holds for every $\varepsilon>0$, it follows that
\begin{align}
\forall\,\varepsilon>0:\quad \mathbb{P}\!\left(\limsup_{N\to\infty} A_N^\varepsilon\right)=0,
\end{align}
which is equivalent to $Z_N\to 0$ almost surely, i.e.,
\begin{align}
\sup_{\phi\in\Phi}|R_N(\phi)-R(\phi)| \overset{a.s}{\to} 0.
\end{align}

\end{proof}

 Back to the main proof of posterior consistency, for all $\epsilon>0$, we define the “bad” set
\begin{align}
S_\epsilon(\phi_0) := \{\phi \in \Phi : \|\phi-\phi_0\|_2 \ge \epsilon\},
\end{align}
where $\phi_0$ is defined in Assumption~\ref{assumption:identifiability}. We want to show
\begin{align}
\pi_N(S_\epsilon(\phi_0)) \overset{a.s}{\to} 0,
\end{align}
for all $\epsilon>0$. For $\varepsilon>0$, let
\begin{align}
A_N := \big\{\pi_N(S_\epsilon(\phi_0)) > \varepsilon\big\}.
\end{align}
By the definition of almost sure convergence in terms of the limit superior of events, it suffices to show
\begin{align}
\mathbb{P}\Big(\limsup_{N\to\infty} A_N \Big) = 0,
\end{align}
where
$
\limsup_{N\to\infty} A_N := \bigcap_{m=1}^\infty \bigcup_{N \ge m} A_N.
$
By the uniform law of large numbers (Lemma~\ref{lemma:uniform_law}), for any $\eta>0$ define the event
\begin{align}
E_N := \Big\{\sup_{\phi\in\Phi}|R_N(\phi)-R(\phi)| \le \eta \Big\}.
\end{align}
Then $E_N$ occurs eventually almost surely. On $E_N$:

\begin{itemize}
\item For $\phi \in S_\epsilon(\phi_0)$,
\begin{align}
R_N(\phi) \ge R(\phi) - \eta \ge R(\phi_0) + \Delta - \eta,
\end{align}
where $ \Delta=\Delta(\epsilon) \;=\; \inf_{\{\phi \in \Phi : \|\phi - \phi_0\|_2 \geq \epsilon\}} \big( R(\phi) - R(\phi_0) \big) \;>\; 0$ (Assumption~\ref{assumption:identifiability}).
\item For $\phi \in B_\delta(\phi_0) := \{\phi : \|\phi-\phi_0\|_2<\delta\}$, choose $\delta$ small so that
\begin{align}
\sup_{\phi \in B_\delta(\phi_0)} (R(\phi)-R(\phi_0)) \le \eta,
\end{align}
then
\begin{align}
R_N(\phi) \le R(\phi) + \eta \le R(\phi_0) + 2\eta.
\end{align}
\end{itemize}

On $E_N$, the posterior mass ratio
\begin{align}
    \frac{\pi_N(S_\epsilon(\phi_0))}{\pi_N(B_\delta(\phi_0))}
    &= \frac{\int_{S_\epsilon(\phi_0)} \exp(-w N R_N(\phi)) \, d\pi(\phi)}
           {\int_{B_\delta(\phi_0)} \exp(-w N R_N(\phi)) \, d\pi(\phi)} \notag \\
    &\leq \frac{\pi(\Phi) \, \exp(-w N (R(\phi_0) + \Delta - \eta))}{\pi(B_\delta(\phi_0)) \, \exp(-w N (R(\phi_0) + 2\eta))} \notag \\
    &= \frac{1}{\pi(B_\delta(\phi_0))} \exp(-w N (\Delta - 3\eta)),
\end{align}
which implies
\begin{align}
    \pi_N(S_\epsilon(\phi_0)) \leq C \exp(-w N (\Delta - 3\eta)), 
\end{align}
where $C=\frac{1}{\pi(B_\delta(\phi_0))} >0$ due to Assumption~\ref{assumption:prior}.
Choose $\eta < \Delta/4$ so that $\Delta-3\eta > \Delta/4>0$., then the right-hand side decays exponentially in $N$, implying that for almost every sample path $\omega$ there exists $N_0(\omega)$ such that for all $N \ge N_0(\omega)$,
\begin{align}
\pi_N(S_\epsilon(\phi_0)) < \varepsilon.
\end{align}

By the definition of $\limsup$ of events,
\begin{align}
\limsup_{N\to\infty} A_N = \bigcap_{m=1}^\infty \bigcup_{N\ge m} \{\pi_N(S_\epsilon(\phi_0)) > \varepsilon\} = \emptyset \quad \text{a.s.}
\end{align}
Hence,
\begin{align}
\mathbb{P}\Big(\limsup_{N\to\infty} A_N \Big) = 0.
\end{align}
Since the choice of $\varepsilon>0$ is arbitrary, we conclude
\begin{align}
\pi_N(S_\epsilon(\phi_0)) \xrightarrow{\text{a.s.}} 0.
\end{align}
Thus the posterior concentrates around $\phi_0$ for any $\epsilon>0$, proving Bayesian posterior consistency.

\begin{proposition}[Linear function]
    \label{proposition:linear_function}
    Let $f_{\phi}(x)=Ax+b$ ($\phi=(A,b)$), the following holds:
\begin{align}
    \mathbb{E}_{X\sim F}\|f_\phi(X) - f_{\phi'}(X)\|_2^2 \leq  (M_1+1)\|\phi -\phi'\|_2^2,
\end{align}
 for all $(F,G) \in \mathrm{supp}(P)$, $\phi,\phi' \in \Phi$, and $M_1>0$ in Assumption~\ref{assumption:secondmoment}. In addition, for all $(F,G) \in \mathrm{supp}(P)$ and $\phi \in \Phi$,
\begin{align}
    \mathbb{E}_{X \sim F} [\|f_\phi(X)\|_2^2 ]\leq C,
\end{align}
for a constant $ C < \infty$.
\end{proposition}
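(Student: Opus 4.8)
The plan is to reduce both inequalities to one pointwise estimate by writing the affine map in homogeneous form. For $x \in \Re^{d_1}$ I would introduce the augmented vector $\tilde{x} = (x^\top, 1)^\top \in \Re^{d_1+1}$ and, for $\phi = (A,b)$, the augmented matrix $[A\,|\,b] \in \Re^{d_2 \times (d_1+1)}$ obtained by appending $b$ to $A$ as an extra column, so that $f_\phi(x) = Ax + b = [A\,|\,b]\,\tilde{x}$. The key bookkeeping observation is that $\|[A\,|\,b]\|_F^2 = \|A\|_F^2 + \|b\|_2^2 = \|\phi\|_2^2$, since the Euclidean norm on the stacked parameter $\phi$ is by definition the sum of squares of all entries of $A$ and $b$. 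Throughout I would use the elementary submultiplicative bound $\|Mv\|_2 \le \|M\|_{\mathrm{op}}\|v\|_2 \le \|M\|_F \|v\|_2$.

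For the first inequality, I would subtract the two maps, $f_\phi(x) - f_{\phi'}(x) = [A-A'\,|\,b-b']\,\tilde{x}$, apply the norm bound, and use $\|[A-A'\,|\,b-b']\|_F^2 = \|\phi-\phi'\|_2^2$ to obtain, pointwise in $x$,
\begin{align}
    \|f_\phi(x) - f_{\phi'}(x)\|_2^2 \le \|\phi - \phi'\|_2^2\,(\|x\|_2^2 + 1).
\end{align}
Taking $\mathbb{E}_{X\sim F}$ and invoking the bound $\mathbb{E}_{X\sim F}\|X\|_2^2 \le M_1$ from Assumption~\ref{assumption:secondmoment} then yields the stated constant $(M_1+1)$.

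For the second inequality, the same pointwise argument applied directly to $f_\phi(x) = [A\,|\,b]\,\tilde{x}$ gives $\|f_\phi(x)\|_2^2 \le \|\phi\|_2^2(\|x\|_2^2+1)$. Here I would bring in Assumption~\ref{assumption:compact}: compactness of $\Phi$ ensures $B := \sup_{\phi \in \Phi}\|\phi\|_2^2 < \infty$, and taking expectations with Assumption~\ref{assumption:secondmoment} again gives $\mathbb{E}_{X\sim F}\|f_\phi(X)\|_2^2 \le B(M_1+1) =: C$, uniformly over $\phi \in \Phi$ and $(F,G) \in \mathrm{supp}(P)$.

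Overall this is a routine calculation with no real obstacle; the only point I would be careful about is identifying the Euclidean norm on $\phi = (A,b)$ with the Frobenius norm of the augmented matrix, since that is exactly what produces the clean constant $(M_1+1)$. A cruder split via $\|u+v\|_2^2 \le 2\|u\|_2^2 + 2\|v\|_2^2$ would still work but yield the larger constant $2\max(M_1,1)$.
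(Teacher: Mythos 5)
Your proposal is correct and is essentially the paper's own argument: the pointwise bound $\|(A-A')x+(b-b')\|_2^2 \le (\|A-A'\|_F^2+\|b-b'\|_2^2)(\|x\|_2^2+1)$, which the paper obtains by Cauchy--Schwarz, is exactly your augmented-matrix estimate $\|[A-A'\,|\,b-b']\,\tilde{x}\|_2^2 \le \|[A-A'\,|\,b-b']\|_F^2\|\tilde{x}\|_2^2$, followed by the same use of Assumption~\ref{assumption:secondmoment} and, for the second claim, compactness of $\Phi$. Your version is slightly cleaner in making the constant $C = B(M_1+1)$ explicitly uniform over $\Phi$ via $B=\sup_{\phi\in\Phi}\|\phi\|_2^2$, whereas the paper states $C$ with $\phi$-dependent notation before appealing to compactness, but this is a presentational difference only.
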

\begin{proof}
    Using the Cauchy–Schwarz inequality,
    \begin{align}
         \mathbb{E}_{X\sim F}[\|f_\phi(X) - f_{\phi'}(X)\|_2^2] &=  \mathbb{E}_{X\sim F}[\|AX+b-A'X+b'\|_2^2 ]\nonumber \\
         &=\mathbb{E}_{X\sim F}[\|(A-A')X+(b-b')\|_2^2] \nonumber \\
         &\leq \mathbb{E}_{X\sim F}[(\|A-A'\|_F^2 + \|b-b'\|_2^2)( \|X\|_2^2+1)] \nonumber \\
         &\leq  (\|A-A'\|_F^2 + \|b-b'\|_2^2)(\mathbb{E}_{X\sim F}[\|X\|_2^2+1]) \nonumber \\
         &= (\|A-A'\|_F^2 + \|b-b'\|_2^2) (M_1+1) \nonumber \\
         &= (M_1+1) \|\phi-\phi'\|_2^2.
    \end{align}
    due to Assumption~\ref{assumption:secondmoment}. 

    We further have:
    \begin{align}
        \mathbb{E}_{X \sim F} [\|f_\phi(x)\|_2^2 ] &=  \mathbb{E}_{X \sim F} [\|AX+b\|_2^2 ] \nonumber\\
        &\leq \mathbb{E}_{X \sim F} [(\|A\|_F^2+\|b\|_2^2)(\|X\|_2^2+1) ] \\ 
        &\leq (\|A\|_F^2+\|b\|_2^2)(M_1+1),
    \end{align}
     due to Assumption~\ref{assumption:secondmoment}.  Since the parameter space $\Phi$ is compact (Assumption~\ref{assumption:compact}), $(\|A\|_F^2+\|b\|_2^2)$ is bounded, hence,
     \begin{align}
          \mathbb{E}_{X \sim F} [\|f_\phi(x)\|_2^2 ] \leq C,
     \end{align}
     for a constant $C=(\|A\|_F^2+\|b\|_2^2)(M_1+1)>0$, which completes the proof.
\end{proof}

\color{black}

\section{MCMC}
\label{sec:MCMC}
We present the detail MCMC sampler for Bayesian DDR with the linear regression function:
$\lambda_{ij}^2 \mid A_{ij}, \tau \sim \text{Inverse-Gamma}\left(1,
  \frac{1}{\nu_{ij}} + \frac{A_{ij}^2}{2 \tau^2}\right)$,
$\nu_{ij} \mid \lambda_{ij} \sim \text{Inverse-Gamma}\left(\frac{1}{2}, 1 + \frac{1}{\lambda_{ij}^2}\right)$,
$\tau^2 \mid A, \lambda_{ij}, \zeta \sim
\text{Inverse-Gamma}\left(\frac{d_1 + d_2}{2}, \frac{1}{\zeta} +
  \sum_{i=1}^{d_1}\sum_{j=1}^{d_2} \frac{A_{ij}^2}{2
    \lambda_{ij}^2}\right)$, 
$\zeta \mid \tau \sim \text{Inverse-Gamma}\left(\frac{1}{2}, 1 +
  \frac{1}{\tau^2}\right)$,
and $(A, b) \sim p(A, b \mid \lambda^2,\tau^2, \mathcal{S})$.

\section{Graph Discovery}
\label{subsec:graph_discovery_appendix}

In Figures~\ref{fig:prediction_1}-\ref{fig:prediction_3} we show
fitted means  under  Bayesian MLR and  
fitted densities  under  Bayesian DDR for all ordered pairs of
cell types.
As before, we use PCA for the response distribution of each donor to
obtain  a 2-dimensional projection. 
We observe that the predictions align with the reported RPEs
shown in Figure~\ref{fig:RPEs} in the main text. Notably, Bayesian DDR
provides more precise fits compared to Bayesian MLR, especially in
cases where the response distributions are multimodal or exhibit high
uncertainty.

\begin{figure}[!t]
  \begin{center}
    \begin{tabular}{c}
      \widgraph{1\textwidth}{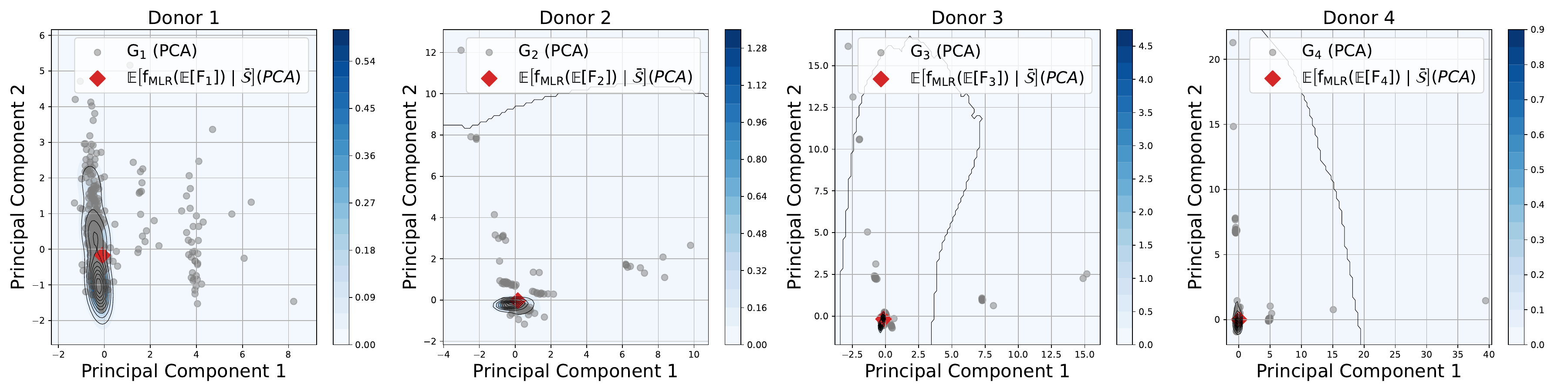}
      \\ (a) B to T cells \\
      \widgraph{1\textwidth}{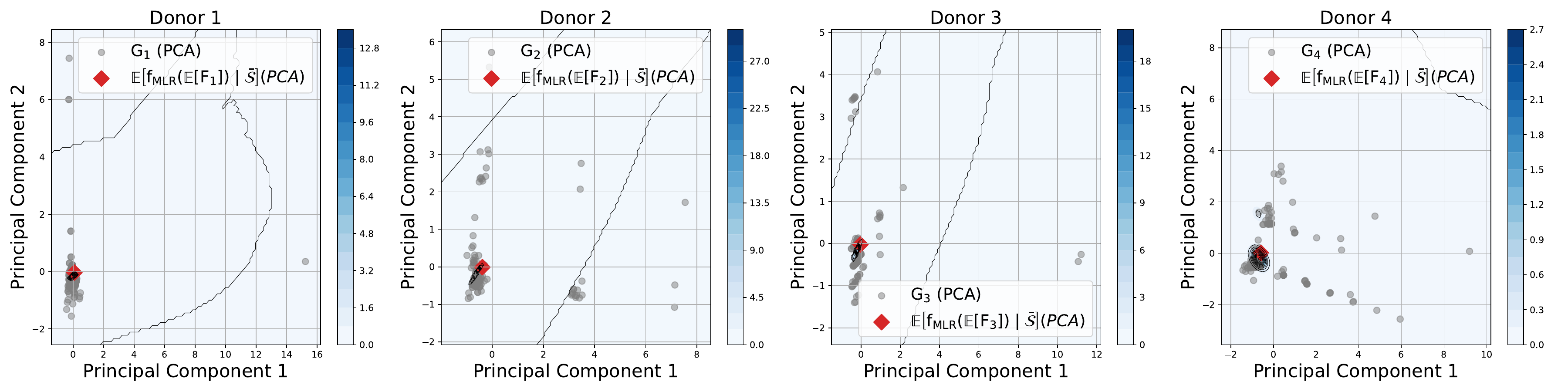}
      \\ (b) T to B cells \\
      \widgraph{1\textwidth}{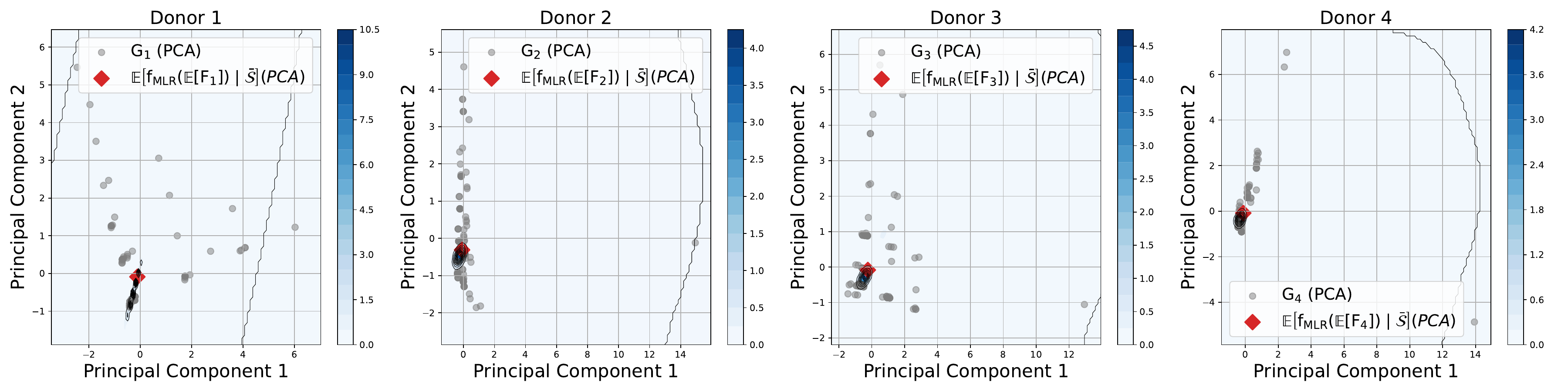}
      \\ (c) T cells to monocytes \\
      \widgraph{1\textwidth}{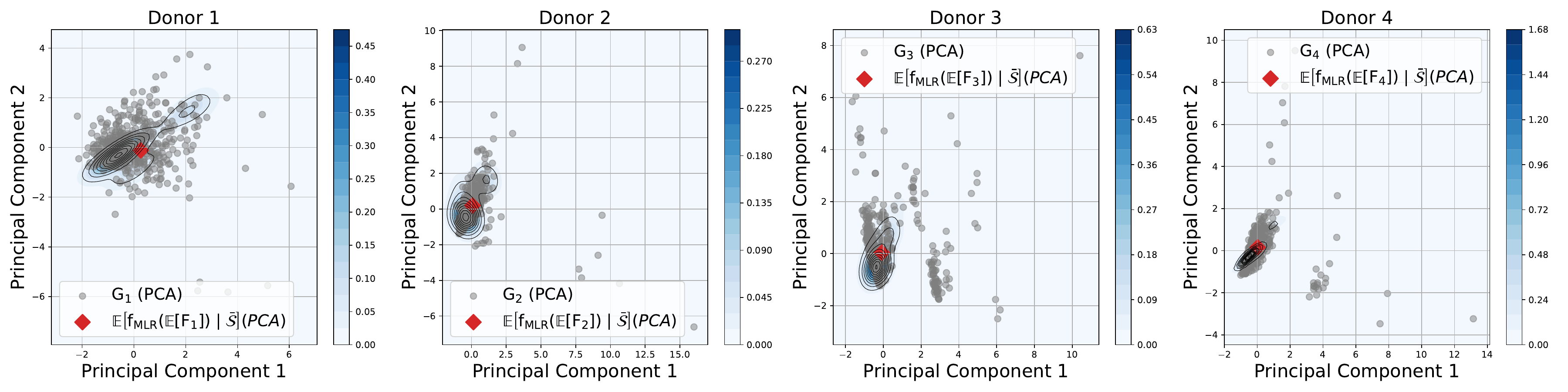}
      \\ (d) monocytes to T cells
    \end{tabular}
  \end{center} 
  \caption{ \footnotesize{Fitted
      densities under Bayesian DDR and fitted means under
      Bayesian MLR  for edges as indicated by subtitles (a)
      through (d).
     Compare with Figure \ref{fig:prediction_simulation} for
    an explanation of the symbols and contours. 
  }}
  \label{fig:prediction_1}
\end{figure}

 Figure~\ref{fig:RPEs} shows boxplots of RPE for all ordered pairs
of cell types. 
As before,  we use the
intercept model  as the reference $f_0$ 
for computing RPEs.
Figures~\ref{fig:T_to_B_coeff}-\ref{fig:NK_to_B_coeff}
show boxplots for the linear regression coefficients of all
ordered pairs of cell types.
 Such inference allows us to  identify which ligands are
significant in predicting a receptor.  This is done by inspecting
the posterior distribution for  the corresponding coefficient
 $A_{eij}$. Values far from zero indicate significant
receptor/ligand pairs. 
Consider, for example,  the regression of B cell receptor expression
on T cell ligands. Both  ligands, CD40L and IFN-$\gamma$  are
significant in predicting receptors such as IGHM, IGHD, IGHG, and
IL-6R. Specifically, the ligand IFN-$\gamma$ appears  critical  to
predict  gene expression for the receptor  IGHA, while CD40L
 is required  to predict receptors CD40, ICOSL, IL-21R, BAFFR,
CXCR5, and CCR7. Similar 
conclusions can be drawn for other  ligand/receptor
relations. 
For a more  formal assessment one could 
use pseudo inclusion probabilities
like \eqref{eq:eIP} to quantify the evidence for 
a ligand being significant in predicting a receptor. 
\begin{figure}[!t]
\begin{center}
  \begin{tabular}{c}
\widgraph{0.9\textwidth}{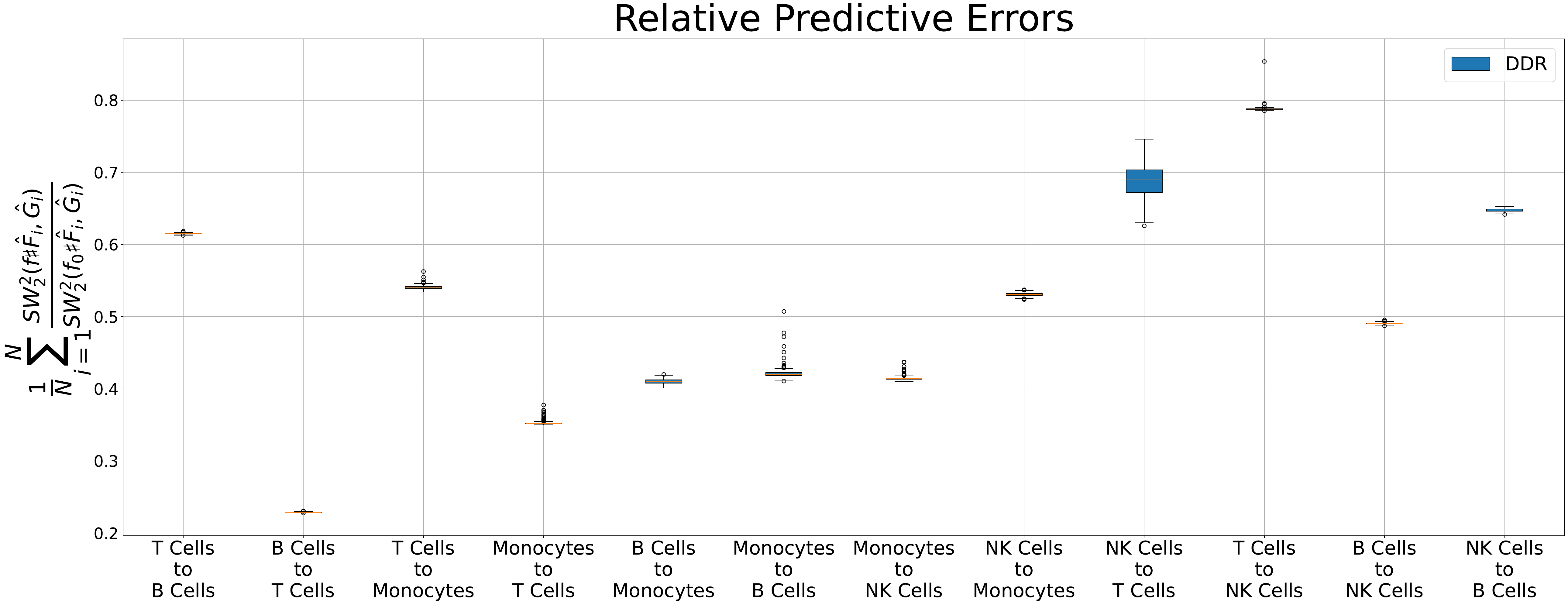} 
  \end{tabular}

  \end{center}
  \caption{
  \footnotesize{Boxplots of relative predictive errors for Bayesian
    DDR for each ordered pair of cell types on the x-axis. 
}
} 
  \label{fig:RPEs}
\end{figure}
\begin{figure}[!t]
  \begin{center}
    \begin{tabular}{cccc}
      \widgraph{1\textwidth}{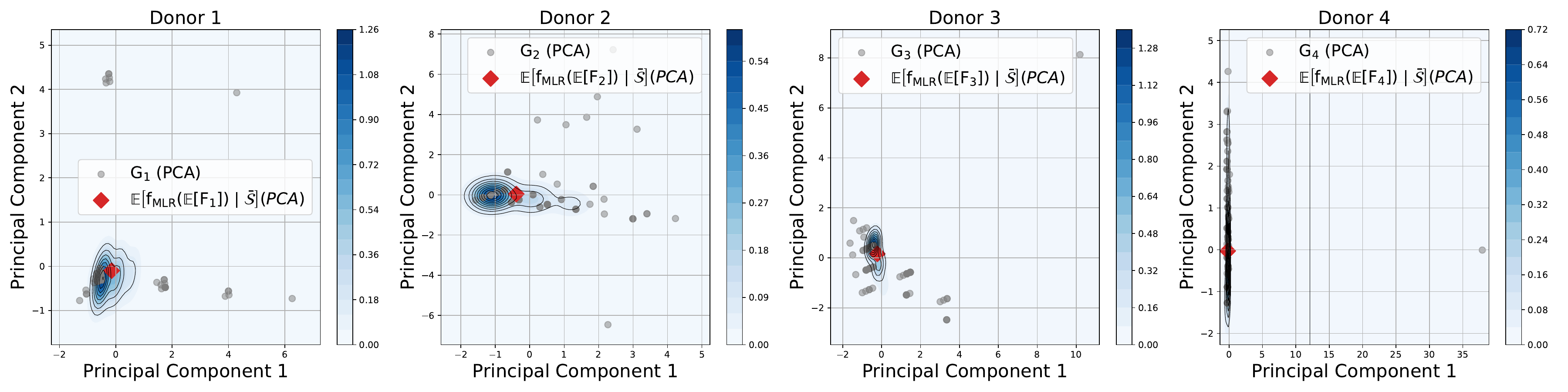}
      \\ (a) B cells to monocytes \\
      \widgraph{1\textwidth}{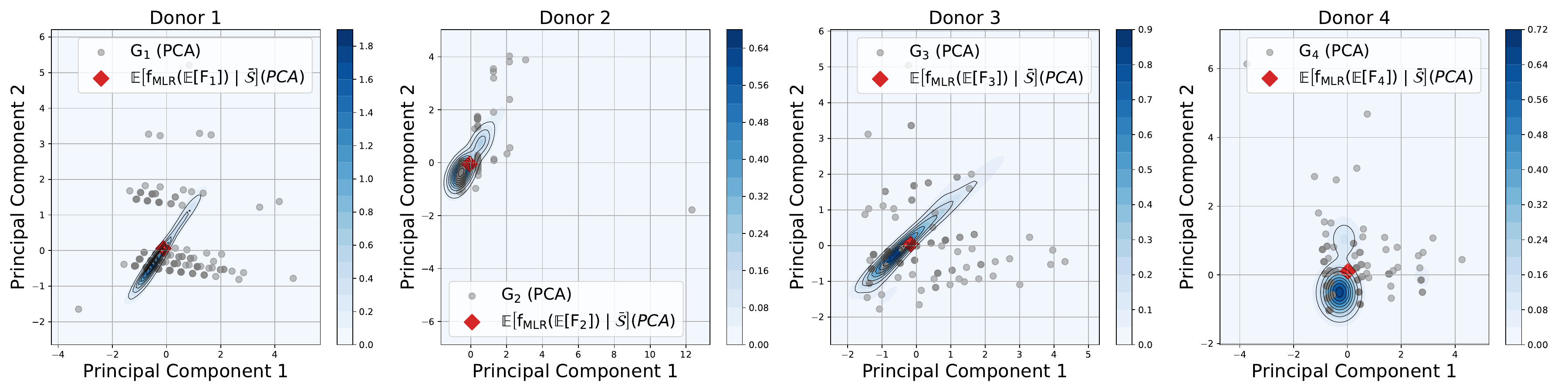}
      \\ (b) monocytes to B cells \\
      \widgraph{1\textwidth}{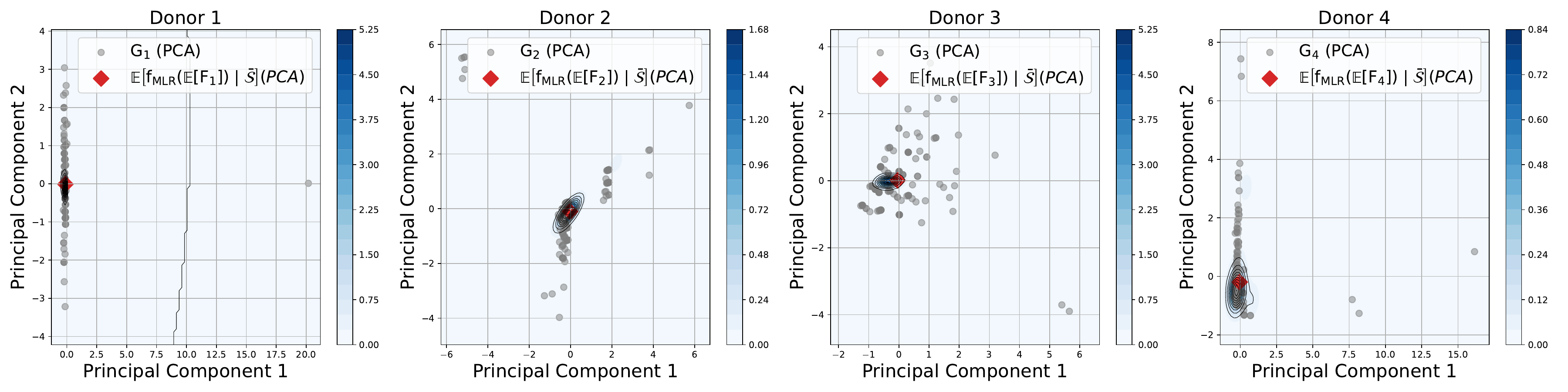}
      \\ (c) monocytes to NK cells \\
      \widgraph{1\textwidth}{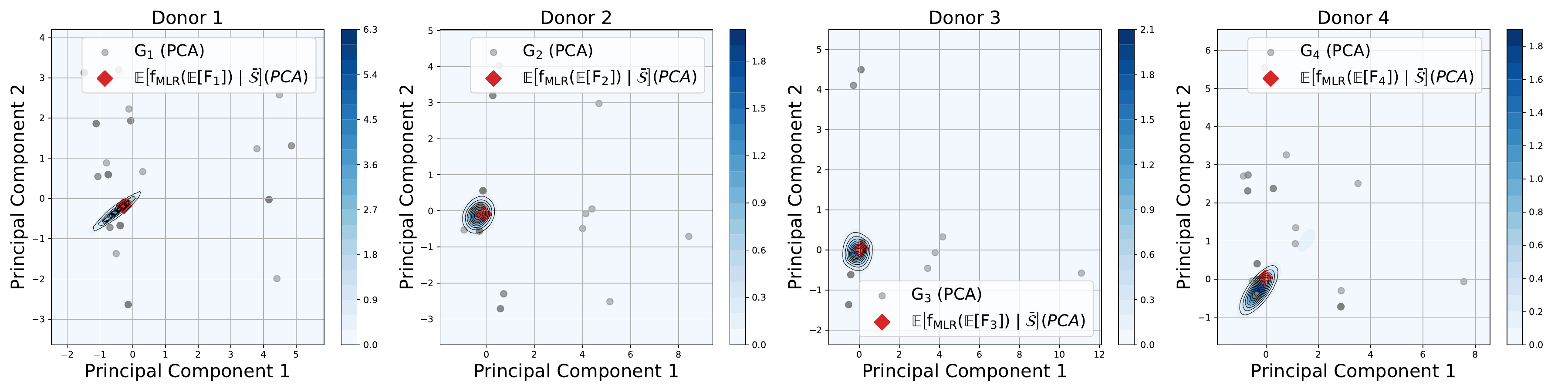}
      \\ (d) NK cells to monocytes
    \end{tabular}
  \end{center} 
  \caption{ \footnotesize{
      Same as Figure \ref{fig:prediction_1} for more pairs. }}
  \label{fig:prediction_2}
\end{figure}

\begin{figure}[!t]
\begin{center}
    \begin{tabular}{cccc}
  \widgraph{1\textwidth}{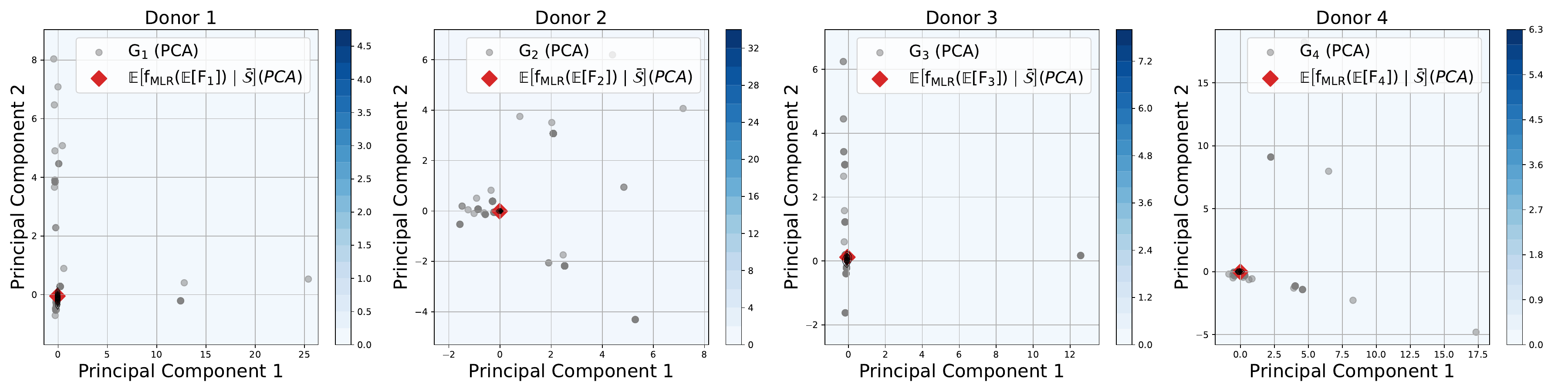} \\
  (a) NK to T cells \\
  \widgraph{1\textwidth}{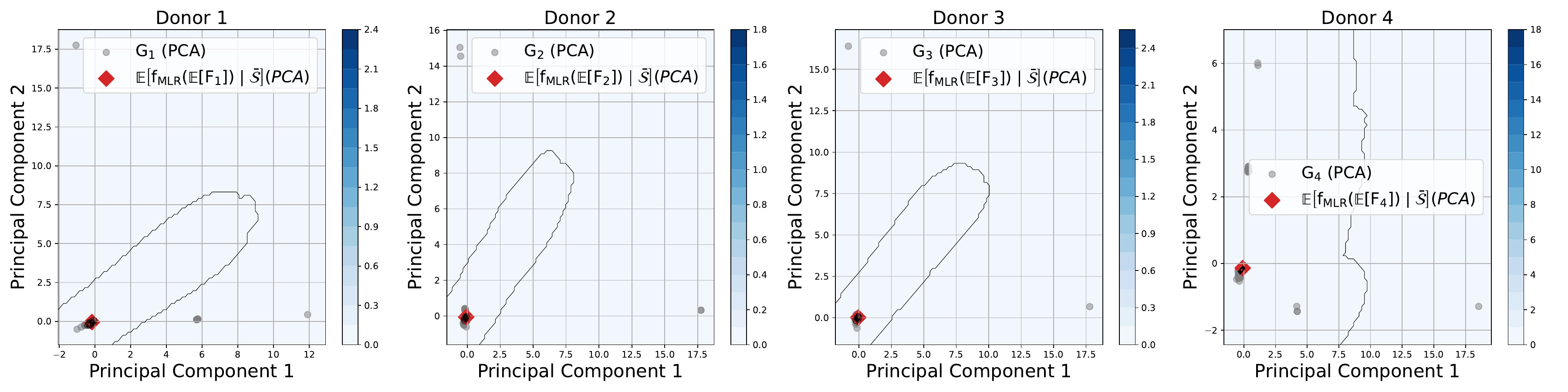} \\
  (b) T cells to NK cells \\
  \widgraph{1\textwidth}{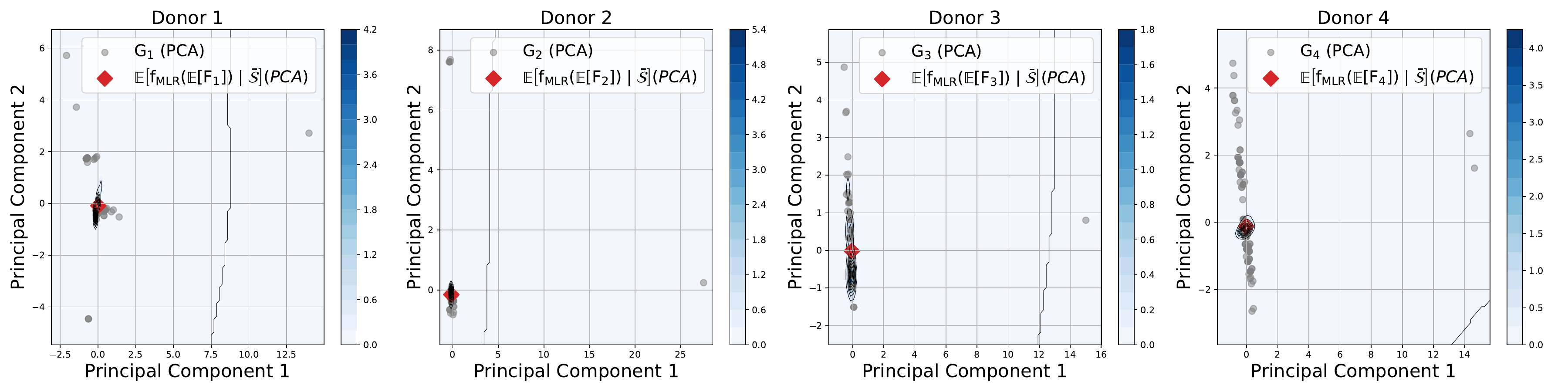} \\
  (c) B to NK cells \\
  \widgraph{1\textwidth}{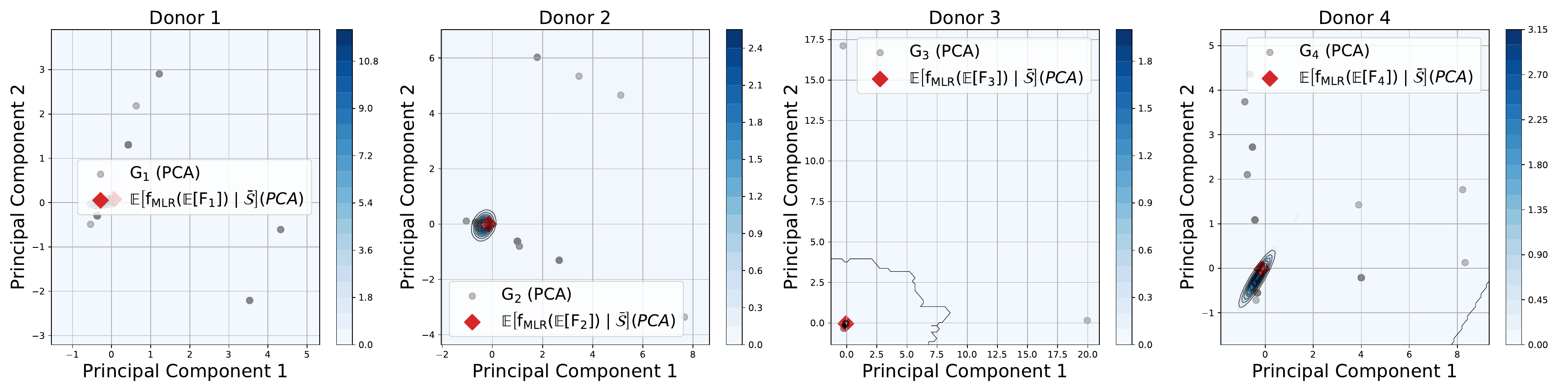} \\
  (d) NK to B cells
  \end{tabular}
  \end{center}
  \caption{
    \footnotesize{
      Same as Figure \ref{fig:prediction_1} for more pairs. }}
  \label{fig:prediction_3}
\end{figure}

 \begin{figure}[!t]
\begin{center}
    \begin{tabular}{c}
\widgraph{1\textwidth}{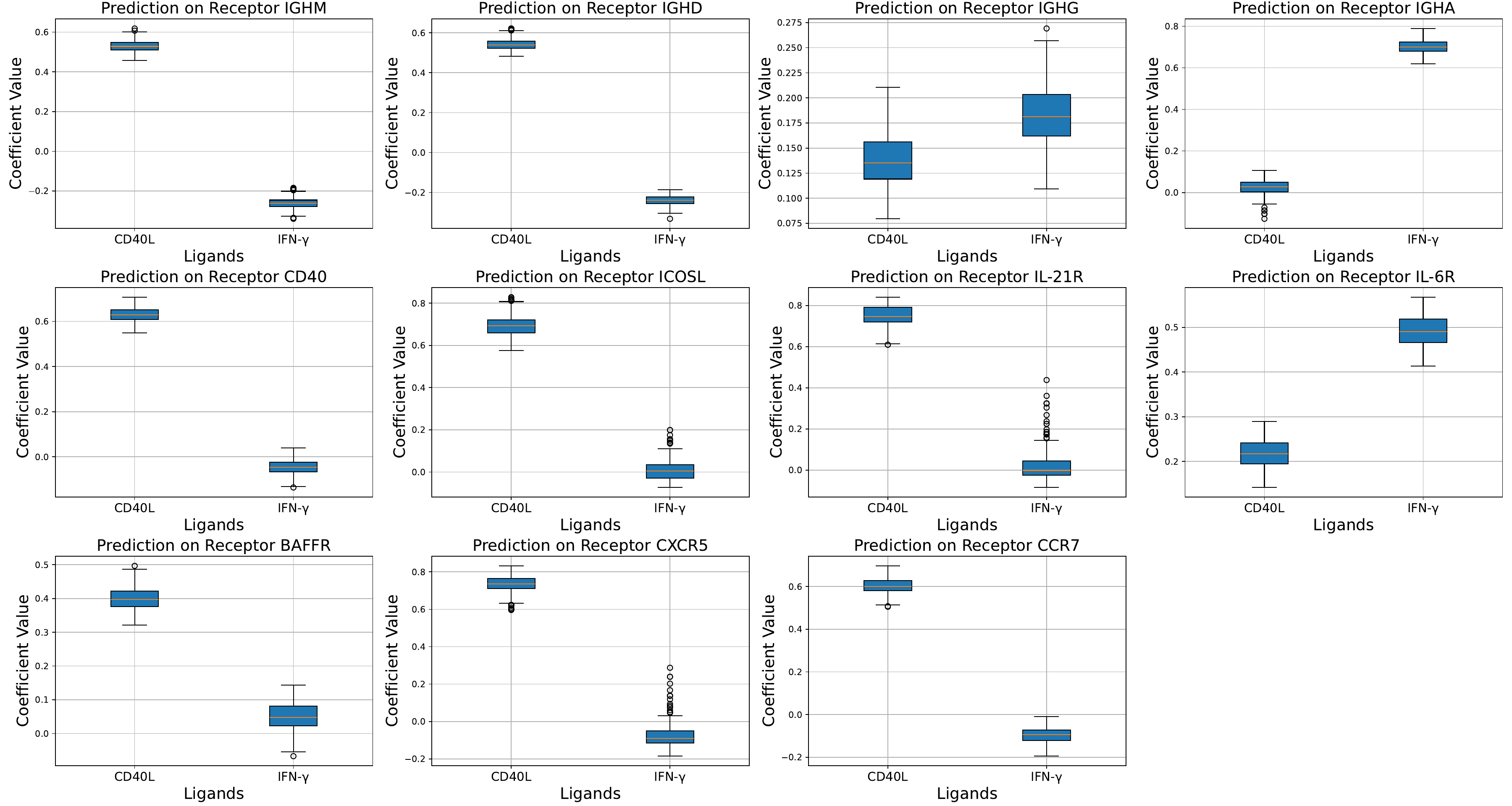}
  \end{tabular}
  \end{center}
  \caption{
  \footnotesize{Boxplots of  posterior samples for  the
    coefficients of the linear regression matrix $A_e$ for the ordered pair
    of cell types: T to B cells.  
}
} 
  \label{fig:T_to_B_coeff}
\end{figure}

 \begin{figure}[!t]
\begin{center}
    \begin{tabular}{c}
\widgraph{1\textwidth}{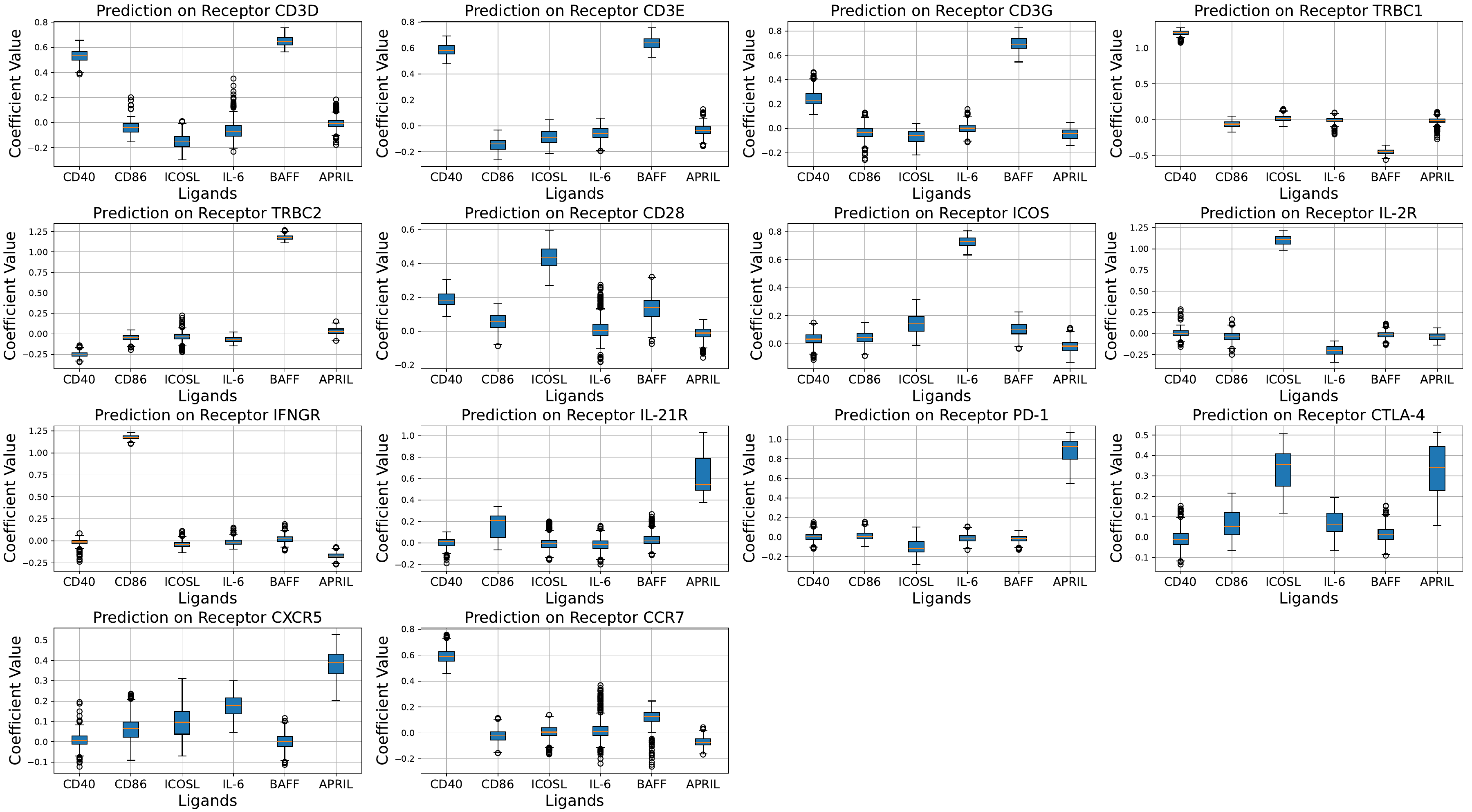}
  \end{tabular}
  \end{center}
  \caption{
    \footnotesize{
      Same as Figure \ref{fig:T_to_B_coeff}
      for B to T cells.
}
} 
  \label{fig:B_to_T_coeff}
\end{figure}

 \begin{figure}[!t]
\begin{center}
    \begin{tabular}{c}
\widgraph{1\textwidth}{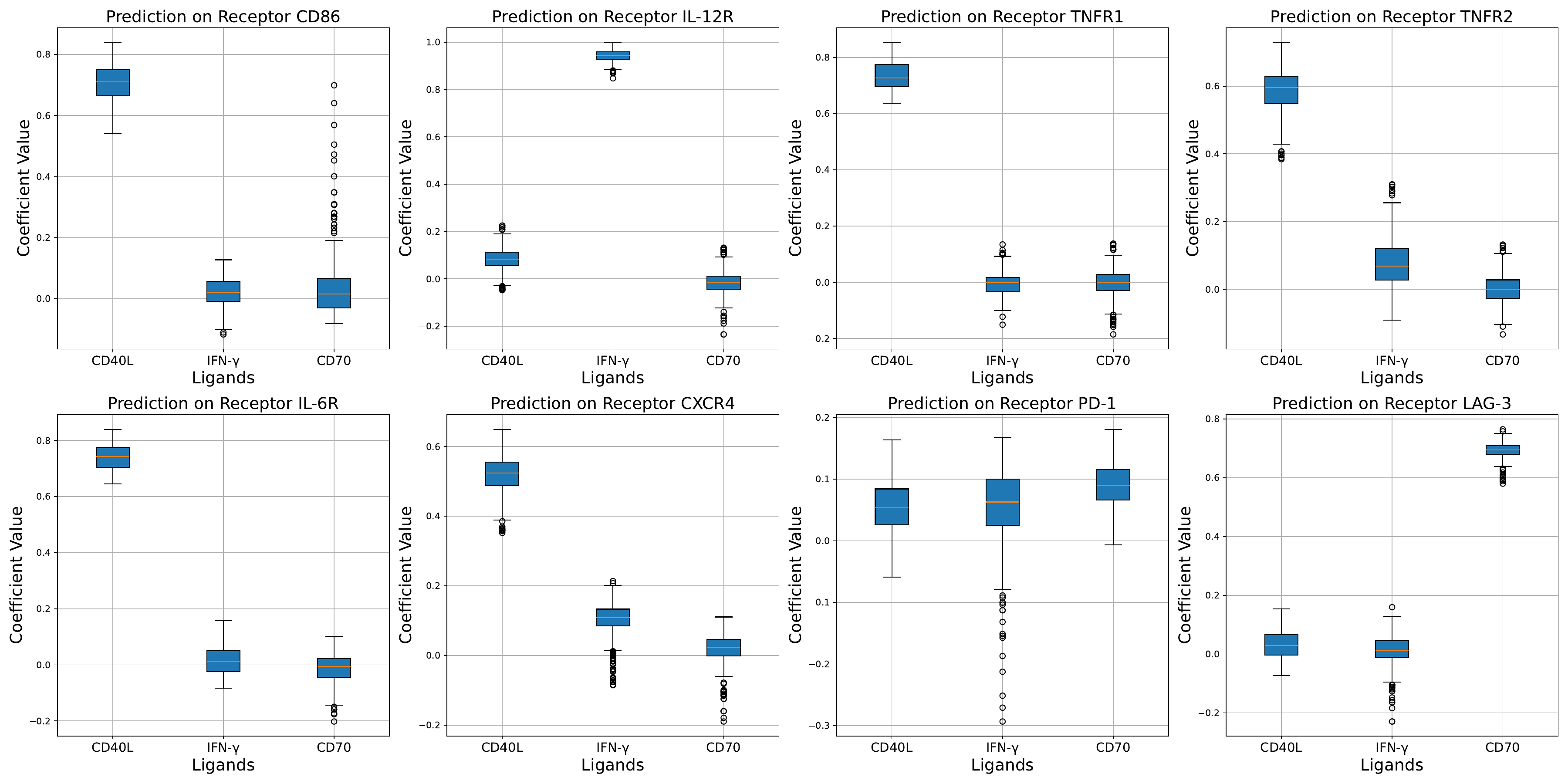}
  \end{tabular}
  \end{center}
  \caption{
    \footnotesize{
      Same as Figure \ref{fig:T_to_B_coeff}
      for T cells to monocytes
}
} 
  \label{fig:T_to_M_coeff}
\end{figure}

 \begin{figure}[!t]
\begin{center}
    \begin{tabular}{c}
\widgraph{1\textwidth}{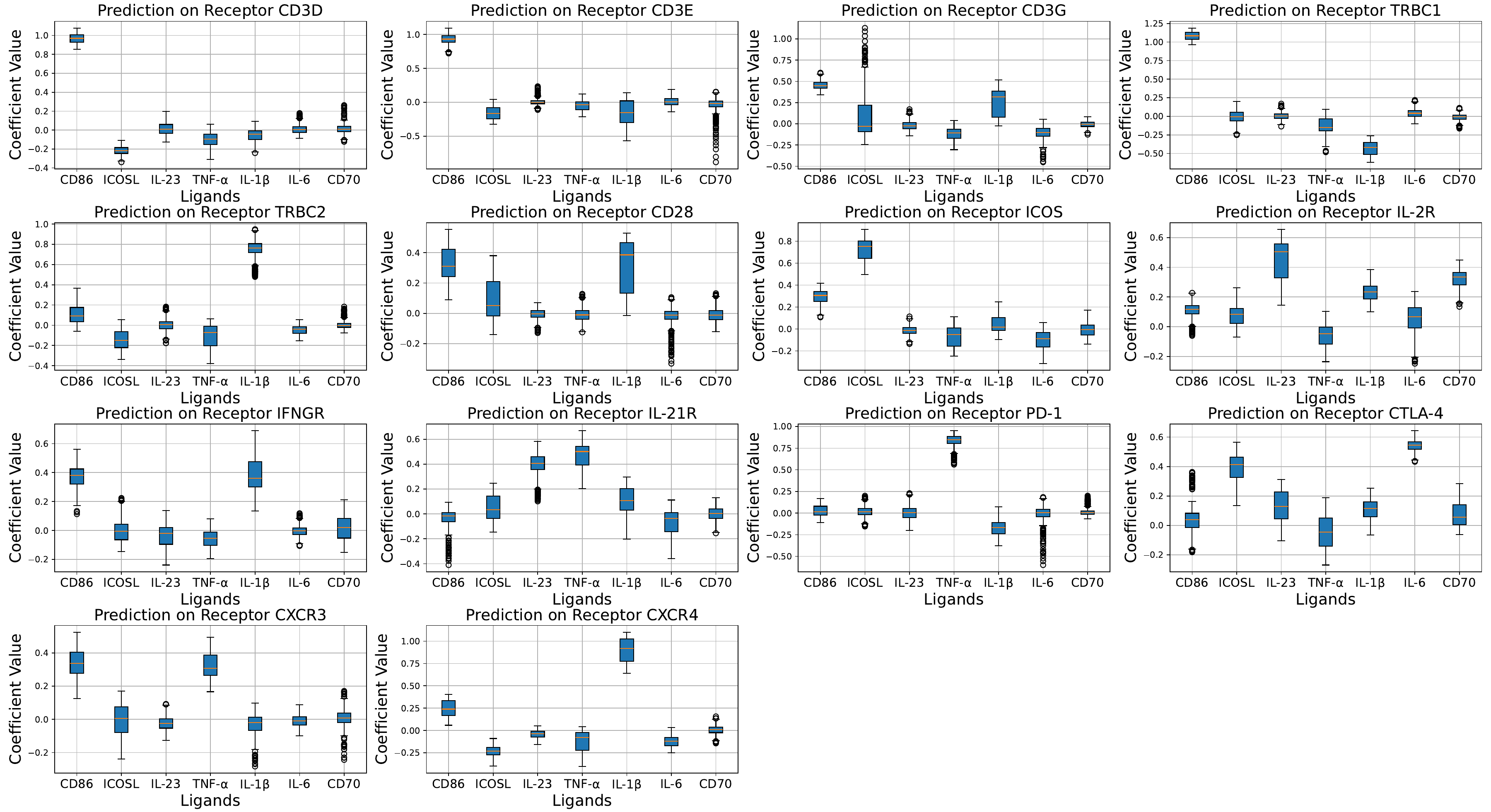}
  \end{tabular}
  \end{center}
  \caption{
    \footnotesize{
      Same as Figure \ref{fig:T_to_B_coeff} for
      monocytes to T cells
}
} 
  \label{fig:M_to_T_coeff}
\end{figure}

 \begin{figure}[!t]
\begin{center}
    \begin{tabular}{c}
\widgraph{1\textwidth}{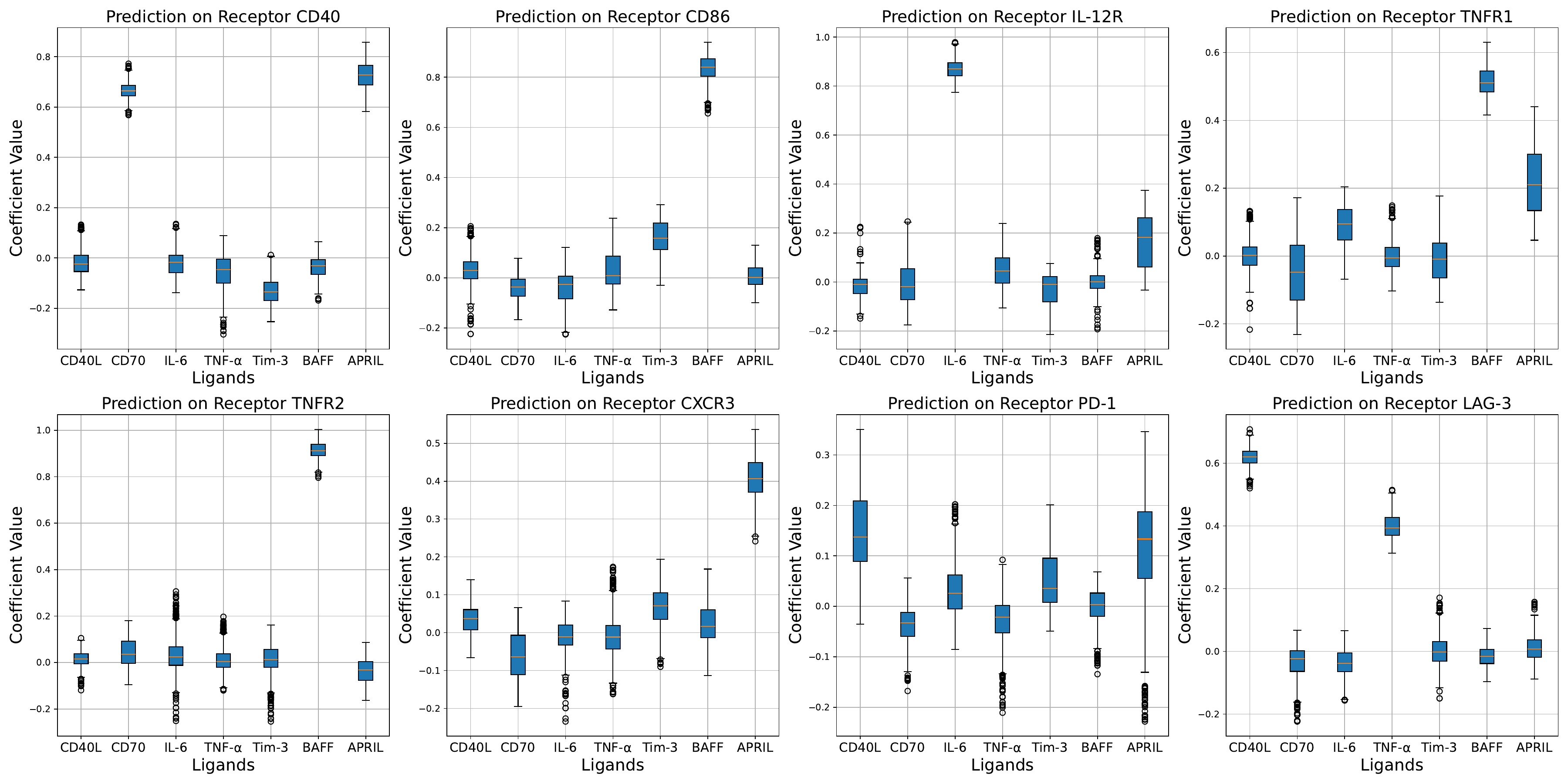}
  \end{tabular}
  \end{center}
  \caption{
    \footnotesize{
      Same as Figure \ref{fig:T_to_B_coeff} for
      B cells to monocytes
}
} 
  \label{fig:B_to_M_coeff}
\end{figure}

 \begin{figure}[!t]
\begin{center}
    \begin{tabular}{c}
\widgraph{1\textwidth}{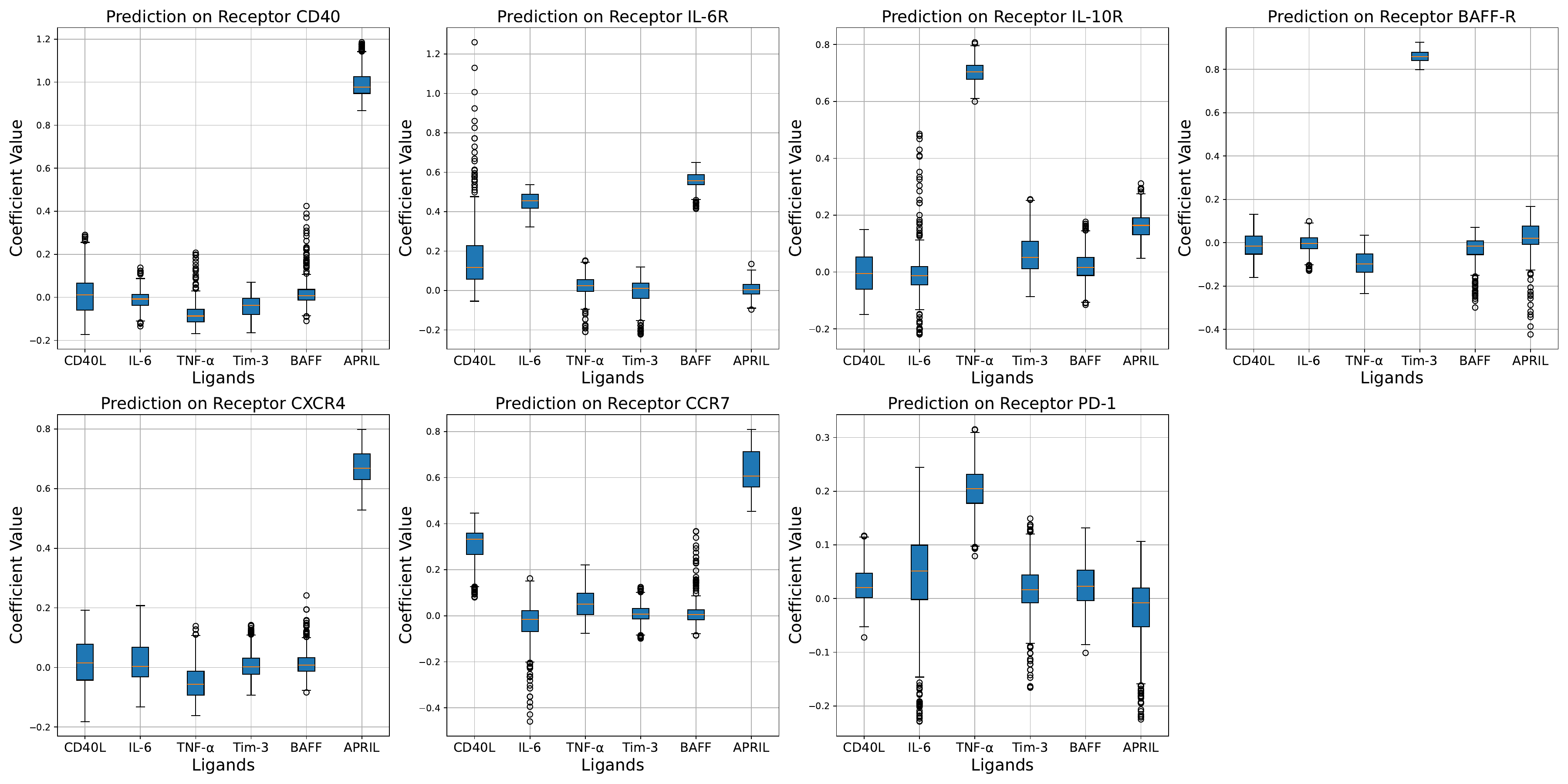}
  \end{tabular}
  \end{center}
  \caption{
    \footnotesize{
      Same as Figure \ref{fig:T_to_B_coeff} for
      monocytes to B cells 
}
} 
  \label{fig:M_to_B_coeff}
\end{figure}

 \begin{figure}[!t]
\begin{center}
    \begin{tabular}{c}
\widgraph{1\textwidth}{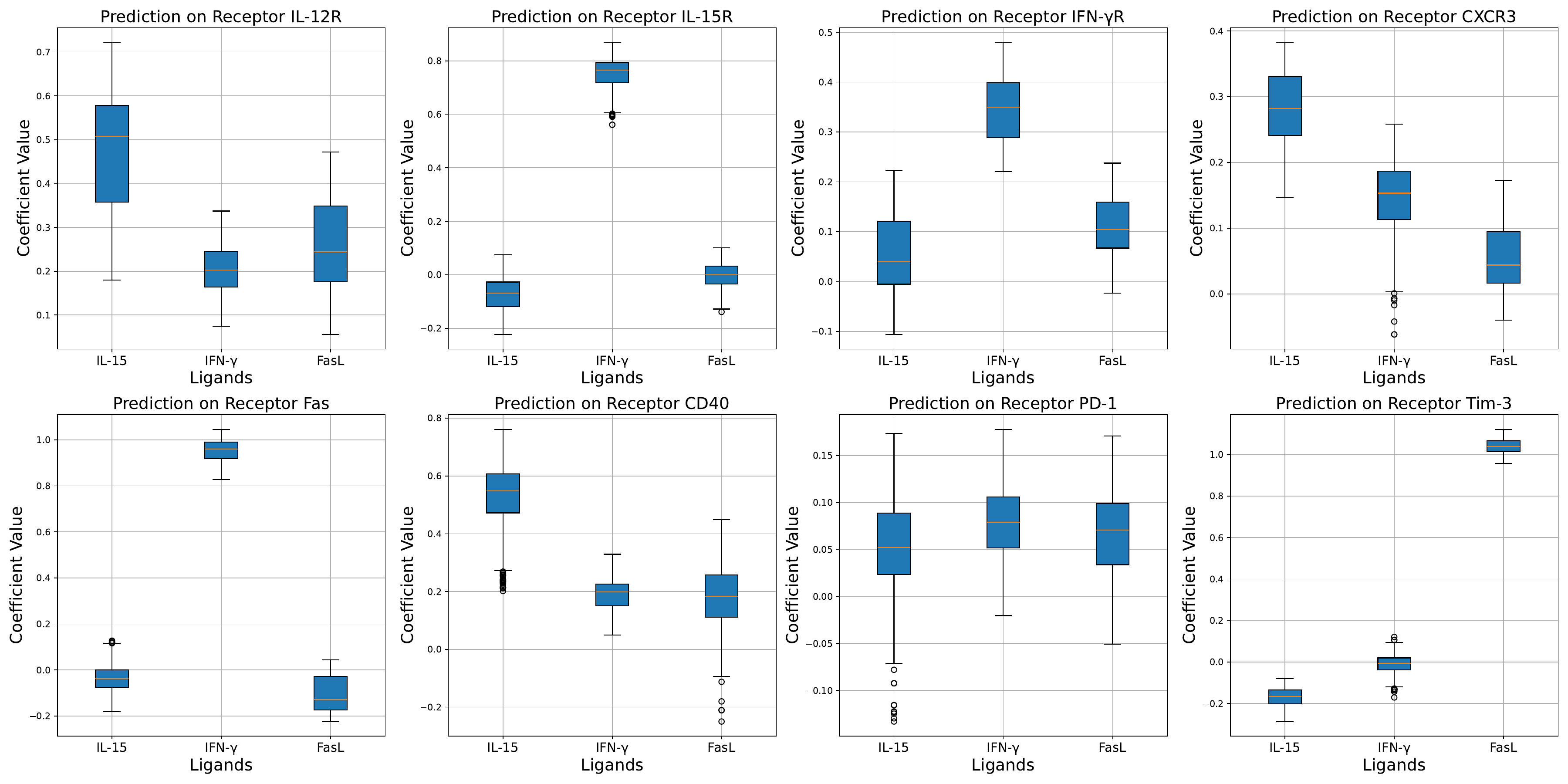}
  \end{tabular}
  \end{center}
  \caption{
    \footnotesize{
          Same as Figure \ref{fig:T_to_B_coeff} for
          NK cells to monocytes
}
} 
  \label{fig:NK_to_M_coeff}
\end{figure}

 \begin{figure}[!t]
\begin{center}
    \begin{tabular}{c}
\widgraph{1\textwidth}{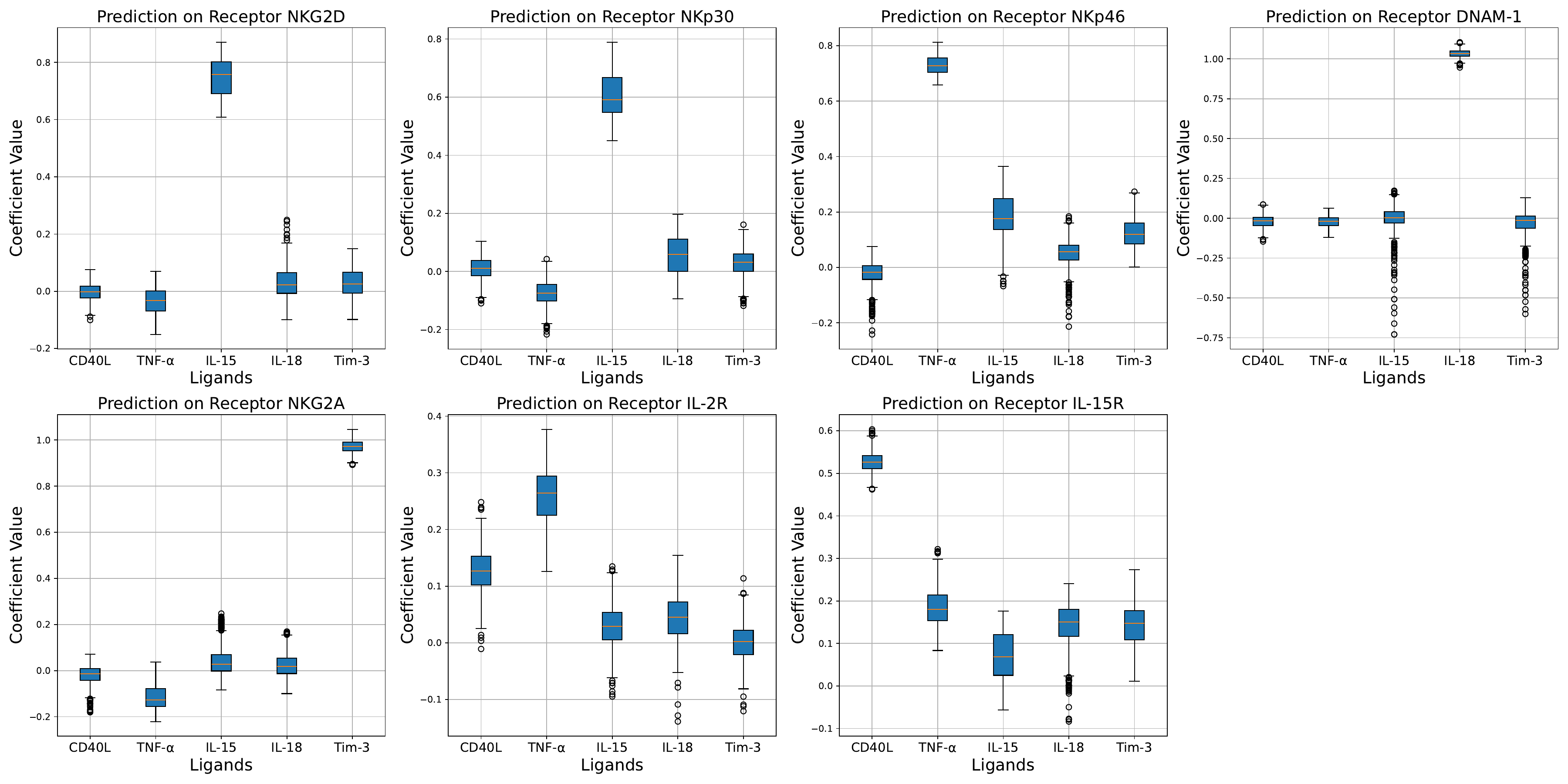}
  \end{tabular}
  \end{center}
  \caption{
    \footnotesize{
      Same as Figure \ref{fig:T_to_B_coeff} for
      monocytes to NK cells 
}
} 
  \label{fig:M_to_NK_coeff}
\end{figure}

 \begin{figure}[!t]
\begin{center}
    \begin{tabular}{c}
\widgraph{1\textwidth}{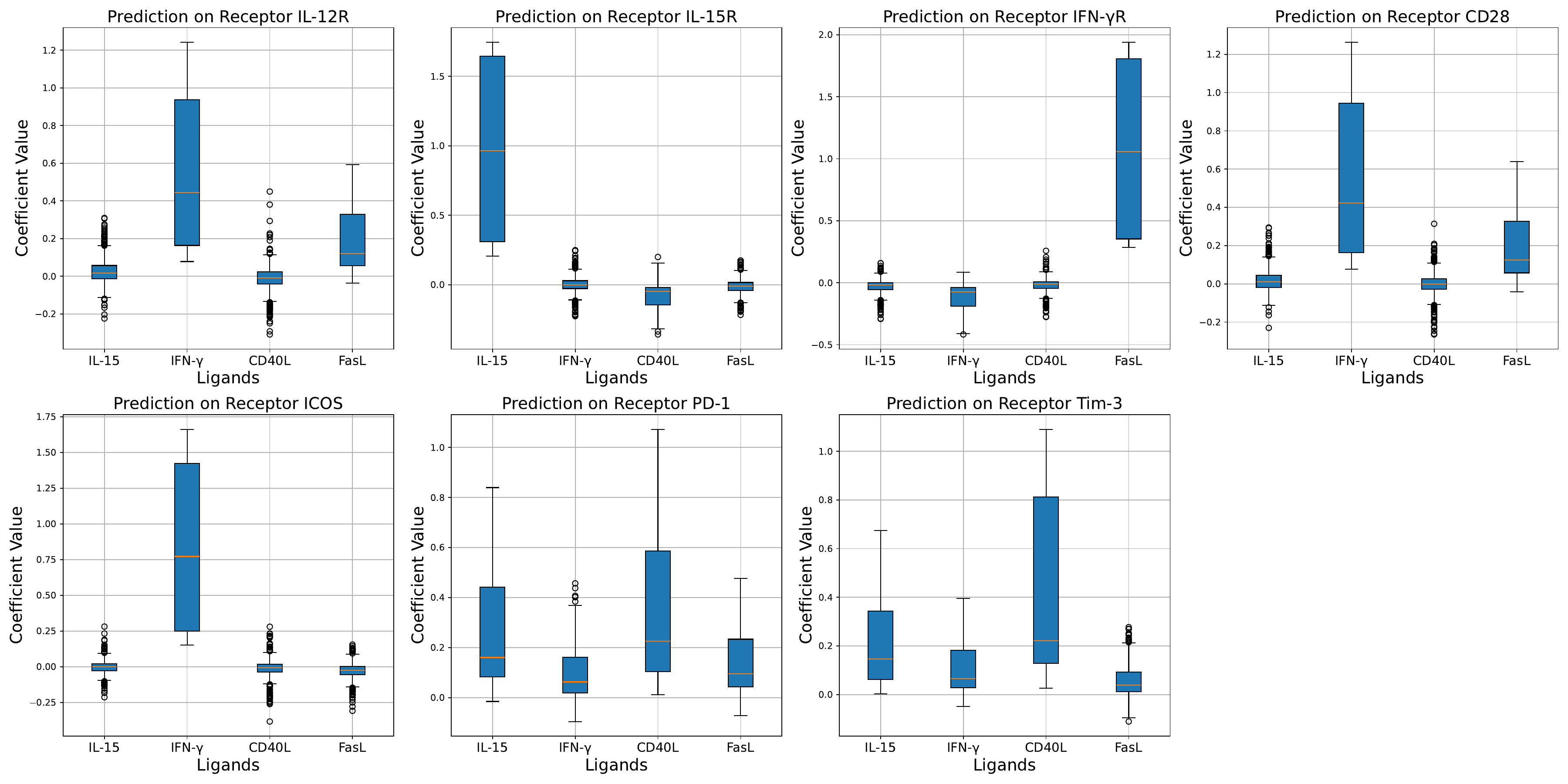}
  \end{tabular}
  \end{center}
  \caption{
    \footnotesize{
      Same as Figure \ref{fig:T_to_B_coeff} for
      NK to T cells
}
} 
  \label{fig:NK_to_T_coeff}
\end{figure}

 \begin{figure}[!t]
\begin{center}
    \begin{tabular}{c}
\widgraph{1\textwidth}{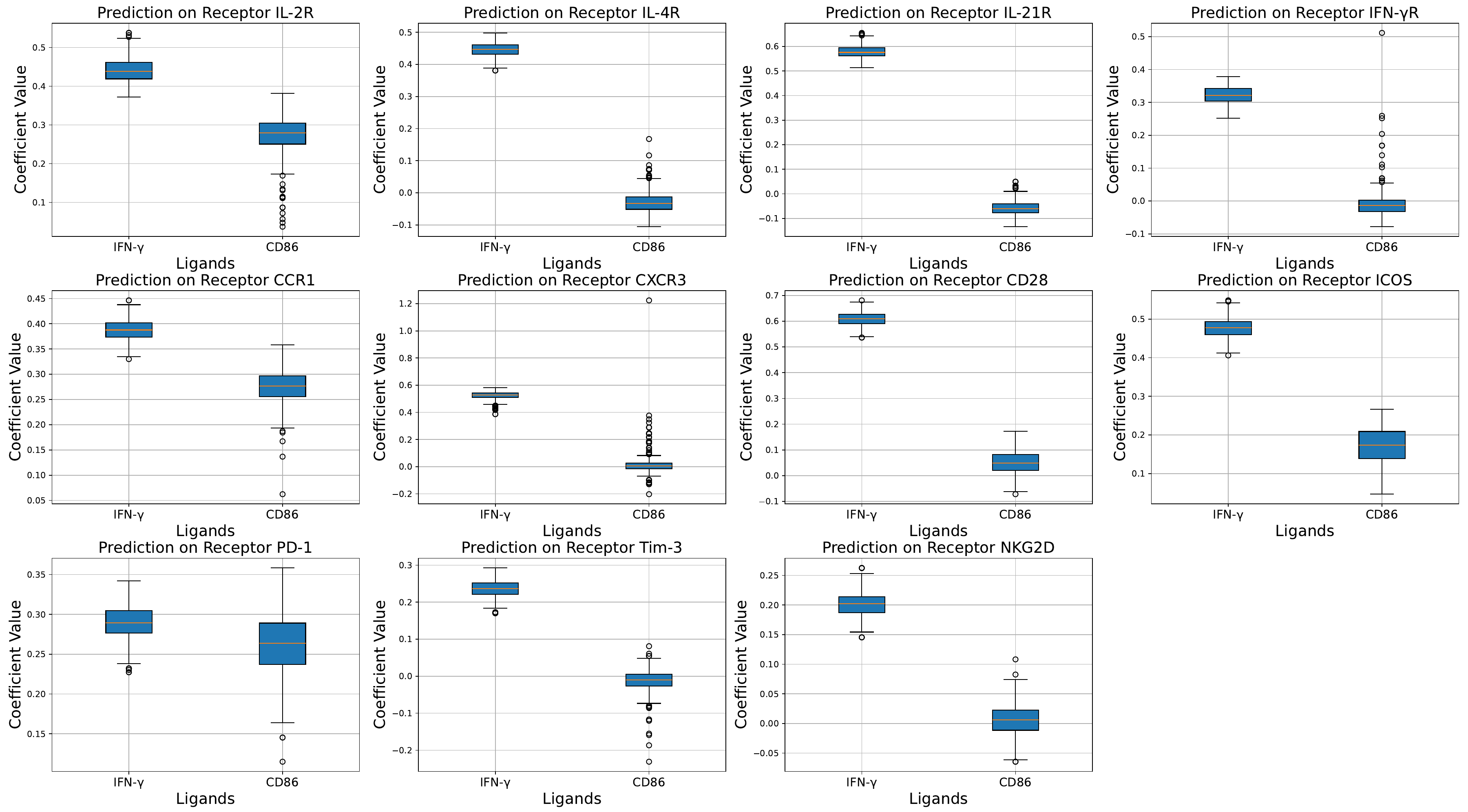}
  \end{tabular}
  \end{center}
  \caption{
    \footnotesize{
      Same as Figure \ref{fig:T_to_B_coeff} for
      T to NK cells
}
} 
  \label{fig:T_to_NK_coeff}
\end{figure}

 \begin{figure}[!t]
\begin{center}
    \begin{tabular}{c}
\widgraph{1\textwidth}{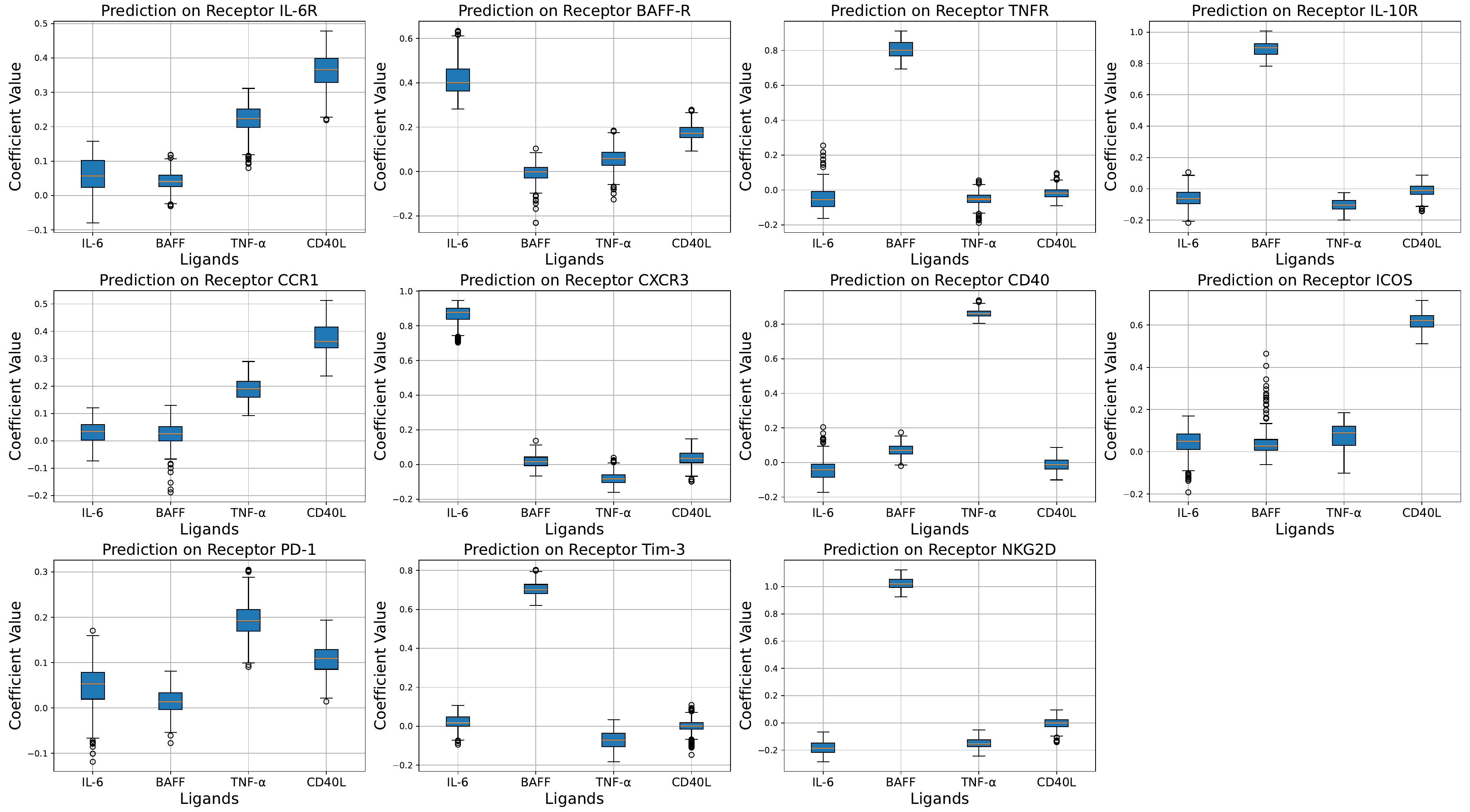}
  \end{tabular}
  \end{center}
  \caption{
    \footnotesize{
      Same as Figure \ref{fig:T_to_B_coeff} for
    B to  NK cells 
}
} 
  \label{fig:B_to_NK_coeff}
\end{figure}

 \begin{figure}[!t]
\begin{center}
    \begin{tabular}{c}
\widgraph{1\textwidth}{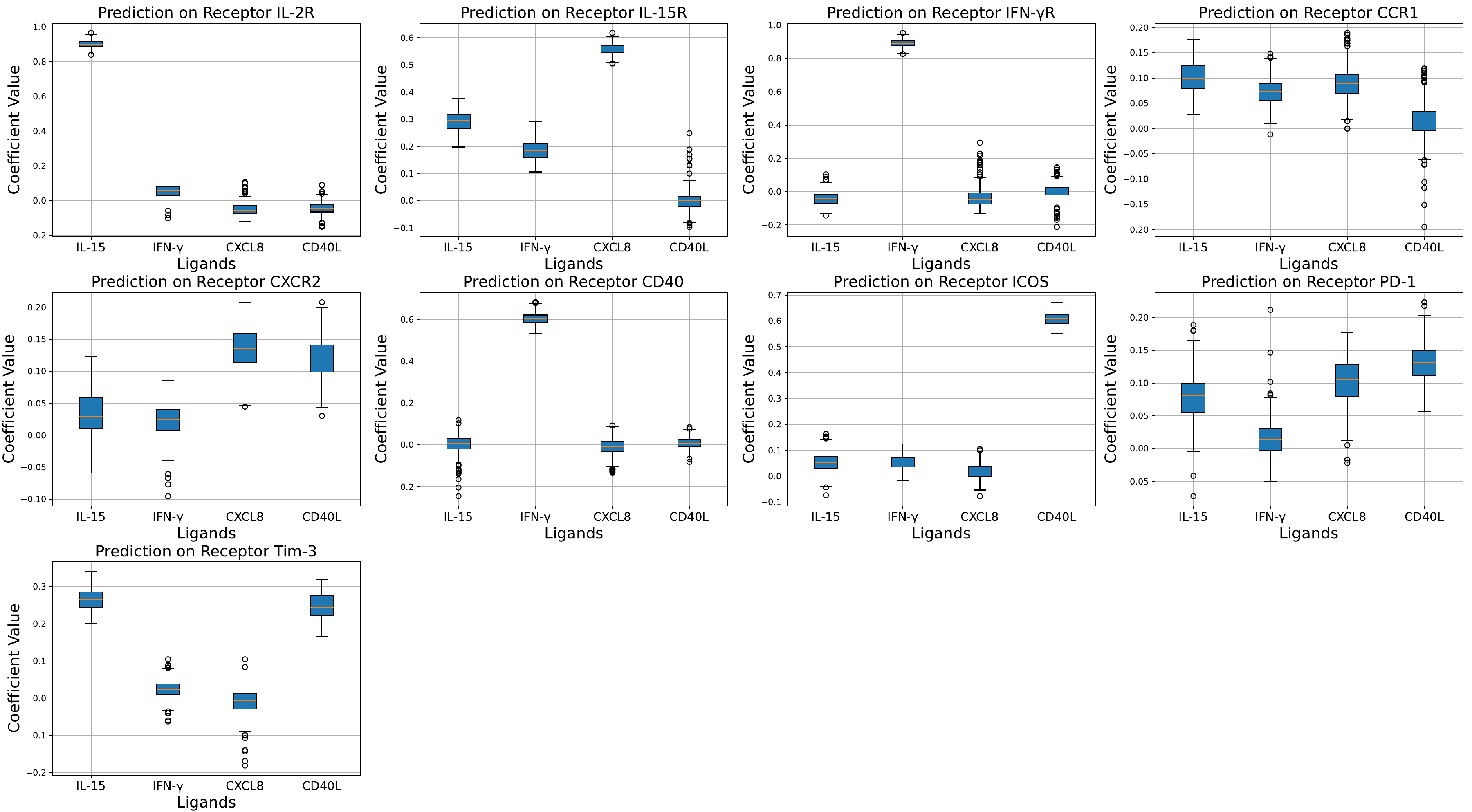}
  \end{tabular}
  \end{center}
  \caption{
    \footnotesize{
      Same as Figure \ref{fig:T_to_B_coeff} for
    NK to  B cells 
}
} 
  \label{fig:NK_to_B_coeff}
\end{figure}

\section{Semi-Simulation}
\label{subsec:semi_simulation_appendix}
Figure~\ref{fig:Semi_simulated_graph_null_RPEs} shows
boxplots of RPEs for all pairs of cell types under the
``no edge'', the ``full graph'', and the ``sparse graph'' setting.
From the low RPE values we conclude that the proposed Bayesian DDR
provides acceptable fits for nearly all pairs of cell types across all
three settings.
 Finally, Figure \ref{fig:coefficient_semi_simulation}
summarizes  posteriors of square errors over all coefficients
of $A_{eij}$, $b_e$. 
Consistently small values across all three simulation scenarios 
indicate that Bayesian DDR provides a good fit to the semi-simulated
data across all settings. 
\begin{figure}[!t]
\begin{center}
\begin{tabular}{c}
       \widgraph{1\textwidth}{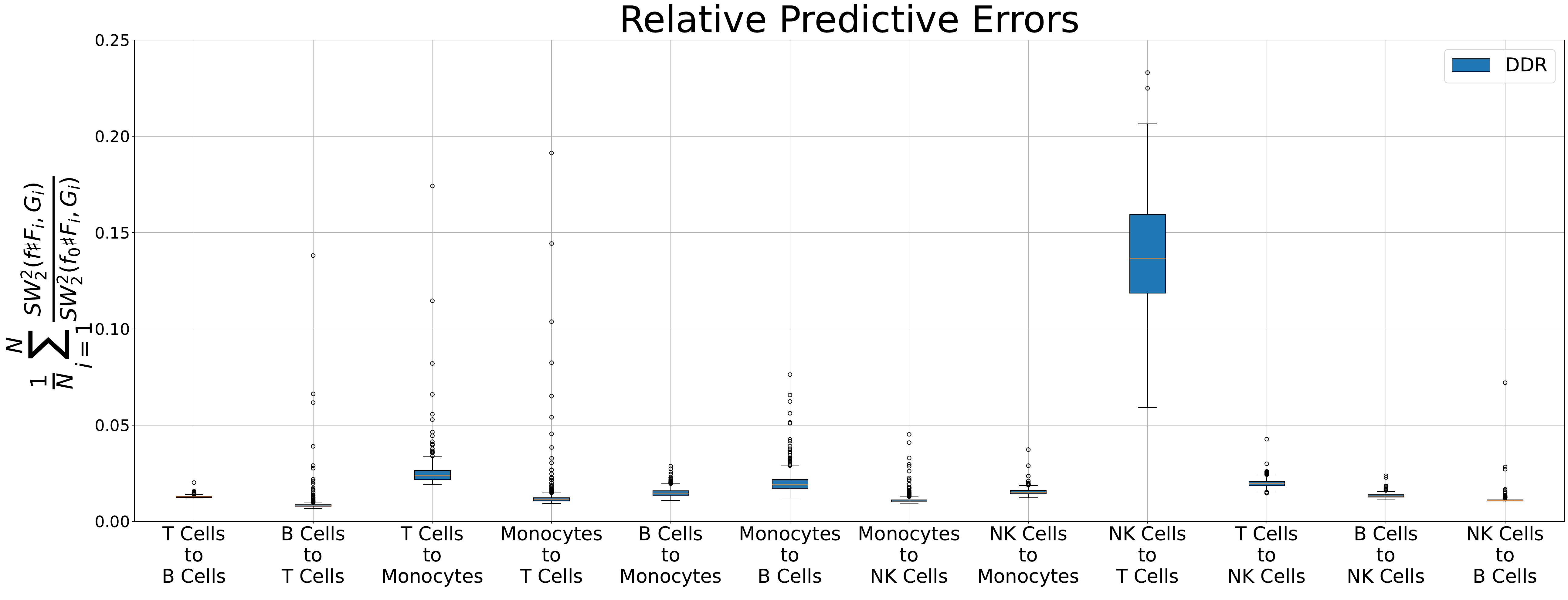} \\
       \widgraph{1\textwidth}{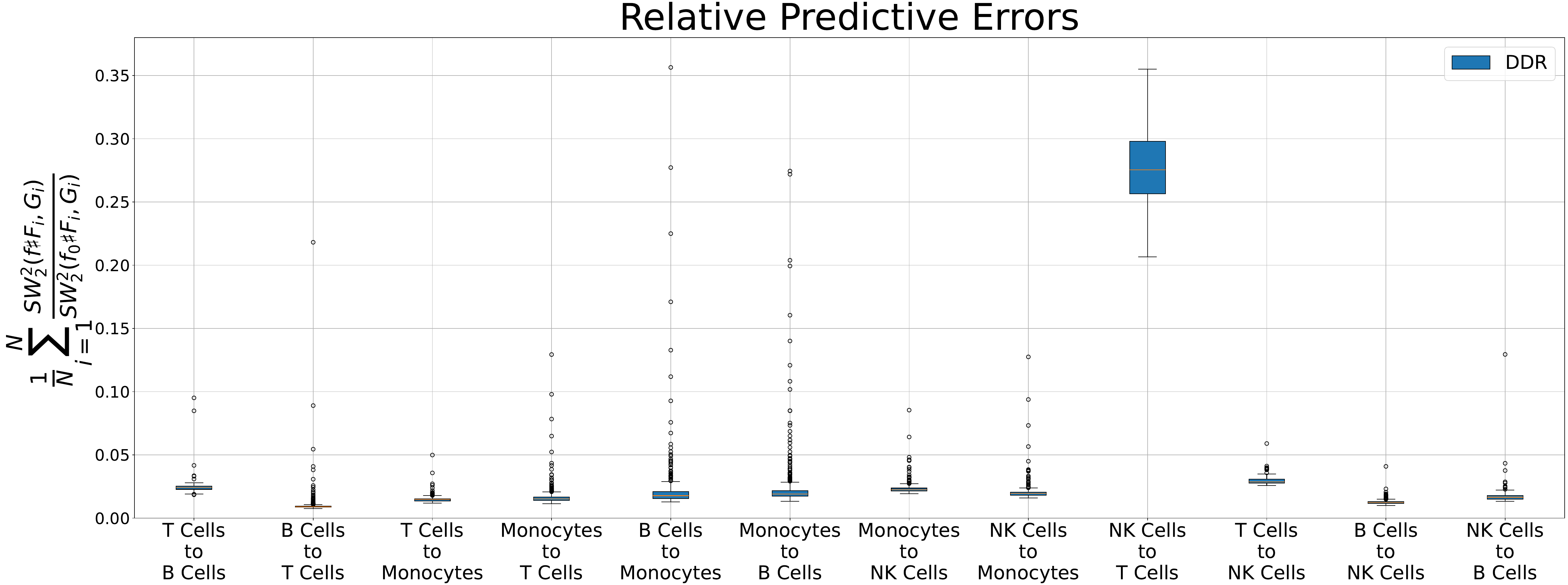}\\
       \widgraph{1\textwidth}{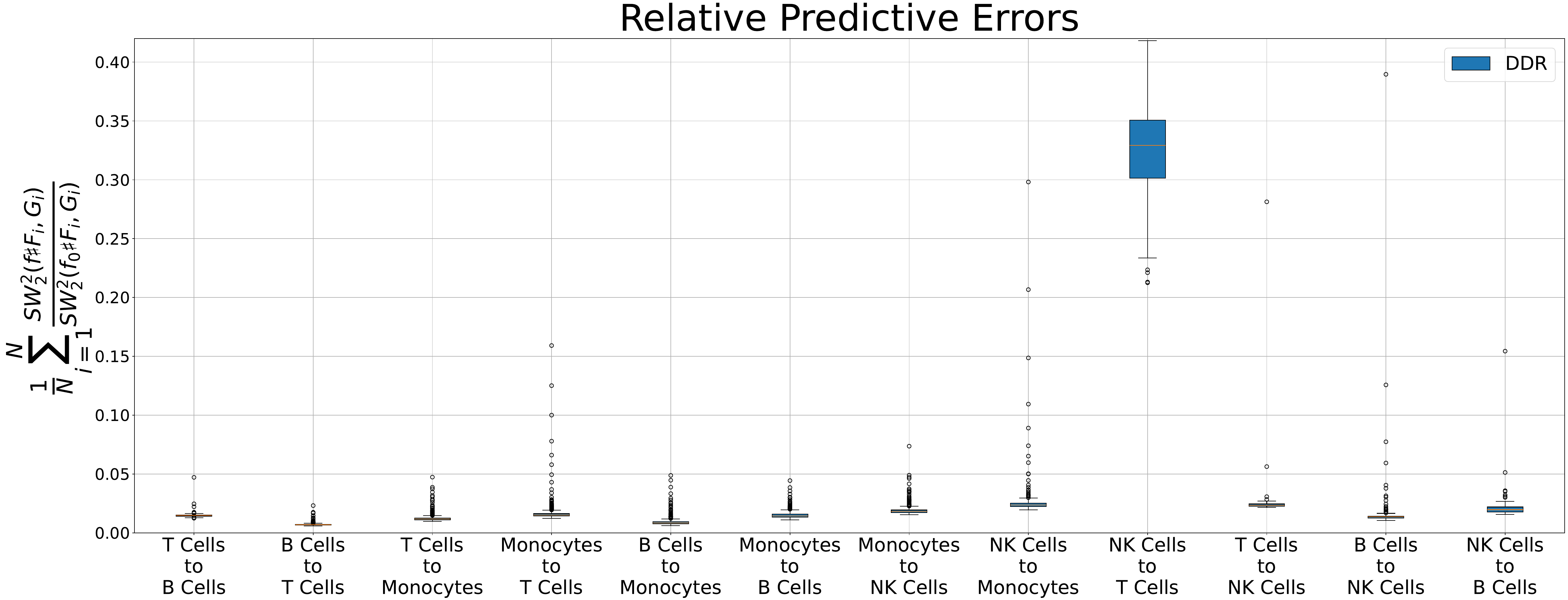}
  \end{tabular}
  
  \end{center}
  \caption{
    \footnotesize{Boxplots of relative predictive
      errors~\eqref{eq:RPE} under the Bayesian DDR for each pair of cell types (on the
      horizontal axis) in the ``no edge'' (top), the ``full graph''
      (middle), and the ``sparse edge'' simulation scenarios.
}
} 
  \label{fig:Semi_simulated_graph_null_RPEs}
\end{figure}

\begin{figure}[!t]
\begin{center}
    \begin{tabular}{c}
\widgraph{1\textwidth}{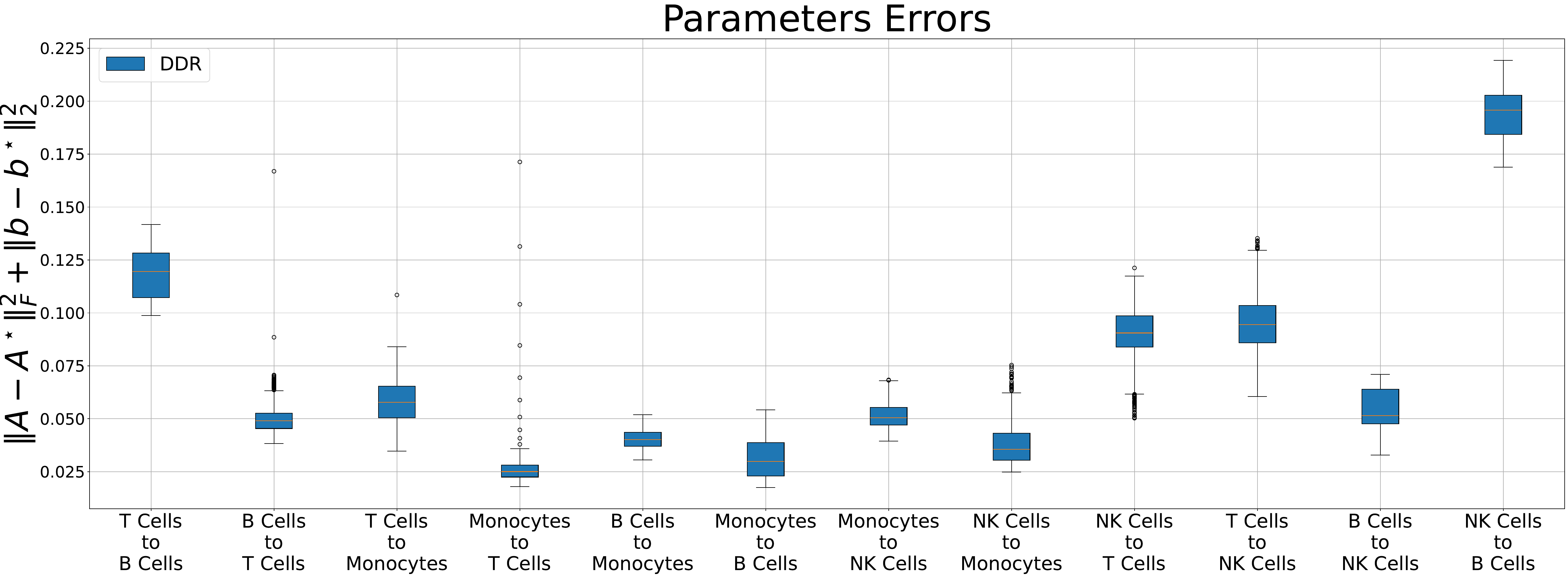} \\
\widgraph{1\textwidth}{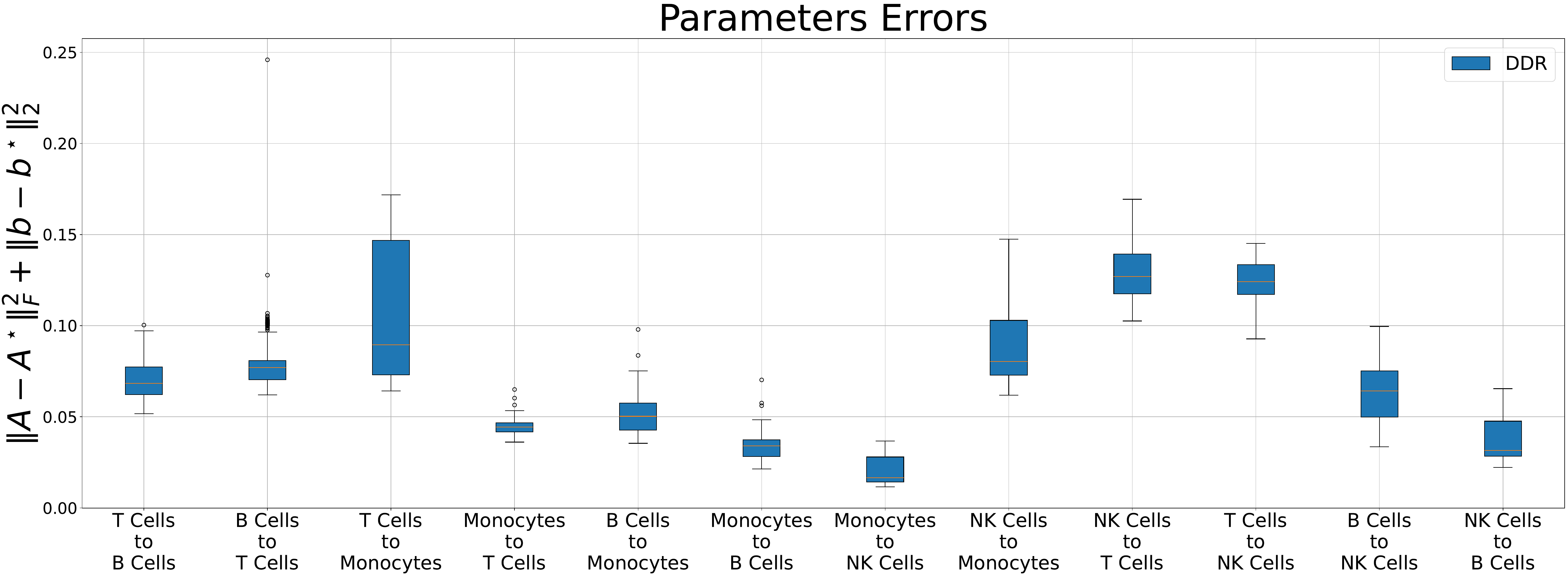}  \\
\widgraph{1\textwidth}{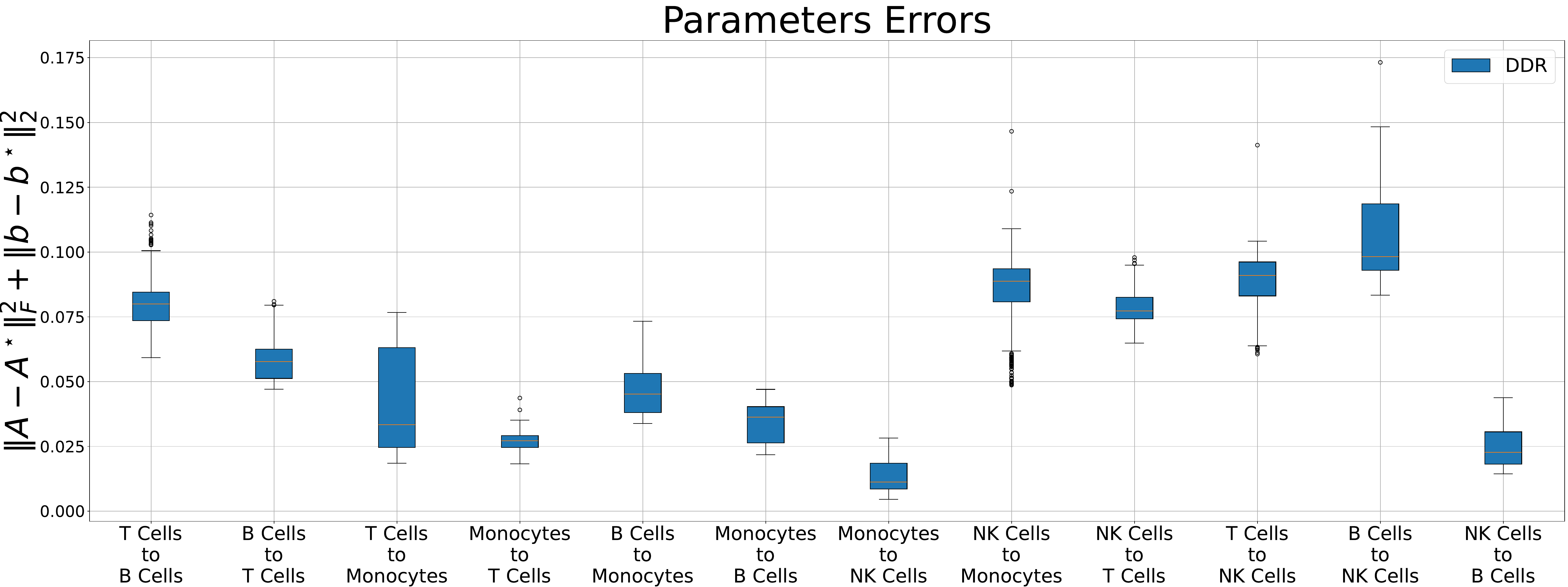}  
  \end{tabular}
  \end{center}
  \caption{
    \footnotesize{Boxplots of  the posterior of squared errors $\|A-A^\star\|_F^2 +\|b-b^\star \|_2^2$,
 over all coefficients of $A_{e}$ and $b_e$, arranged as in Figure
        \ref{fig:Semi_simulated_graph_null_RPEs}. 
}
} 
  \label{fig:coefficient_semi_simulation}
\end{figure}

\end{document}